\newcommand{\myMatrix}[1]{\bm{\mathit{#1}}}
\newcommand\F[1][]{%
	\pgfkeys{/cnfxor, default, #1}%
	F_{\clauseK}(\clauseN, 
	\ifx\clauseCNFOverride\empty\relax
	\clauseR \clauseN
	\else
	\clauseCNFOverride
	\fi
	)
}
\newcommand\Q[1][]{%
	\pgfkeys{/cnfxor, default, #1}%
	Q(\clauseN, 
	\ifx\clauseXOROverride\empty\relax
	\clauseS \clauseN
	\else
	\clauseXOROverride
	\fi
	)
}
\newcommand\FQ[1][]{%
	\pgfkeys{/cnfxor, default, #1}%
	\phi_{\clauseK}(\clauseN, 
	\ifx\clauseCNFOverride\empty\relax
	\clauseR \clauseN
	\else
	\clauseCNFOverride
	\fi,
	\ifx\clauseXOROverride\empty\relax
	\clauseS \clauseN
	\else
	\clauseXOROverride
	\fi
	)
}
\newcommand{\SAT}{\ensuremath{\mathsf{SAT}}}
\newcommand{\DSharp}{\ensuremath{\mathsf{DSharp}}}
\newcommand{\prob}{\ensuremath{\mathsf{Pr}}}
\newcommand{\expect}{\ensuremath{\mathsf{E}}}
\newcommand{\Cell}[2]{\ensuremath{\mathsf{Cell}_{\langle #1, #2 \rangle}}}
\newcommand{\FullCell}[1]{\ensuremath{\mathsf{Cell}_{\langle #1 \rangle}}}
\newcommand{\Cnt}[2]{\ensuremath{\mathsf{Cnt}_{\langle #1, #2 \rangle}}}
\newcommand{\satisfying}[1]{\ensuremath{sol({#1})}} %
\newcommand{\SatisfyingHashSet}[3]{\ensuremath{\mathsf{Cell}_{\langle #1, #2, #3 \rangle}}}
\newcommand{\SatisfyingHashed}[3]{\ensuremath{|\mathsf{Cell}_{\langle #1, #2, #3 \rangle}|}}
\newcommand{\Hrennes}[1]{\ensuremath{\mathcal{H}^{#1}_{\mathit{Rennes}}}}
\newcommand{\BoundedSAT}{\ensuremath{\mathsf{BoundedCount}}}
\newcommand{\PAC}{\ensuremath{\mathsf{PAC}}}
\newcommand{\NP}{\ensuremath{\mathsf{NP}}}
\newcommand{\ApproxCount}{\ensuremath{\mathsf{ApproxCount}}}
\newcommand{\sharpSAT}{\ensuremath{\#\mathsf{SAT}}}
\newcommand{\sharpP}{\ensuremath{\#\mathsf{P}}}
\newcommand{\ScalApproxMC}{\ensuremath{\mathsf{ApproxMC4}}}
\newcommand{\SparseScalMC}{\ensuremath{\mathsf{ApproxMC5}}}
\newcommand{\SparseScalMCCore}{\ensuremath{\mathsf{ApproxMC5Core}}}
\newcommand{\solCount}{\ensuremath{\mathsf{nSols}}}
\newcommand{\cellCount}{\ensuremath{\mathsf{nCells}}}
\newcommand{\emptyList}{\ensuremath{\mathsf{emptyList}}}
\newcommand{\ApproxMC}{\ensuremath{\mathsf{ApproxMC}}}
\newcommand{\FibBinSearch}{\ensuremath{\mathsf{LogSATSearch}}}
\newcommand{\iter}{\ensuremath{\mathsf{iter}}}
\newcommand{\AddToList}{\ensuremath{\mathsf{AddToList}}}
\newcommand{\FindMedian}{\ensuremath{\mathsf{FindMedian}}}
\newcommand{\iniThresh}{\ensuremath{\mathsf{iniThresh}}}
\newcommand{\hiThresh}{\ensuremath{\mathsf{thresh}}}
\newcommand{\qs}{\ensuremath{\mathsf{qs}}}
\newcommand{\Vars}{\ensuremath{\mathsf{Vars }}}
\newcommand{\thresh}{\ensuremath{\mathsf{thresh}}}
\newcommand{\pivot}{\ensuremath{\mathrm{pivot}}}
\newcommand{\ApproxMCTwo}{\ensuremath{\mathsf{ApproxMC2}}}
\newcommand{\ApproxMCTwoCore}{\ensuremath{\mathsf{ApproxMC2Core}}}
\newcommand{\ApproxMCThree}{\ensuremath{\mathsf{ApproxMC3}}}
\newcommand{\eat}[1]{}
\title{Sparse Hashing for Scalable Approximate Model Counting: Theory and Practice
	\protect\thanks{The authors decided to forgo the old convention of alphabetical ordering of authors in favor of a randomized ordering (denoted by \textcircled{r}) . The publicly verifiable record of the randomization is available at \protect\url{https://www.aeaweb.org/journals/policies/random-author-order/search} with confirmation code: lnQDuHqqJDdc. For citation of the work, authors request that the citation guidelines by AEA (available at \url{https://www.aeaweb.org/journals/policies/random-author-order}) for random author ordering be followed. 
	}
}
\titlerunning{Sparse Hashing for Scalable Approximate Model Counting}
\author{Kuldeep S. Meel \protect\footnotemark{} \textcircled{r}  S. Akshay \protect \footnotemark{} }{\addtocounter{footnote}{-2} \footnotemark{} School of Computing, National University of Singapore  \\ \footnotemark{}  Dept of CSE, Indian Institute of Technology, Bombay  }{}{}{}
\authorrunning{K.\,S. Meel \textcircled{r} S. Akshay}
\keywords{Model Counting, Sparse Hashing, SAT-Solving, Universal Hash Functions}
\newcounter{casenum}
\newenvironment{caseof}{\setcounter{casenum}{1}}{\vskip.5\baselineskip}
\newcommand{\case}[2]{\vskip.5\baselineskip\par\noindent {\bfseries Case \arabic{casenum}:} #1\\#2\addtocounter{casenum}{1}\begin{flushright}
	
\end{flushright}
}
\begin{document}
\maketitle
\begin{abstract}
  Given a CNF formula $F$ on $n$ variables, the problem of model counting, also referred to as $\#SAT$, is to compute the number of models or satisfying assignments of $F$. Model counting is a fundamental but hard problem in computer science with varied applications.
Recent years have witnessed a surge of effort towards developing efficient algorithmic techniques that combine the classical 2-universal hashing (from~\cite{Stockmeyer83}) with the remarkable progress in SAT solving over the past decade.  These techniques augment the CNF formula $F$ with random XOR constraints and invoke an NP oracle repeatedly on the resultant CNF-XOR formulas. In practice, the NP oracle calls are replaced by calls to a SAT solver and it is observed that runtime performance of modern SAT solvers (based on conflict-driven clause learning) on CNF-XOR formulas is adversely affected by the size of XOR constraints. The standard construction of 2-universal hash functions chooses every variable with probability $p=\frac{1}{2}$ leading to XOR constraints of size $\frac{n}{2}$ in expectation. Consequently, the main challenge is to design \emph{sparse} hash functions, where variables can be chosen with smaller probability and lead to smaller sized XOR constraints, which can then replace 2-universal hash functions. 
  
In this paper, our goal is to address this challenge both from a theoretical and a practical perspective. First, we formalize a relaxation of universal hashing, called concentrated hashing, a notion implicit in prior works to design sparse hash functions. We then establish a novel and beautiful connection between concentration measures of these hash functions and isoperimetric inequalities on boolean hypercubes. This allows us to obtain tight bounds on variance as well as the dispersion index and show that $p = \mathcal{O}(\frac{\log_2 m}{m})$ suffices for the design of sparse hash functions from $\{0,1\}^n$ to $\{0,1\}^m$ belonging to the concentrated hash family. Finally, we use sparse hash functions belonging to this concentrated hash family to develop new approximate counting algorithms. A comprehensive experimental evaluation of our algorithm on 1893 benchmarks demonstrates that the usage of sparse hash functions can lead to significant speedups.  To the best of our knowledge, this work is the first study to demonstrate runtime improvement of approximate model counting algorithms through the usage of sparse hash functions, while still retaining strong theoretical guarantees (\`a la 2-universal hash functions).
\end{abstract}

\section{Introduction}\label{sec:intro}
Given a Boolean formula $F$ in conjunctive normal form (CNF), the problem of model counting, also referred to as {\sharpSAT}, is to compute the number of models of $F$. Model counting is a fundamental problem in computer science with a wide variety of applications ranging from quantified information leakage~\cite{FJ14}, probabilistic reasoning~\cite{Roth1996,SangBeamKautz2005,CFMSV14,EGSS13c}, network reliability~\cite{Valiant79,DMPV17}, neural network verification~\cite{BSSMS19}, and the like.  For example, given a probabilistic model describing conditional dependencies between different variables in a system, the problem of probabilistic inference, which seeks to compute the probability of an event of interest given observed evidence, can be reduced to a collection of model counting queries~\cite{Roth1996}. 

In his seminal paper, Valiant showed that {\sharpSAT} is $\sharpP$-complete, where {\sharpP} is the set of counting problems associated with {\NP} decision problems ~\cite{Valiant79}. Theoretical investigations of {\sharpP} have led to the discovery of deep connections in complexity theory, and there is strong evidence for its hardness \cite{AroBar09,Toda89}. In particular, Toda showed that every problem in the polynomial hierarchy could be solved by just one call to a {\sharpP} oracle; more formally, $PH \subseteq P^{\sharpP}$~\cite{Toda89}. 

Given the computational intractability of {\sharpSAT}, researchers have focused on approximate variants. Stockmeyer presented a randomized hashing-based technique that can compute $(\varepsilon,\delta)$ approximation within the polynomial time, in $|F|, \varepsilon, \delta$,  given access to a {\NP} oracle where $|F|$ is the size of formula, $\varepsilon$ is the error tolerance bound and $\delta$ is the confidence\footnote{Although Stockmeyer did not present a randomized variant in his 1983 paper, Jerrum, Valiant, and Vazirani credit Stockmeyer for the idea~\cite{JVV86}}. The computational intractability of {\NP} dissuaded development of algorithmic implementations of Stockmeyer's hashing-based techniques and no practical tools for approximate counting existed until the 2000's~\cite{GSS06}. By extending Stockmeyer's framework, Chakraborty, Meel, and Vardi demonstrate a scalable $(\varepsilon,\delta)$-counting algorithm, {\ApproxMC} \cite{CMV13b}. Subsequently, several new algorithmic ideas have been incorporated to demonstrate the scalability of {\ApproxMC}; the current version of {\ApproxMC} is called {\ScalApproxMC}~\cite{CMV16,SM19,SGM20}. Recent years have seen a surge of interest in the design of hashing-based techniques for approximate counting~\cite{EGSS13a,EGSS13c,CFMSV14,IMMV15,MVCFSFIM16,CMMV16,SM19,SGM20}. 

The core theoretical idea of the hashing-based framework is to employ 2-universal hash functions to partition the solution space, denoted by $\satisfying{F}$ for a formula $F$,  into {\em roughly equal small} cells, wherein a cell is called {\em small} if it has solutions less than or equal to a  pre-computed threshold, $\thresh$. An {\NP} oracle is employed to check if a cell is small by enumerating solutions one-by-one until either there are no more solutions or we have already enumerated $\thresh+1$ solutions. To ensure polynomially many NP calls, $\thresh$ is set to be  polynomial in input parameter $\varepsilon$. The choice of the threshold gives rise to a tradeoff between the number of \NP\ queries and size of each query. To achieve probabilistic amplification of the confidence, multiple invocations of underlying subroutines are performed. 

A standard family of 2-universal hash functions employed for this is the $H_{xor}$ family comprising of functions expressed as conjunction of XOR constraints.  In particular, viewing the set of variables $Y$ of the formula $F$ as a vector of dimension $n\times 1$, one can represent the hash function $h:\{0,1\}^n \mapsto \{0,1\}^m$ as $h(Y) = \myMatrix{A}Y+\myMatrix{b}$ where $\myMatrix{A}$ is a $m\times n$ matrix while $\myMatrix{b}$ is $m\times1$ 0-1 vector and each entry of $\myMatrix{A}$ and $\myMatrix{b}$ is either 0 or 1.  Each entry of $\myMatrix{A}$ is chosen to be 1 with probability $p= 1/2$, therefore the average number of 1's in each row is $\frac{n}{2}$.  Each row of $h(Y)$ thus gives rise to XOR constraints involving $\frac{n}{2}$ variables in expectation.
Similarly a cell $\alpha$ can be viewed as a 0-1 vector of size $m\times 1$. 
Now, the solutions of $F$ in a given cell $\alpha$ are the solutions of the formula $F \wedge (\myMatrix{A}Y+\myMatrix{b} = \alpha)$. As the input formula $F$ is in CNF, this formula is a conjunction of CNF and XOR-constraints, also called an CNF-XOR formula. Given a hash function $h$ and a cell $\alpha$, the random variable of interest, denoted by $|\Cell{F}{h,\alpha}|$ is the number of solutions of $F$ that $h$ maps to cell $\alpha$. As mentioned earlier, the NP-oracle is invoked (polynomially many times) to check if such a cell is small.

The practical implementation of these techniques employ a {\SAT} solver to perform {\NP} oracle calls. The performance of SAT solvers, however, degrades with increase in the number of variables in XOR constraints (also called their \emph{width}) and therefore recent efforts have focused on design of \emph{sparse} hash functions where each entry is chosen with $p \ll 1/2$ ($p$ is also referred to as density)~\cite{GHSS07,EGSS13c,IMMV15,AD16,AHT18,AT17}. The primary theoretical challenge is that 2-universality has been crucial to obtain $(\varepsilon,\delta)$-guarantees, and sparse hash functions are not 2-universal. In fact, despite intense theoretical and practical interest in the design of sparse hash functions, the practical implementation of all prior constructions have had to sacrifice theoretical guarantees (as further discussed in Section~\ref{sec:stateofart}). 

Given the applications of counting to critical domains such as network reliability, the loss of theoretical guarantees limits the applications of approximate model counters. Therefore, in this context, the main challenge is: {\bf Is it possible to construct sparse hash functions and design algorithmic frameworks to achieve runtime performance improvement without losing theoretical guarantees?}

In this paper, we address this challenge. To this end, we formalize the implicit observation in prior works that hashing-based counting algorithms, similar to other applications of universal hashing, are primarily concerned with the application of concentration bounds. We start by providing, in Section~\ref{sec:prelims}, a definition of concentrated hash functions, a relaxation of universal hashing. The guarantees offered by concentrated hashing depend crucially on the size of the set, unlike in universal hashing. Next, we turn towards the construction of sparse hash functions that belong to the concentrated hash family. Finally, we explain how these sparse hash functions can be used to build an efficient algorithm for approximate model counting. More precisely, the technical contributions of this paper are the following:
\begin{enumerate}
\item 
We first obtain a characterization of $\satisfying{F}$ that would achieve the maximum variance as well as dispersion index for $|\Cell{F}{h, \alpha}|$ for sparse hash functions. In a significant departure from earlier works~\cite{EGSS14a,AD16,ZCSE16,AHT18} where the focus was to use analytical methods to obtain upper bound on the variance of $|\Cell{F}{ h, \alpha}|$, we focus on searching for the set $\satisfying{F}$ that would achieve the maximum variance of $|\Cell{F}{h, \alpha}|$.  To do this, we utilize a beautiful connection between the maximizing of variance as well as dispersion index of $|\Cell{F}{h,\alpha}|$ and minimizing the ``$t$-boundary'' (the number of pairs with Hamming distance at most $t$) of sets on the boolean hypercube on $n$ dimensions. This allows us to obtain novel and stronger upper bounds by using deep results from Boolean functional analysis and isoperimetric inequalities~\cite{BR17,R18}. This connection could possibly be applied in other contexts as well. 
\item Utilizing  the connection between dispersion index and ``$t$-boundary'' allows us to introduce a new family of hash functions, denoted by $\Hrennes{}$, which consists of hash functions of the form $\myMatrix{A}X+\myMatrix{b}$, where every entry of $\myMatrix{A}[i]$ is set to 1  with $p_i = \mathcal{O}(\frac{\log_2 i}{i})$. The construction of the new family marks a significant departure from prior families in the behavior of the density dependent on rows of the matrix $\myMatrix{A}$.  We believe $\Hrennes{}$ is of independent interest and can be substituted for 2-universal hash functions in several applications of hashing.  
\item Finally, we use the above concentrated hash family to develop a new approximate model counting algorithm \SparseScalMC, building on the existing state-of-the-art algorithm {\ScalApproxMC}. The primary challenge lies in the design and analysis of a hashing-based algorithm that does not assume any bound on  $|\satisfying{F}|$ but is able to use concentrated hash functions whose behavior depends on the size of the set being hashed. A comprehensive experimental evaluation on 1893 benchmarks demonstrates  that usage of $\Hrennes{}$   in {\SparseScalMC} leads to significant speedup in runtime over \ScalApproxMC. It is worth viewing the runtime improvement in the context of prior work where significant slowdown was observed. To the best of our knowledge, {\em this work is the first study to demonstrate runtime improvement through sparse hash functions without loss of $(\varepsilon,\delta)-$guarantees, demonstrating the tightness of our bounds in practice. }
\end{enumerate}  

\subsubsection*{Structure of the paper}
We define notations and preliminaries in Section~\ref{sec:prelims} along with a survey of state of the art for design of sparse hash functions in the context of approximate model counting. We then outline the main technical contributions of this paper in Section~\ref{sec:mainresults}. In Section~\ref{sec:concentrated}, we utilize deep results from Boolean functional analysis and isoperimetric inequalities to bound the dispersion index as well as variance of $|\Cell{F,h}{\alpha}|$. We then use the bounds on dispersion index to construct sparse hash families belong to concentrated hashing in Section~\ref{sec:rennes}.  Section~\ref{sec:sparsescalmc} deals with construction of approximate model counting algorithm that uses hash functions belong to concentrated family. We finally describe extensive empirical evaluation in Section~\ref{sec:evaluation} and conclude in Section~\ref{sec:conclusion}. %

\section{Definitions and State of the Art}\label{sec:prelims}

\subsubsection{The model counting problem}
Let $F$ be a Boolean formula in conjunctive normal form (CNF), and let $\Vars(F)$ be the set of variables appearing in $F$.  The set $\Vars(F)$ is also called the \emph{support} of $F$. An assignment $\sigma$ of truth values to the variables in $\Vars(F)$ is called a \emph{satisfying assignment} or \emph{witness} of $F$ if it makes $F$ evaluate to true.  We denote the set of all witnesses of
$F$ by $\satisfying{F}$. Throughout the paper, we will use $n$ to denote $|\Vars(F)|$.

We write $\prob\left[\mathcal{Z}: {\Omega} \right]$ to denote the probability of
outcome $\mathcal{Z}$ when sampling from a probability space ${\Omega}$.  For
brevity, we omit ${\Omega}$ when it is clear from the context.  The
expected value of $\mathcal{Z}$ is denoted $\expect\left[\mathcal{Z}\right]$
and its variance is denoted $\sigma^2\left[\mathcal{Z}\right]$. The quantity $\frac{ \sigma^2\left[\mathcal{Z}\right]}{\expect\left[\mathcal{Z}\right]}$ is called the dispersion index of the random variable $\mathcal{Z}$.  Given a distribution $\mathcal{D}$, we use $\mathcal{Z} \sim \mathcal{D}$ to denote that $\mathcal{Z}$ is sampled from the distribution $\mathcal{D}$. Let Bern(p) denote the Bernoulli distribution with probability $p$ such that if $\mathcal{Z} \sim $Bern(p), we have $\prob[\mathcal{Z} = 1] = p$. 	

The \emph{propositional model counting problem} is to compute
$|\satisfying{F}|$ for a given CNF formula $F$.  A \emph{probably approximately correct} (or \PAC) counter is a probabilistic algorithm ${\ApproxCount}(\cdot, \cdot,\cdot)$ that takes as inputs a formula $F$,  a tolerance $\varepsilon>0$, and a confidence  $\delta\in (0, 1]$, and returns a $(\varepsilon,\delta)$-estimate $c$, i.e.,  $\prob\Big[\frac{|\satisfying{F}|}{1+\varepsilon} \le c \le (1+\varepsilon)|\satisfying{F}|\Big] \ge 1-\delta$. PAC guarantees are also sometimes referred to as $(\varepsilon,\delta)$-guarantees.

A closely related notion is of projected model counting wherein we are interested  in computing the cardinality of $\satisfying{F}$ projected to a subset of variables $\mathcal{P} \subseteq \Vars(F)$. While for clarity of exposition, we focus on the problem of model counting, the techniques developed in this paper apply to projected model counting as well. In our empirical evaluation, we consider such benchmarks as well.

\subsubsection{Universal hash functions}

Let $n,m\in \mathbb{N}$ and $\mathcal{H}(n,m) \triangleq \{ h:\{0,1\}^{n} \rightarrow \{0,1\}^m \}$ be a family of hash functions mapping $\{0,1\}^n$ to $\{0,1\}^m$. We use $h \xleftarrow{R} \mathcal{H}(n,m)$ to denote the probability space obtained by choosing a function $h$ uniformly at random from $\mathcal{H}(n,m)$. To measure the quality of a hash function we are interested in the set of elements of $S$ mapped to $\alpha$ by $h$, denoted $\SatisfyingHashSet{S}{h}{\alpha}$ and its cardinality, i.e., $\SatisfyingHashed{S}{h}{\alpha}$. 

\begin{definition}
  A family of hash functions $\mathcal{H}(n,m)$ is \emph{strongly 2-universal}
      \footnote{The concept of 2-universal hashing proposed by Carter and Wegman~\cite{carter1977universal} only required that $\prob[h(x) =  h(y)] \leq  \frac{1}{2^m}$ and therefore, the phrase {\em strongly 2-universal} is often used as also noted by Vadhan in~\cite{V12}.} if $\forall x,y \in \{0,1\}^n$, $\alpha\in \{0,1\}^m, h \xleftarrow{R} \mathcal{H}(n,m)$, 
	\begin{align}
	\prob[h(x) = \alpha] = \frac{1}{2^m} =	\prob[h(x) =  h(y)] %
	\end{align}
\end{definition}	

\begin{proposition}\label{lm:universal-bound}
	Let $\mathcal{H}(n,m)$ be a strongly 2-universal hash family and let $h \xleftarrow{R} \mathcal{H}(n,m)$, then $\forall S \subseteq \{0,1\}^n$, $|S|\geq 1$, 
	\begin{align}
	\expect[\SatisfyingHashed{S}{h}{\alpha}] = \frac{|S|}{2^m}\\
	\frac{\sigma^2[\SatisfyingHashed{S}{h}{\alpha}]}{ \expect[\SatisfyingHashed{S}{h}{\alpha}]} \leq  1
        \label{eq:dispindex}
	\end{align}
\end{proposition}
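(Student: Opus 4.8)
The plan is to write $\SatisfyingHashed{S}{h}{\alpha}$ as a sum of indicator random variables and compute its first two moments directly. For each $x \in S$, set $\gamma_x = \mathbf{1}[h(x) = \alpha]$, so that $\SatisfyingHashed{S}{h}{\alpha} = \sum_{x \in S} \gamma_x$. By the uniformity clause of strong 2-universality, $\prob[h(x) = \alpha] = 2^{-m}$, hence $\expect[\gamma_x] = 2^{-m}$. Linearity of expectation then immediately yields $\expect[\SatisfyingHashed{S}{h}{\alpha}] = \sum_{x \in S} 2^{-m} = |S|/2^m$, establishing the first claim.

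For the variance I would use the standard decomposition $\sigma^2\!\left[\sum_{x} \gamma_x\right] = \sum_{x} \sigma^2[\gamma_x] + \sum_{x \neq y} \mathrm{Cov}(\gamma_x, \gamma_y)$. Since each $\gamma_x$ is a $\mathrm{Bern}(2^{-m})$ variable, $\sigma^2[\gamma_x] = 2^{-m}(1 - 2^{-m}) \leq 2^{-m}$. The decisive step is to argue that every covariance term vanishes: strong 2-universality makes the values $h(x)$ and $h(y)$ for distinct $x,y$ pairwise independent and uniform, so $\expect[\gamma_x \gamma_y] = \prob[h(x) = \alpha \wedge h(y) = \alpha] = 2^{-2m} = \expect[\gamma_x]\,\expect[\gamma_y]$, whence $\mathrm{Cov}(\gamma_x, \gamma_y) = 0$. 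Summing only the surviving diagonal terms gives $\sigma^2[\SatisfyingHashed{S}{h}{\alpha}] \leq |S|\cdot 2^{-m} = \expect[\SatisfyingHashed{S}{h}{\alpha}]$, and dividing through by the expectation (which is positive since $|S|\geq 1$) yields the dispersion bound of $1$.

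The main point to watch is exactly what the stated definition of strong 2-universality delivers for a single fixed cell $\alpha$. Read literally, the abbreviated condition $\prob[h(x)=\alpha] = 2^{-m} = \prob[h(x)=h(y)]$ only pins down the \emph{total} collision probability summed over all cells, whereas the covariance computation needs the per-cell statement $\prob[h(x)=\alpha \wedge h(y)=\alpha] = 2^{-2m}$. I would therefore make explicit that we invoke the full strong 2-universality property, i.e. genuine pairwise independence of the family $\{h(x)\}_{x \in S}$, which is the intended reading of the definition. This is the only place where the hypothesis does real work, and it is precisely the property that sparse hash families fail to enjoy, which motivates the concentrated-hashing relaxation developed in the remainder of the paper.
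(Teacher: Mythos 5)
Your proof is correct and follows essentially the same route as the paper's: both write $\SatisfyingHashed{S}{h}{\alpha}$ as a sum of indicators, get the expectation by linearity, and control the second moment via the pairwise statistics $\expect[\gamma_x\gamma_y]=2^{-2m}$ (the paper bounds the cross-moment sum by $(\expect[\SatisfyingHashed{S}{h}{\alpha}])^2$ rather than phrasing it as vanishing covariances, but the computation is identical). Your closing remark about needing genuine pairwise independence rather than the literal collision-probability reading of the definition is a fair point, and the paper's own proof makes the same implicit appeal.
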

 Equation~(\ref{eq:dispindex}) can thus be restated as saying that for universal hash functions, the dispersion index must be at most 1.

\subsubsection{Prefix hash families}
While universal hash families have nice concentration bounds, they are not adaptive, in the sense that one cannot build on previous queries. In several applications of hashing, the dependence between different queries can be exploited to extract improvements in theoretical complexity and runtime performance. Thus, we are typically interested in a restricted class of hash functions, called a \emph{prefix-family} of hash functions defined in~\cite{CMV16} as follows.  For $\alpha \in \{0,1\}^m$, $\alpha[i]$ represent $i$-th element of $\alpha$. 

\begin{definition}\label{def:prefix-family}
  Let $n\in \mathbb{N}$ and $\mathcal{H}(n,1)$ be a family of hash functions. A family of hash functions $\mathcal{H}(n,n)$ is called a \emph{prefix-family} with respect to $\mathcal{H}(n,1)$ if for all $h \in \mathcal{H}(n,n)$, there exists $h_1, h_2, \cdots h_n \in \mathcal{H}(n,1)$ such that 
	\begin{enumerate}
		\item 		$h(x)[i] = h_{i}(x) $
		\item  for all $i \in [n]$, the probability spaces for $\{ h_i \mid h \xleftarrow{R} \mathcal{H}(n,n) \}$ and $\{ g \mid g \xleftarrow{R} \mathcal{H}(n,1)   \}$ are identical. 
	\end{enumerate} 
	
	For every $m \in \{1, \ldots
	n\}$, the $m^{th}$ prefix-slice of $h$, denoted $h^{(m)}$, is a
	map from $\{0,1\}^{n}$ to $\{0,1\}^m$, such that $h^{(m)}(y)[i] = h_{i}(y)$, for all $y \in \{0,1\}^{n}$ and for all $i \in \{1, \ldots
	m\}$. Similarly, the $m^{th}$ prefix-slice of $\alpha$, denoted
	$\alpha^{(m)}$, is an element of $\{0,1\}^m$ such that
	$\alpha^{(m)}[i] = \alpha[i]$ for all $i \in \{1, \ldots m\}$. In this paper we will primarily be focussed on prefix-hash functions and concentration bounds on them. To avoid cumbersome terminology, we abuse notation and write $\Cell{S}{m}$ (resp. $\Cnt{S}{m}$) as a short-hand for $\SatisfyingHashSet{S}{h^{(m)}}{\alpha^{(m)}}$ (resp. $\SatisfyingHashed{S}{h^{(m)}}{\alpha^{(m)}}$).
        \end{definition}
\begin{table}[h]
	\begin{tabular}{|c|c|c|}
		\hline 
		Symbol & Short for & Meaning \\ \hline 
		$\Cell{S}{m}$ & $\SatisfyingHashSet{S}{h^{(m)}}{\alpha^{(m)}}$ &$ S \cap \{  y \mid h^{(m)}(y) = \alpha^{(m)} \}   $\\ \hline 
		$\Cnt{S}{m}$ & $\SatisfyingHashed{S}{h^{(m)}}{\alpha^{(m)}}$ &$ |\Cell{S}{m} | $ \\ \hline 
		
	\end{tabular}
   \caption{List of Important Notations}
\end{table}
In what follows, for a formula $F$, we write $\Cell{F}{m}$ (resp. $\Cnt{F}{m}$)  to mean $\Cell{\satisfying{F}}{m}$ (resp. $\Cnt{\satisfying{F}}{m}$). Finally, the usage of prefix-family ensures monotonicity of the random variable, $\Cnt{S}{i}$, since from the definition of prefix-family, we have that for all $i$, $h^{(i+1)}(x) = \alpha^{(i+1)} \implies h^{(i)}(x) = \alpha^{(i)} $. Formally,
\begin{proposition}\label{prop:prefix-monotonicity}
	 For all $1\leq i <m$, $\Cell{S}{i+1} \subseteq \Cell{S}{i}$
\end{proposition}

\subsubsection{Explicit families and sparse hash functions}
While the above definitions of hash families are abstract, applications to model counting need explicit hash functions. The most common explicit hash family used for this are as follows: Let $\mathcal{H}_{\left\{p_i\right\}_{1\leq i\leq m}} \triangleq \{ h:\{0,1\}^{n} \rightarrow \{0,1\}^m \}$ be the family of functions of the form $h(x) = \myMatrix{A}x+\myMatrix{b}$ with $\myMatrix{A} \in \mathbb{F}_{2}^{m \times n}$ and $\myMatrix{b} \in \mathbb{F}_{2}^{m \times 1}$ where the entries of  $\myMatrix{A}[i]$ and $\myMatrix{b}$ are independently generated according to Bern($p_i$) and Bern($\frac{1}{2}$) respectively. Note that taking $p_i=\frac{1}{2}$ gives $\mathcal{H}_{\{ \frac{1}{2}, \frac{1}{2}, \cdots \frac{1}{2}  \} }(n,m)$, which is precisely the  strongly 2-universal hashing family proposed by Carter and Wegman~\cite{carter1977universal}, also denoted as $H_{xor}(n,m)$ in earlier works~\cite{MVCFSFIM16}. $p_i$ is referred to as the density of $i$-th row of $\myMatrix{A}$ and $1-p_i$ is referred to as the sparsity of $i$-th row of $\myMatrix{A}$. We will use the term {\em sparse hash functions} to refer to hash functions with $p_i \ll \frac{1}{2}$.

Observe that $\mathcal{H}_{\left\{p_i\right\}_{1\leq i\leq n}}$ is a prefix-family with  $h^{(m)}(x) = \myMatrix{A}^{(m)}x+\myMatrix{b}^{(m)}$, where $\myMatrix{A}^{(m)}$ denotes the submatrix formed by the first $m$ rows and $n$ columns of $\myMatrix{A}$ and $\myMatrix{b}^{(m)}$ is the first $m$ entries of the vector $\myMatrix{b}$.

\subsection{Concentrated hash functions} 
 Several applications such as sketching and counting~\cite{Stockmeyer83,CM05} involving universal hash functions invoke strongly 2-universality property solely to obtain Proposition~\ref{lm:universal-bound}, i.e., obtain strong concentration bounds, but as mentioned above this requires fixing $p_i=\frac{1}{2}$.
 
In this context, one might ask if one can relax the requirement of 2-universality, while still attaining similar bounds for expectation and dispersion index. In a spirit similar to other attempts to design sparse hash functions for approximate counting techniques, we seek to design hash functions whose behavior depends on the size of $|S|$. To this end, we formalize the concept of concentrated hash family.

\begin{definition}
Let $\qs,k\in\mathbb{N}$, $\rho\in (0,1/2]$. A family of hash functions $\mathcal{H}(n,n)$ is \emph{prefix-$(\rho,\qs, k)$-concentrated}, if for each $m$ with $\qs \leq m \leq n$,  and $ S \subseteq \{0,1\}^n$ where $|S| \leq k \cdot 2^m$, $\alpha \in \{0,1\}^n, h \xleftarrow{R} \mathcal{H}$, we have 
	\begin{align}
	\expect[\Cnt{S}{m}] = \frac{|S|}{2^m}\\
	\frac{\sigma^2[\Cnt{S}{m}]}{\expect[\Cnt{S}{m} ]} \leq \rho 
	\end{align}
\end{definition}

It is easy to see that this definition is monotonic in $k$ and it generalizes strongly 2-universal hash functions.  Note that the above definition differs from the property of strongly 2-universal hash functions in two ways: first, it  bounds the dispersion index by a constant instead of 1, and second, the definition depends on size of $S$.

\begin{restatable}{proposition}{concmono}
\label{prop:conc-mono} 
		 If $\mathcal{H}(n,n)$ is prefix-$(\rho,\qs, k)$-concentrated, then $\mathcal{H}(n,n)$ is prefix-$(\rho', \qs', k')$-concentrated for all $\rho' \geq \rho$,  $\qs' \geq \qs$, and $k' \leq k$.  
\end{restatable}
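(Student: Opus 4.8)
The plan is to treat this as a pure monotonicity statement: enlarging $\qs$ and shrinking $k$ can only \emph{shrink} the collection of pairs $(m, S)$ over which the defining conditions must be checked, while enlarging $\rho$ only \emph{weakens} the dispersion bound that must be met. Consequently, every instance that the prefix-$(\rho', \qs', k')$-concentration property is required to handle is already handled by the prefix-$(\rho, \qs, k)$-concentration hypothesis, and the conclusion follows by unwinding the definition.

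Concretely, I would fix an arbitrary $m$ with $\qs' \leq m \leq n$, an arbitrary $\alpha \in \{0,1\}^n$, and an arbitrary $S \subseteq \{0,1\}^n$ with $|S| \leq k' \cdot 2^m$, and then verify the two required conditions directly. The first step is to confirm that this pair $(m, S)$ falls within the scope of the hypothesis: since $\qs' \geq \qs$ we obtain $\qs \leq m \leq n$, and since $k' \leq k$ we obtain $|S| \leq k' \cdot 2^m \leq k \cdot 2^m$. Hence the prefix-$(\rho, \qs, k)$-concentration hypothesis applies verbatim to this particular $m$ and $S$.

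Applying the hypothesis then yields $\expect[\Cnt{S}{m}] = |S|/2^m$, which is exactly the first condition needed for the new property, together with $\sigma^2[\Cnt{S}{m}] / \expect[\Cnt{S}{m}] \leq \rho$. The only place the three parameter relaxations are used asymmetrically is the dispersion bound: chaining with $\rho \leq \rho'$ gives $\sigma^2[\Cnt{S}{m}] / \expect[\Cnt{S}{m}] \leq \rho'$, the second required condition. Since $m$, $\alpha$, and $S$ were arbitrary subject to the new constraints, $\mathcal{H}(n,n)$ is prefix-$(\rho', \qs', k')$-concentrated.

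There is no genuine obstacle here; the statement is essentially bookkeeping about which parameter loosens which quantifier. The only point deserving a moment's care is to confirm that all three relaxations push in the \emph{same}, easier direction simultaneously — that the inequalities $\rho' \geq \rho$, $\qs' \geq \qs$, and $k' \leq k$ are each oriented so as to preserve the requirement rather than strengthen it — which the explicit containment argument above makes transparent.
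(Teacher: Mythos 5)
Your proof is correct and follows essentially the same route as the paper's: both arguments observe that enlarging $\qs$ and shrinking $k$ restrict the set of pairs $(m,S)$ to be checked to a subset of those covered by the hypothesis, while enlarging $\rho$ only weakens the dispersion inequality to be verified. Nothing further is needed.
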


Finally, we may show that applying the usual Chebyshev and Paley-Zymund inequalities to this definition immediately gives us the following properties of concentrated hash families.%

\begin{restatable}{proposition}{probbounds}
	\label{prop:probBounds}
	If $\mathcal{H}$ is prefix-$(\rho, \qs, k)$-concentrated family, then for every  $0 < \beta < 1$, $\qs \leq m \leq n$,  and for all $|S| \leq 2^m\cdot k$,  we have the
	following:
	\begin{enumerate}
		\item {\small $\prob\left[\left|\Cnt{S}{m} - \expect[\Cnt{S}{m} ]\right|\geq \beta \expect[\Cnt{S}{m} ]\right] \leq \frac{\rho}{\beta^2 \expect[\Cnt{S}{m} ]}$}
		\item {\small $\prob\left[\Cnt{S}{m} \leq  \beta \expect\left[\Cnt{S}{m} \right]\right]  \leq  \frac{\rho}{\rho+(1-\beta)^2 \expect\left[\Cnt{S}{m} \right]}$}
	\end{enumerate} 
\end{restatable}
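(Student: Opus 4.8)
The plan is to derive both inequalities as direct consequences of the two defining properties of a prefix-$(\rho,\qs,k)$-concentrated family—namely $\expect[\Cnt{S}{m}] = |S|/2^m$ and $\sigma^2[\Cnt{S}{m}] \leq \rho\,\expect[\Cnt{S}{m}]$—combined with two classical concentration inequalities. Throughout I abbreviate $\mu = \expect[\Cnt{S}{m}]$ and $\sigma^2 = \sigma^2[\Cnt{S}{m}]$, and I may assume $\mu > 0$ (otherwise $S = \emptyset$ and both statements are vacuously true). The hypotheses $\qs \leq m \leq n$ and $|S| \leq 2^m k$ are exactly what is required to legitimately invoke the concentration property for this choice of $m$ and $S$, so the first move is simply to record $\sigma^2 \leq \rho\mu$.

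For the first inequality I would apply Chebyshev's inequality, $\prob[|\Cnt{S}{m} - \mu| \geq a] \leq \sigma^2/a^2$, with the choice $a = \beta\mu$. This yields the bound $\sigma^2/(\beta^2\mu^2)$, and substituting the dispersion bound $\sigma^2 \leq \rho\mu$ in the numerator gives $\rho\mu/(\beta^2\mu^2) = \rho/(\beta^2\mu)$, which is precisely the claimed estimate.

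For the second inequality I would use the one-sided (Paley--Zygmund / Cantelli) inequality in the form $\prob[\Cnt{S}{m} \leq \beta\mu] \leq \sigma^2/(\sigma^2 + (1-\beta)^2\mu^2)$, obtained by applying Cantelli's bound to the nonnegative random variable $\mu - \Cnt{S}{m}$ with deviation $(1-\beta)\mu$. The remaining step is to convert the $\sigma^2$-dependence into the stated $\rho$-dependence: since the map $v \mapsto v/(v+c)$ is increasing for the constant $c = (1-\beta)^2\mu^2 > 0$, replacing $\sigma^2$ by its upper bound $\rho\mu$ can only increase the right-hand side, giving $\rho\mu/(\rho\mu + (1-\beta)^2\mu^2)$. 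Cancelling a common factor of $\mu$ from numerator and denominator then produces $\rho/(\rho + (1-\beta)^2\mu)$, as required.

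The argument is essentially routine, so there is no substantive obstacle; the only point demanding a little care is the monotonicity step in the second part. There one must replace $\sigma^2$ by $\rho\mu$ in an expression where $\sigma^2$ occurs in both numerator and denominator, and it is important that this substitution weakens the bound in the correct direction—this is guaranteed precisely because $v/(v+c)$ is increasing in $v$. The subsequent cancellation of $\mu$ is what turns the quadratic term $(1-\beta)^2\mu^2$ appearing in the Cantelli bound into the linear term $(1-\beta)^2\mu$ that appears in the statement.
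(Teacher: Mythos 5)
Your proposal is correct and follows essentially the same route as the paper: the paper likewise records $\sigma^2[\Cnt{S}{m}] \leq \rho\,\expect[\Cnt{S}{m}]$ from the definition of a prefix-$(\rho,\qs,k)$-concentrated family and then invokes Chebyshev's inequality for the first bound and the one-sided second-moment (Paley--Zygmund/Cantelli) inequality for the second. Your write-up is in fact more explicit than the paper's, spelling out the monotonicity of $v \mapsto v/(v+c)$ needed to substitute $\rho\mu$ for $\sigma^2$ in both numerator and denominator.
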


Indeed, the rationale behind the design of $(\rho,k)$-concentrated hash families is that one can design such families with significant sparsity. Such sparse hash functions can then contribute to runtime performance of the underlying applications. The notion of concentrated hashing bears some similarity to the notion of {\em  strongly concentrated} random variables defined in ~\cite{EGSS14a}.  In particular, a prefix $(\rho,\qs, k)$ concentrated family implies that the random variable $\Cnt{S}{m}$, for $m \geq \qs$, is  strongly-$\left( (\beta \expect[\Cnt{S}{m}])^2, \frac{\beta^2 \expect[\Cnt{S}{m} ]}{\rho}\right)$ concentrated. We refer the reader to the Appendix A.1 for the formal statement as well as its relations to other useful notions of hashing.

\subsection{State of the Art}\label{sec:stateofart}
The current state of the art hashing-based techniques for approximate model counting can be broadly classified into two categories: the first category of techniques~\cite{trevisan2002lecture,EGSS13b,AT17,AHT18}, henceforth called Cat1,  compute a constant factor approximation by setting $\thresh$ to be a constant and use Stockmeyer's trick of constructing multiple copies of the input formula. The second class of techniques, henceforth called Cat2, consists of techniques~\cite{CMV13b,CMV16,MVCFSFIM16} that directly compute an $(\varepsilon, \delta)$-estimate by setting  $\mathrm{threshold} = \mathcal{O}(\frac{1}{\varepsilon^2})$, and hence invoking the underlying {\NP} oracle $\mathcal{O}(\frac{1}{\varepsilon^2})$ times. The proofs of correctness for all the hashing-based techniques involve the usage of concentration bounds due to strong 2-universal hash functions. Recall that given a hash function $h \in \mathcal{H}(n,m)$ and a cell $\alpha$, the random variable of interest is $\Cnt{F}{m}$ the number of solutions of $F$ that $h$ maps to cell $\alpha$. The Cat1 techniques require the coefficient of variation, defined as the ratio of standard deviation of $\Cnt{F}{m}$ to  $\expect[\Cnt{F}{m}]$, to be upper bounded by a constant  while, for Cat2 techniques, it is sufficient to have the dispersion index be bounded by a constant.  It is worth noting that the analyses for both the techniques allow one to focus on the case of  $\expect[\Cnt{F}{m}]$ being greater than 1. In this case, if dispersion index is upper bounded by a constant, then so is the coefficient of variation (but not vice versa!).  In this sense, Cat2 techniques are stronger than Cat1.

Recently,~\cite{AD16} and~\cite{ZCSE16} independently showed that 2-universality can be relaxed while using Cat1 techniques. More precisely, they showed that choosing entries with probability $p = \mathcal{O}(\log n/n)$ asymptotically suffices to guarantee that the coefficient of variation is upper bounded by constant, i.e., dispersion index is upper bounded by mean of $\Cnt{F}{m}$  when $\log (|\satisfying{F}|) \in \Omega (n)$. Furthermore,~\cite{AT17} showed that (sparse) hash functions constructed using LDPC codes also asymptotically suffice to guarantee that the coefficient of variation is upper bounded by constant. However, these results come with three caveats:

\begin{enumerate}
\item  Only Cat1 techniques can employ these sparse hash functions as they can provide upper bound on coefficient of variation but not dispersion index. On the other hand, Cat2 techniques scale significantly better than Cat1 techniques in practice. ~\cite{BAM20}
\item The asymptotically large constant in the upper bound of coefficient of variation makes the practical usage usage of the above hash functions infeasible as discussed extensively in prior work (cf: Section 9 of~\cite{AHT18}). 
\item The results only hold true for $\log (|\satisfying{F}|) \in \Omega (n)$, which is usually not the case for many practical applications. 
\end{enumerate}
In summary, when $p<\frac{1}{2}$, previous techniques are unable to obtain a constant upper bound on the dispersion index and therefore do not yield to usage in Cat2 techniques (and hence in developing efficient practical algorithms for approximate model counting).

\section{Main Results}\label{sec:mainresults}
To accomplish the design of scalable approximate counters via sparse hashing, we follow a three step recipe: (i) derive an expression to bound the dispersion index (of the random variable $\Cnt{S}{m}$) via boolean functional analysis and isoperimetric inequalities, (ii) construct a sparse $(\rho,k)$-concentrated hash family and (iii) design an approximate model counter which can take advantage of concentrated hashing. In this section, we highlight our strategy, the core ideas involved and the main theorem statements. 

\subsection{Bounding the Dispersion Index}
The first step is to obtain a closed form expression for the upper bound on dispersion index for an arbitrary set $S \subseteq \{0,1\}^n$. To this end, we focus on obtaining an expression that depends on $n$, $|S|$ and the range of hash function, i.e., $m$ for $h^{(m)}$. 

For $1\leq i\leq n-1$, $p_i\in(0,\frac{1}{2}]$, consider the family $\mathcal{H}_{\{p_i\}} (n,n) \triangleq \{ h:\{0,1\}^{n} \rightarrow \{0,1\}^n \}$ of functions of the form $h(x) = \myMatrix{A}x+\myMatrix{b}$ with $\myMatrix{A} \in \mathbb{F}_{2}^{n \times n}$ and $\myMatrix{b} \in \mathbb{F}_{2}^{n \times 1}$ where the entries of $\myMatrix{A}[i]$ (for $1\leq i\leq n$) and $\myMatrix{b}$ are independently generated according to Bern($p_i$) and Bern($\frac{1}{2}$) respectively.	
For $1 \leq m \leq n$, let 
\begin{align*}
q(w,m)  &= \prod_{j=1}^{m}\left(\frac{1}{2}+\frac{1}{2}(1-2p_j)^w\right) 
 \\
r(w,m) &= q(w,m)- \frac{1}{2^m} 
\end{align*}
Note that $r(w,m)$ is a decreasing function of $w$ for a fixed $m$. With this we have the following bound on the dispersion index, which is one of the main technical contributions of this paper, of possible independent interest.

 \begin{restatable}{theorem}{mainexp}
    \label{thm:main-dispexpression}
For $1 \leq m \leq n$, $S \subseteq \{0,1\}^n$,  $\frac{\sigma^2[\Cnt{S}{m}]}{\expect[\Cnt{S}{m}]} \leq 
\sum\limits_{w=0}^{\ell} 2 \cdot \left(\frac{8e \sqrt{n\cdot\ell}}{w}\right)^w r(w,m)$ where $\ell = \lceil \log |S| \rceil$. 
\end{restatable}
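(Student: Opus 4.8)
The plan is to reduce the statement to a second-moment computation and then to a purely combinatorial bound on the distance distribution of $S$, for which I would invoke an isoperimetric inequality on the hypercube. First I would write $\Cnt{S}{m} = \sum_{x \in S} \mathbf{1}[h^{(m)}(x) = \alpha^{(m)}]$ and compute the first two moments directly from $h^{(m)}(x) = \myMatrix{A}^{(m)}x + \myMatrix{b}^{(m)}$. Since every entry of $\myMatrix{b}$ is $\mathrm{Bern}(1/2)$ and independent of $\myMatrix{A}$, each coordinate of $h^{(m)}(x)$ is uniform on $\{0,1\}$, so $\expect[\mathbf{1}[h^{(m)}(x)=\alpha^{(m)}]] = 2^{-m}$ and $\expect[\Cnt{S}{m}] = |S|/2^m$. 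For a pair $x,y$ with $w = \wt(x\oplus y)$, in coordinate $j$ the two constraints are jointly satisfiable iff $\langle \myMatrix{A}[j], x\oplus y\rangle = 0$; by the parity identity (each of the $w$ differing coordinates contributes an independent $\mathrm{Bern}(p_j)$ term with $\expect[(-1)^{\mathrm{Bern}(p_j)}] = 1-2p_j$) this has probability $\tfrac12 + \tfrac12(1-2p_j)^w$, and conditioned on it $\myMatrix{b}[j]$ fixes the common value with probability $1/2$. Multiplying over the $m$ independent coordinates yields $\expect[\mathbf{1}[h^{(m)}(x)=\alpha^{(m)}]\,\mathbf{1}[h^{(m)}(y)=\alpha^{(m)}]] = 2^{-m} q(w,m)$, hence $\mathrm{Cov} = 2^{-m} r(w,m)$.

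Summing the covariances over all ordered pairs, I would obtain the exact identity
\begin{align*}
\frac{\sigma^2[\Cnt{S}{m}]}{\expect[\Cnt{S}{m}]} = \frac{1}{|S|}\sum_{x,y \in S} r(\wt(x\oplus y), m) = \frac{1}{|S|}\sum_{w=0}^{n} N_w \, r(w,m),
\end{align*}
where $N_w = |\{(x,y) \in S \times S : \wt(x\oplus y) = w\}|$ counts ordered pairs at Hamming distance exactly $w$. This converts the probabilistic claim into a question about the distance distribution of $S$. As a sanity check, the $w=0$ term (with $N_0 = |S|$) reproduces Proposition~\ref{lm:universal-bound}, and for $p_j = 1/2$ all terms $w \geq 1$ vanish since then $r(w,m)=0$, so the entire burden in the sparse regime falls on bounding the close pairs $N_w$ for $w \geq 1$.

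The combinatorial heart is to bound $\sum_w N_w\, r(w,m)$ for arbitrary small $S$, exploiting that $r(\cdot,m)$ is nonnegative and decreasing in $w$. Here I would appeal to an isoperimetric/compression result on $\{0,1\}^n$ (from \cite{BR17,R18}): among sets of a given size, those maximizing the number of close pairs are Hamming-ball-like, which gives a per-level bound of the form $N_w \leq 2\,|S|\,(8e\sqrt{n\ell}/w)^w$ for $1 \le w \le \ell$, the pairs at distance exceeding $\ell$ contributing negligibly. The intuition for the $\sqrt{n\ell}$ factor is transparent on a ball of radius $\rho \lesssim \ell$: a pair at distance $w$ is obtained by flipping roughly $w/2$ ones and $w/2$ zeros, giving about $\binom{\rho}{w/2}\binom{n}{w/2} \approx (\sqrt{n\rho}/w)^{w} \le (\sqrt{n\ell}/w)^{w}$ choices per point, so the geometric mean $\sqrt{n\rho}$ appears rather than the naive $n$ coming from $\binom{n}{w}$. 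Plugging this into the displayed identity, using monotonicity of $r$ to truncate the sum at $w=\ell$, and dividing by $|S|$ yields the claimed inequality.

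The main obstacle is this isoperimetric step: making the compression argument rigorous so that $N_w$ for the given $S$ is genuinely dominated by its value on the extremal ball, pinning down the absolute constant ($8e$), and — most delicately — controlling the tail $\sum_{w > \ell} N_w\, r(w,m)$ so that the sum can be cleanly cut off at $w=\ell$. The preceding steps are routine moment calculations; essentially all the real work, and the promised link to Boolean functional analysis, lives in this distance-distribution bound.
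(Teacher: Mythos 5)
Your proposal follows essentially the same route as the paper: the identical second-moment computation yielding $\sigma^2[\Cnt{S}{m}] = 2^{-m}\sum_{w} c_S(w)\, r(w,m)$ (so the dispersion index equals $\frac{1}{|S|}\sum_w c_S(w) r(w,m)$, exactly your displayed identity), followed by the same appeal to the compression/isoperimetric results of \cite{BR17,R18} to obtain $c_S(w) \leq 2\,|S|\left(\frac{8e\sqrt{n\ell}}{w}\right)^{w}$. The one step you flag as the remaining obstacle is precisely what the paper supplies: it proves via down-set and left-compression operators that $\sum_w c_S(w)t(w)$ is maximized over sets of a given size by a left-compressed down-set for any non-increasing $t$ (applied with $t = r(\cdot,m)$), and then invokes Rashtchian's per-level pair counts for such sets rather than arguing directly about Hamming balls.
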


 A key ingredient of the proof is to relate the dispersion index (and the variance) of $\Cnt{S}{m}$ to the Hamming distance between nodes of $S$. This allows us to show that the dispersion index is in fact maximized for a nicely behaved set (formally, a left compressed down set as formalized in Section~\ref{sec:concentrated}). Now we invoke deep results from boolean functional analysis and isoperimetric inequalities~\cite{BR17,R18,RR19}, to bound the maximum value of the dispersion index.
 
We remark that the best known bounds for the dispersion index from prior work so far has been: for any $S \subseteq \{0,1\}^n$ , $\frac{\sigma^2[\Cnt{S}{m}]}{\expect[\Cnt{S}{m}]} \leq \sum\limits_{w=0}^{\ell} {n \choose w} q(w,m)$. Since $\left(\frac{8e \sqrt{n\cdot\ell}}{w}\right)^w \leq \cdot {8e \sqrt{n\cdot\ell} \choose w}$, we obtain an improvement from ${n \choose w}$  to $2\cdot {8e \sqrt{n\cdot\ell} \choose w}$.  This improvement combined with our new analysis of the bounds leads us to design sparse hash family without incurring large overhead. It is also worth pointing out that prior work has always upper bounded $r(w,m)$ by $q(w,m)$ but as our analysis in the next section shows, we obtain stronger bounds on the dispersion index due to careful manipulation of $r(w,m)$. 

\subsection{Construction of Sparse Concentrated Hash Family}
The upper bound on dispersion index provided by Theorem~\ref{thm:main-dispexpression} depends on $|S|$, and therefore we turn to the notion of concentrated family for construction of sparse hash functions to capture dependence on $|S|$. To bound the dispersion index, we seek to increase the rate of decrease of the values of $r(w,m)$ with respect to $m$. To this end, we propose a hash family with varying density across different rows of the matrix. 
\begin{definition}
  Let $k,n\in \mathbb{N}$ and let $H^{-1}: [0,1] \rightarrow [0,\frac{1}{2}]$ be the inverse binary entropy function restricting its domain to $[0,\frac{1}{2}]$ so that the inverse is well defined. We then define $\Hrennes{k}(n,n) \triangleq \{ h:\{0,1\}^{n} \rightarrow \{0,1\}^n \}$ to be the family of functions of the form $h(x) = \myMatrix{A}x+\myMatrix{b}$ with $\myMatrix{A} \in \mathbb{F}_{2}^{n \times n}$ and $\myMatrix{b} \in \mathbb{F}_{2}^{n \times 1}$ where the entries of $\myMatrix{A}[i]$ (for $1\leq i\leq n$) and $\myMatrix{b}$ are independently generated according to Bern($p_i$) and Bern($\frac{1}{2}$) respectively, where 
  $p_i \geq min(\frac{1}{2}, \frac{16}{H^{-1}(\delta)}\cdot \frac{\log_2 i}{i})$ for $\delta = \frac{i}{i+\log_2 k}$, and for $1\leq i\leq n-1$, $p_i\geq p_{i+1}, p_i\in (0,\frac{1}{2}]$. 
\end{definition}

It is worth observing that $\Hrennes{}$ marks a significant departure from prior families in the behavior of the density dependent on rows of the matrix $\myMatrix{A}$.  The sparsity of $\Hrennes{}$ is discussed in detail Section~\ref{sec:opt} showing that for even small $i$, $p_i$ can be set to values significantly smaller than $\frac{1}{2}$. 

\begin{restatable}{theorem}{conchash}
  \label{thm:main-rennes}
	 For $1\leq  m \leq n$, let  $h \xleftarrow{R} \Hrennes{k}$, $S \subseteq \{0,1\}^{n}$, $\Cell{S}{m} = \{y\in S \mid h^{(m)}(y) = \alpha^{(m)} \}$, $|S| \leq 2^m k$ for some $\alpha\in \{0,1\}^m$. Then for every value of $k>1$ and $\rho>1$, there exists $\qs\leq n$ such that for all $m$ with $\qs\leq m\leq n$, we have 
	\begin{align}
	E[\Cnt{S}{m}] = \frac{|S|}{2^m} \\
	\frac{\sigma^2[\Cnt{S}{m}]}{E[\Cnt{S}{m}]} \leq \rho
	\end{align} 
\end{restatable}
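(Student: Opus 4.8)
The plan is to instantiate Theorem~\ref{thm:main-dispexpression} with the specific density schedule of $\Hrennes{k}$ and show that, for $m$ large enough, the resulting sum is bounded by $\rho$. The starting point is the bound
\[
\frac{\sigma^2[\Cnt{S}{m}]}{\expect[\Cnt{S}{m}]} \leq \sum_{w=0}^{\ell} 2 \left(\frac{8e\sqrt{n\cdot\ell}}{w}\right)^w r(w,m),
\]
with $\ell = \lceil \log|S| \rceil$. Since $|S| \leq 2^m k$, we have $\ell \leq m + \log_2 k$, which couples the number of summands to $m$. The expectation equality $\expect[\Cnt{S}{m}] = |S|/2^m$ is immediate from linearity once each solution maps to $\alpha^{(m)}$ with probability exactly $2^{-m}$ (this uses that the offset $\myMatrix{b}$ has density $1/2$, so each coordinate of $h^{(m)}$ is uniform regardless of the row density $p_i$), so the real work is entirely in the dispersion-index sum.

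The key step is to control each term $r(w,m) = \prod_{j=1}^m\bigl(\tfrac12 + \tfrac12(1-2p_j)^w\bigr) - 2^{-m}$ using the chosen $p_j$. The factor $(1-2p_j)^w$ is where the density schedule $p_j \geq \frac{16}{H^{-1}(\delta)}\cdot\frac{\log_2 j}{j}$ enters: I would first establish a clean per-row inequality of the form $\tfrac12 + \tfrac12(1-2p_j)^w \leq \tfrac12(1 + e^{-2p_j w})$ and then bound $e^{-2p_j w}$ using the explicit $p_j$. The whole point of the $\log_2 j / j$ shape is that $p_j w$ grows fast enough (in $j$) that the product $\prod_{j=1}^m(1-2p_j)^w$ decays super-exponentially in $m$ for each fixed $w \leq \ell$, while the combinatorial prefactor $\bigl(\tfrac{8e\sqrt{n\ell}}{w}\bigr)^w$ grows only like $n^{w/2}$ up to lower-order factors. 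The inverse-entropy normalization $H^{-1}(\delta)$ with $\delta = \frac{j}{j+\log_2 k}$ is precisely calibrated so that, after taking logarithms, the negative contribution from the product dominates the positive contribution from the prefactor; I expect the target decay to be encoded through a binary-entropy identity relating $\bigl(\frac{C\sqrt{n\ell}}{w}\bigr)^w$ to $2^{n\cdot H(w/n)}$-type bounds, which is exactly why $H^{-1}$ appears in the definition. One then sums a geometrically (or faster) decaying series over $w$ from $0$ to $\ell$ and checks it stays below the prescribed $\rho$.

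The main obstacle will be the threshold argument: I must exhibit a $\qs \leq n$ such that the bound holds \emph{simultaneously} for all $m$ with $\qs \leq m \leq n$, rather than for a single $m$. The difficulty is that increasing $m$ cuts two ways — it adds more factors to the product (pushing $r(w,m)$ down, which helps) but it also increases $\ell \leq m + \log_2 k$, adding more terms to the sum and enlarging the prefactor $\sqrt{n\ell}$ (which hurts). I would resolve this by showing the per-term decay in $m$ strictly beats the per-term growth, so that the whole sum is \emph{monotonically} controlled once $m$ exceeds a threshold depending only on $n$, $k$, and $\rho$; concretely, I would isolate the dominant term (likely $w=\ell$ or a nearby $w$) and show it already falls below $\rho$ for $m \geq \qs$, with the remaining terms summing to a negligible tail. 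Establishing this threshold cleanly — and verifying $\qs \leq n$ so the statement is non-vacuous for every $k>1$, $\rho>1$ — is where the careful manipulation of $r(w,m)$ promised in the preceding discussion does its real work, and is the step I would spend the most care on.
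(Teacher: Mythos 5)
Your starting point (Theorem~\ref{thm:main-dispexpression} with $\ell \leq m + \log_2 k$, the expectation via uniformity of $\myMatrix{b}$, and the per-row bound $\tfrac12 + \tfrac12(1-2p_j)^w \leq \tfrac12(1+e^{-2p_jw})$) matches the paper, and your ``prefactor versus decay'' mechanism is essentially the paper's treatment of small $w$. But there is a genuine gap: you propose to run this single mechanism uniformly over all $w \in \{1,\dots,\ell\}$ and to isolate a dominant term near $w = \ell$. This fails for large $w$. At $w = \ell \approx m + \log_2 k$ the prefactor $\bigl(\tfrac{8e\sqrt{n\ell}}{w}\bigr)^w \geq (8e)^{\ell} \geq 2^{4m}$ is exponentially large in $m$, whereas $r(w,m)$ drops below $2^{-m}$ by only a polynomial factor: $r(w,m) \approx 2^{-m}\sum_j (1-2p_j)^w \approx 2^{-m}\, m\, e^{-2p_m w}$, and with $p_m = \Theta(\log_2 m/m)$ and $w = \Theta(m)$ this is $2^{-m} m^{O(1)}$. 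The product of prefactor and $r(w,m)$ therefore blows up rather than decays, and no choice of $\qs$ rescues a term-by-term comparison in that regime. Note also that the clean inequality $\log_2(1+e^{-x}) \leq 1 - \tfrac{x}{2}$ that drives your per-row estimate is only valid for $x \leq 1$, i.e., for $w \leq (2p_m)^{-1}$, so it cannot be stretched to cover large $w$ anyway; and your conjectured role for $H^{-1}$ (an entropy identity for the prefactor) is not how the definition is actually used.

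The paper's proof avoids this by splitting into three ranges of $w$ and, crucially, by \emph{abandoning the prefactor bound} in the large-$w$ regime $w \geq \tfrac{m H^{-1}(\delta)}{16}$ --- this is exactly where $H^{-1}(\delta)$ enters: it guarantees $2 p_m w \geq 2\log_2 m$ there. In that regime it uses the trivial aggregate bound $\sum_{w} c_S(w) \leq 2^{\ell} \leq 2^m k$ together with the sharpened estimate $r(w,m) \leq 2\cdot 2^{-m}/m$ from Claim~\ref{lm:rbound}; the extra factor $1/m$ is precisely why one must work with $r(w,m)$ rather than $q(w,m)$, and it yields a contribution of $2k/m$, controlled by taking $m \geq \qs_3 = \tfrac{2k}{\rho-1}$. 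The small range $w \leq (2p_m)^{-1}$ and an intermediate range are handled with separate thresholds $\qs_1, \qs_2$, and the final threshold is $\qs = \max\{\qs_1,\qs_2,\qs_3\}$. You would need to add this case analysis --- in particular the switch away from the prefactor for large $w$ --- for your plan to go through.
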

\begin{corollary}
$\Hrennes{k}$ is prefix-$(\rho,\qs,k)$-concentrated.
  \end{corollary}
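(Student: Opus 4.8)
The plan is to establish the two displayed equations of Theorem~\ref{thm:main-rennes} for $h \xleftarrow{R} \Hrennes{k}$; the Corollary is then immediate, since these are precisely the two defining conditions of a prefix-$(\rho,\qs,k)$-concentrated family. The expectation identity needs no sparsity at all: for any fixed $y$ we have $h^{(m)}(y)=\myMatrix{A}^{(m)}y+\myMatrix{b}^{(m)}$, and since the entries of $\myMatrix{b}^{(m)}$ are independent $\mathrm{Bern}(1/2)$ bits, $h^{(m)}(y)$ is uniform on $\{0,1\}^m$ irrespective of the $p_i$. Hence $\prob[h^{(m)}(y)=\alpha^{(m)}]=2^{-m}$ for every $y$, and linearity of expectation yields $\expect[\Cnt{S}{m}]=|S|/2^m$. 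Everything substantive therefore reduces to the dispersion bound.

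For the dispersion index I would use the variance decomposition underlying Theorem~\ref{thm:main-dispexpression}, but treat the diagonal ($w=0$) part by hand rather than use the stated sum verbatim: the diagonal contributes exactly $(1-2^{-m})<1$, so it suffices to control the off-diagonal part. Concretely, applying the isoperimetric bound only to the pairs at Hamming distance $w\ge 1$, it is enough to show
\begin{align*}
T(m)\;:=\;\sum_{w=1}^{\ell} 2\left(\frac{8e\sqrt{n\ell}}{w}\right)^{w} r(w,m)\;\leq\;\rho-1,\qquad \ell=\lceil\log|S|\rceil,
\end{align*}
for all $m$ with $\qs\leq m\leq n$ and all $S$ with $|S|\le 2^m k$ (so that $\ell\le m+\log_2 k$). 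Since $\rho>1$, the target $\rho-1$ is a fixed positive constant, and the whole task is to drive $T(m)$ below it by pushing $m$ past a threshold $\qs$.

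The crux is a uniform estimate of $r(w,m)\le q(w,m)$. Writing each factor as $\tfrac12+\tfrac12(1-2p_j)^w = 1-\tfrac12\big(1-(1-2p_j)^w\big)$ and using $1-u\le e^{-u}$ gives
\begin{align*}
q(w,m)\;\leq\;\exp\!\left(-\tfrac12\sum_{j=1}^{m}\big(1-(1-2p_j)^{w}\big)\right).
\end{align*}
I would lower-bound this exponent using the densities of $\Hrennes{k}$: rows $j$ with $2p_jw\gtrsim 1$ each contribute $\Omega(1)$, and since $p_j\ge\frac{16}{H^{-1}(\delta_j)}\frac{\log_2 j}{j}\ge 32\frac{\log_2 j}{j}$, the number of such rows among the first $m$ grows with $w$. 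The goal is to make the exponent dominate $w\ln\frac{8e\sqrt{n\ell}}{w}$, so that $\big(\tfrac{8e\sqrt{n\ell}}{w}\big)^w q(w,m)\le\theta^w$ for a constant $\theta<1$ depending on $\rho$; then $T(m)\le\sum_{w\ge1}2\theta^w=\frac{2\theta}{1-\theta}\le\rho-1$ once $\theta$ is small. This is exactly what the inverse-entropy calibration buys: comparing the coefficient through $\binom{8e\sqrt{n\ell}}{w}\le 2^{8e\sqrt{n\ell}\,H(w/(8e\sqrt{n\ell}))}$ against the decay of $q(w,m)$, the choice $\delta_i=\frac{i}{i+\log_2 k}$ (so that at $i=m$ one has $\delta_m\approx m/\ell$) together with the factor $16/H^{-1}(\delta_i)$ is engineered so the two exponents cancel uniformly in $w$.

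Finally I would fix $\qs$ explicitly. For fixed $w$, $q(w,m)$ decays in $m$ (the product ranges over more heavy rows) while the coefficient grows only like a power of $n$, so $T(m)$ eventually falls below $\rho-1$; a threshold $\qs$ that is sub-polynomial in $n$ (of order roughly $2^{\Theta(\sqrt{\log n})}$), hence at most $n$ for large $n$, suffices. I expect the main obstacle to be the regime where $w$ is comparable to $\ell$ (hence to $m$): there one cannot locate $\Omega(w)$ heavy rows among the first $m$, so the crude ``count the heavy rows'' argument fails and one must use the full entropy estimate, balancing $2^{8e\sqrt{n\ell}H(\cdot)}$ against $q(w,m)$ through $H^{-1}(\delta_m)$; securing a single $\qs$ that works simultaneously for all $1\le w\le\ell$ and all admissible $S$ is precisely where the careful design of $p_i$ is needed. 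Once $T(m)\le\rho-1$ holds for $m\in[\qs,n]$, combining it with the diagonal term $(1-2^{-m})<1$ gives dispersion index at most $\rho$, and the Corollary follows by the definition of prefix-$(\rho,\qs,k)$-concentration.
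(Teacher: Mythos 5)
Your reduction to bounding the off-diagonal sum $T(m)$ is the right frame, and the expectation identity is handled exactly as in the paper. But there is a genuine gap in the core of your argument: you propose to control every term via $r(w,m)\le q(w,m)$ and then force geometric decay $\bigl(\tfrac{8e\sqrt{n\ell}}{w}\bigr)^w q(w,m)\le\theta^w$ uniformly in $w$. This cannot work in the large-$w$ regime. Since each factor of $q(w,m)$ is at least $\tfrac12$, one always has $q(w,m)\ge 2^{-m}$, whereas your target $\theta^w$ at $w\approx\ell\approx m$ would require $q(w,m)\le\bigl(\theta\sqrt{\ell/n}/(8e)\bigr)^{\ell}$, which is far below $2^{-m}$ whenever $n\gg\ell$. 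Moreover, keeping the isoperimetric coefficient $\bigl(\tfrac{8e\sqrt{n\ell}}{w}\bigr)^w$ for $w$ comparable to $\ell$ is itself counterproductive: for such $w$ this coefficient exceeds the trivial bound $\sum_w c_S(w)\le 2^{\ell}\cdot|S|$. You correctly flag this regime as the "main obstacle," but the entropy-balancing fix you sketch still goes through $q(w,m)$ and so runs into the same $2^{-m}$ floor.

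The paper's proof of Theorem~\ref{thm:main-rennes} resolves exactly this point by a three-case split on $w$. In the small and intermediate ranges it argues roughly as you do (bounding $\log r(w,m)\le -m+m\log(1+e^{-2p_mw})$ and showing the coefficient is dominated). But for $w\ge\tfrac{mH^{-1}(\delta)}{16}$ it abandons the per-$w$ isoperimetric coefficient in favour of the trivial count $\sum_{w}c_S(w)\le 2^{\ell}$, and -- crucially -- it exploits the subtraction in $r(w,m)=q(w,m)-2^{-m}$ to prove $\log_2 r(w,m)\le -m+1-\log_2 m$ (Claim~\ref{lm:rbound}), i.e.\ a bound a factor of $m$ \emph{below} $2^{-m}$. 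This extra $1/m$ is what makes the contribution $2^{\ell}\cdot\tfrac{2\cdot 2^{-m}}{m}=\tfrac{2k}{m}$ vanish as $m$ grows; the paper explicitly remarks that replacing $r$ by $q$ here only yields $\log_2 q(w,m)<-m+1$, which is insufficient. Your proposal is missing both of these ingredients, so the large-$w$ case does not close. (Separately, your guess that $\qs$ can be taken of order $2^{\Theta(\sqrt{\log n})}$ is not what the paper obtains: its $\qs_1$ is of order $n^{1/16}$ and $\qs_3=\tfrac{2k}{\rho-1}$, so the threshold is polynomial in $n$ and depends on $k$ and $\rho$.)
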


The proof begins with the expression stated in Theorem~\ref{thm:main-dispexpression} and is based on analysis of dispersion index by considering separate cases for different sets of values of $w$. The case analysis especially for large values of $w$ turns out to be rather technical and uses the properties of distribution of binomial coefficients and Taylor expansion of $r(w,m)$, as detailed in Section~\ref{sec:rennes}.

\subsection{Approximate Model Counting using Concentrated Hashing}
As noted in Section~\ref{sec:prelims}, the usage of $(\rho,\qs, k)$-concentrated family does present the challenge of identification of application domains where such hash functions suffice. Typical usage of hash functions does not put restrictions on the size of the underlying set $S$ whose elements are being hashed. For example, the standard proofs of hashing-based counting techniques employ hash functions in the context where there is no reasonable upper bound on $|S|$.  Therefore, one wonders whether it is possible to design hashing-based counting techniques which can use concentrated hash functions without assuming an upper bound on $|S|$. 

We answer the above question positively in the third and final technical contribution of this paper with the design of approximate model counter with rigorous $(\varepsilon,\delta)$ guarantees {\SparseScalMC}, which employs a prefix $(\rho, \qs, \pivot)$-concentrated hash family instead of a strongly 2-universal hash family. 
\begin{restatable}{theorem}{scalmcthm}
    \label{thm:main-scalmc}
	For input formula $F$, tolerance parameter $\varepsilon$, confidence parameter $\delta$, and concentrated hashing parameters $\rho$ and $\qs$,  suppose ${\SparseScalMC}(F,\varepsilon,\delta,\rho)$ uses a prefix $(\rho, \qs,  \pivot)$-concentrated hash family  with the value of $\pivot = 78.72\cdot \rho(1+\frac{1}{\varepsilon})^2$ and returns an estimate $c$. 
Then,  
		$\Pr\left[\frac{|\satisfying{F}|}{1+\varepsilon} \leq c \right.$ $\left. \leq
	(1+\varepsilon)|\satisfying{F}|\right] \geq 1- \delta$. Furthermore, {\SparseScalMC} makes $ \mathcal{O}(2^{\qs+3} + \frac{\log(n) \log (1/\delta)}{\varepsilon^2})$ calls to a SAT-oracle. 
\end{restatable}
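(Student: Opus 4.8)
The plan is to follow the two-level structure pioneered by {\ApproxMC} and its successors: a \emph{core} routine {\SparseScalMCCore} that returns a single estimate which is correct with probability just above $\frac{1}{2}$, wrapped in an outer loop that invokes the core $t = \mathcal{O}(\log(1/\delta))$ times and returns the median. The median of independent runs then drives the overall failure probability below $\delta$ by a standard Chernoff argument, so the heart of the proof is to establish that a single core invocation returns a $(1+\varepsilon)$-estimate with probability strictly greater than $\frac{1}{2}$. The core performs a galloping/binary search ({\FibBinSearch}) over the number of prefix hash constraints $m \in \{1,\dots,n\}$, stopping at the first slice where the cell count $\Cnt{F}{m}$ drops below the threshold $\pivot = 78.72\,\rho(1+\tfrac{1}{\varepsilon})^2$ and reporting $\Cnt{F}{m}\cdot 2^m$ as its estimate.

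First I would dispose of the \emph{base case}. Since the concentrated family only guarantees good behaviour at slices $m \geq \qs$, the algorithm must treat small solution sets separately: before hashing, it enumerates witnesses directly up to a cutoff of order $\pivot\cdot 2^{\qs}$, and if $\satisfying{F}$ is exhausted within this budget it returns the exact count. This branch is trivially correct and accounts for the additive $\mathcal{O}(2^{\qs+3})$ term in the oracle-call bound. In the complementary regime one knows $|\satisfying{F}| > \pivot\cdot 2^{\qs}$, which is exactly what is needed to certify that the stopping level $m^\star$ located by the search satisfies both $m^\star \geq \qs$ and $|\satisfying{F}| \leq \pivot\cdot 2^{m^\star}$ --- the two preconditions under which the prefix $(\rho,\qs,\pivot)$-concentrated property supplies $\expect[\Cnt{F}{m}] = |\satisfying{F}|/2^m$ and dispersion index at most $\rho$.

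The core correctness argument then pins down a target level $m^\star$ at which $\expect[\Cnt{F}{m^\star}]$ lands in a window around $\pivot$, and bounds the probability that the search either stops too early (yielding an overestimate) or too late (an underestimate). For stopping too early I would invoke the two-sided Chebyshev bound of Proposition~\ref{prop:probBounds}(1), and for stopping too late the Paley--Zygmund bound of Proposition~\ref{prop:probBounds}(2); the monotonicity of $\Cnt{S}{m}$ in $m$ (Proposition~\ref{prop:prefix-monotonicity}) lets me control the entire search by analysing only the $\mathcal{O}(1)$ slices adjacent to $m^\star$ rather than all $n$ of them. The constant $78.72$ in $\pivot$ is chosen so that, after substituting the concentration parameter $\rho$ and taking a union bound over these few bad events, the total failure probability of a core run drops below $\frac{1}{2}$.

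The main obstacle is precisely the size-dependence of concentrated hashing: unlike with $2$-universal families, Proposition~\ref{prop:probBounds} is available only when $|S| \leq \pivot\cdot 2^m$, so I must certify that every slice the search actually \emph{relies upon} lives in this valid regime. This is where the base-case split and the monotonicity of cell counts do the real work, ensuring the search never invokes a concentration bound outside its domain of validity. Once correctness is in hand, the oracle-call count is routine: the base case costs $\mathcal{O}(2^{\qs+3})$ enumeration calls; each of the $\mathcal{O}(\log(1/\delta))$ core runs performs $\mathcal{O}(\log n)$ search steps; and each step spends $\mathcal{O}(\pivot) = \mathcal{O}(\rho/\varepsilon^2)$ oracle calls to test whether a cell is small, yielding the claimed $\mathcal{O}\!\left(2^{\qs+3} + \frac{\log n\,\log(1/\delta)}{\varepsilon^2}\right)$ bound.
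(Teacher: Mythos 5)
Your proposal follows essentially the same route as the paper: a base-case enumeration up to a cutoff of order $\pivot\cdot 2^{\qs}$ to guarantee the search only ever invokes concentration bounds at slices $m\geq\qs$ with $|\satisfying{F}|$ in the valid regime, a reduction via monotonicity of $\Cnt{F}{m}$ to a constant number of slices around the target level $m^*$ (the paper uses exactly the four events $T_{m^*-3}, L_{m^*-2}, L_{m^*-1}, L_{m^*}\cup U_{m^*}$), Chebyshev and Paley--Zygmund bounds from Proposition~\ref{prop:probBounds} plus a union bound to push the per-run failure below $\tfrac12$, and median-of-$\mathcal{O}(\log(1/\delta))$ Chernoff amplification, with the same accounting for the oracle-call count. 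The only imprecision is at the level of constants (the paper's stopping threshold is $\hiThresh\approx\pivot/8$ rather than $\pivot$, and one needs $m^*-3\geq\qs$ rather than $m^*\geq\qs$), which your requirement that \emph{every} slice the search relies upon lie in the valid regime already captures in spirit.
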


{\SparseScalMC} builds on the earlier algorithm \ScalApproxMC~\cite{CMV16,SM19}, but differs in the crucial use of a sparse hash family instead of a 2-universal hash family. This essentially requires us to rework the entire theoretical guarantees, which we do in Section~\ref{sec:sparsescalmc}.

Finally, in Section~\ref{sec:evaluation}, we evaluate the performance of {\SparseScalMC} using the sparse hash functions belonging to prefix $(1.1, 1,  \pivot)$-concentrated hash family and demonstrate that it leads to significant speedup in runtime over {\ScalApproxMC}. To the best of our knowledge,  this work is the first study to demonstrate runtime improvement using sparse hash functions without loss of $(\varepsilon,\delta)-$guarantees.

\section{Bounding the dispersion index}
\label{sec:concentrated}

In this section, we prove Theorem~\ref{thm:main-dispexpression}. Recall that for $1\leq i\leq n-1$, $p_i\in(0,\frac{1}{2}]$, $\mathcal{H}_{\{p_i\}} (n,n) \triangleq \{ h:\{0,1\}^{n} \rightarrow \{0,1\}^n \}$ denotes the family of functions of the form $h(x) = \myMatrix{A}x+\myMatrix{b}$ with $\myMatrix{A} \in \mathbb{F}_{2}^{n \times n}$ and $\myMatrix{b} \in \mathbb{F}_{2}^{n \times 1}$ where the entries of $\myMatrix{A}[i]$ (for $1\leq i\leq n$) and $\myMatrix{b}$ are independently generated according to Bern($p_i$) and Bern($\frac{1}{2}$) respectively.
Our first step is to compute the mean and bound the variance of $\Cnt{S}{m}$. We start with a known result and a definition.

\begin{lemma}~\cite{M99,AD16}\label{lem:prob}
	For all $\tau \in \{0,1\}^n$, we have
	\begin{align*} 
	  Pr (\myMatrix{A}^{(m)}\tau= \mathbf{0})= q(w,m)
	\end{align*}
	where $w=w(\tau)$ is the Hamming weight of $\tau$ (note that $0^0=1$).
\end{lemma}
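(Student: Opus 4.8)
The plan is to exploit the row-independence of $\myMatrix{A}^{(m)}$ to factor the target probability into a product over the $m$ rows, and then to evaluate each single-row factor by a parity-of-Bernoullis computation. Concretely, the condition $\myMatrix{A}^{(m)}\tau = \mathbf{0}$ over $\mathbb{F}_2$ is the conjunction, over each row index $j \in \{1,\dots,m\}$, of the scalar equations $\myMatrix{A}[j]\cdot \tau = 0$. Since the entries of distinct rows of $\myMatrix{A}$ are drawn independently, these $m$ events are mutually independent, so
\begin{align*}
\prob\left(\myMatrix{A}^{(m)}\tau = \mathbf{0}\right) = \prod_{j=1}^{m} \prob\left(\myMatrix{A}[j]\cdot \tau = 0\right).
\end{align*}
Thus it suffices to show that each factor equals $\frac{1}{2} + \frac{1}{2}(1-2p_j)^w$, after which the product is exactly $q(w,m)$ by definition.

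For the single-row factor, I would fix $j$ and write $p = p_j$. Let $I = \{i : \tau_i = 1\}$, so $|I| = w$, and observe that $\myMatrix{A}[j]\cdot \tau = \bigoplus_{i \in I} \myMatrix{A}[j][i]$ is an XOR of $w$ independent $\mathrm{Bern}(p)$ variables; entries outside $I$ are multiplied by $0$ and do not contribute. The event $\{\myMatrix{A}[j]\cdot\tau = 0\}$ is then precisely the event that an even number of these $w$ variables equal $1$. The cleanest way to extract this parity probability is the sign trick: writing $X_i$ for the relevant entries, independence gives
\begin{align*}
\expect\left[(-1)^{\sum_{i \in I} X_i}\right] = \prod_{i \in I} \expect\left[(-1)^{X_i}\right] = \left((1-p) - p\right)^{w} = (1-2p)^{w}.
\end{align*}
Since $(-1)^{\sum X_i}$ takes the value $+1$ on the even-parity event and $-1$ on its complement, the left-hand side equals $2\,\prob(\text{even parity}) - 1$, whence $\prob(\myMatrix{A}[j]\cdot\tau = 0) = \frac{1}{2} + \frac{1}{2}(1-2p)^w$, as claimed.

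Reassembling the $m$ factors yields $\prod_{j=1}^m \left(\frac{1}{2} + \frac{1}{2}(1-2p_j)^w\right) = q(w,m)$, completing the argument. There is no substantive obstacle here: this is a routine computation, and the only points that warrant care are (i) justifying that entries indexed outside the support of $\tau$ genuinely drop out, so that the single-row factor depends on $\tau$ only through its Hamming weight $w$, and (ii) the degenerate case $w = 0$ (i.e.\ $\tau = \mathbf{0}$), where the equation $\myMatrix{A}^{(m)}\mathbf{0} = \mathbf{0}$ holds with probability $1$; this matches the formula under the stated convention $0^0 = 1$, since then each factor is $\frac{1}{2} + \frac{1}{2}(1-2p_j)^0 = 1$. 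If one prefers to avoid the sign trick, the same single-row identity can be obtained by an elementary induction on $w$ using the recurrence that appending one more $\mathrm{Bern}(p)$ summand maps the even-parity probability $e$ to $(1-p)e + p(1-e)$, which has $\frac{1}{2} + \frac{1}{2}(1-2p)^w$ as its closed-form solution.
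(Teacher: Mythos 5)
Your proof is correct and complete, but it is worth noting that the paper itself never proves this lemma: the statement is imported by citation from \cite{M99,AD16}, and the text appearing in the proof environment beneath the lemma is actually the argument for the \emph{following} proposition --- it computes $\expect[\Cnt{S}{m}]=\frac{|S|}{2^m}$ and reduces the double sum $\sum_{y_1,y_2\in S}\prob[h^{(m)}(y_1)=h^{(m)}(y_2)=\alpha^{(m)}]$ to $2^{-m}\sum_{x\in S}\sum_{w}c_S(w,x)\,q(w,m)$, invoking the present lemma as a black box for the probability that the $w$ columns of $\myMatrix{A}^{(m)}$ indexed by the support of $y_1-y_2$ sum to $\mathbf{0}$. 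So your proposal supplies exactly the derivation the paper delegates to its references. Your route --- factor over the $m$ independent rows, then evaluate each row factor by the sign trick $\expect\bigl[(-1)^{\sum_{i\in I}X_i}\bigr]=(1-2p_j)^w=2\,\prob(\text{even parity})-1$ --- is the classical piling-up-lemma computation and is precisely what the cited sources contain (your alternative induction via the recurrence $e\mapsto(1-p)e+p(1-e)$ is MacKay's own elementary version). The two details you flag are the right ones: entries outside the support of $\tau$ drop out so the row factor depends on $\tau$ only through $w$ (this uses that entries within row $j$ are i.i.d.\ Bern$(p_j)$), and the convention $0^0=1$ is needed only in the corner case $p_j=\frac{1}{2}$, $w=0$, where your formula still returns $1$ as required. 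No gaps.
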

\begin{proof}
	Since all the entries of b are chosen randomly with Bern($\frac{1}{2})$, for $y \in \{0,1\}^n$, we have $\prob[h^{(m)}(y) = \alpha^{(m)}] = \frac{1}{2^m}$, from which the expression for expectation follows. Now, for the variance we have
	$\sigma^2[\Cnt{S}{m}]  = 	  \sum_{y_1,y_2\in S}Pr [ h^{(m)}(y_1)=\alpha^{(m)}, h^{(m)}(y_2)=\alpha^{(m)}] - (\sum_{y \in S} Pr [h^{(m)}(y) = \alpha^{(m)}])^2$.
	
	\begin{align*}
	& \sum_{y_1,y_2\in S}Pr [ h^{(m)}(y_1)=\alpha^{(m)}, h^{(m)}(y_2)=\alpha^{(m)}] \\
	&= \sum_{y_1,y_2\in S}Pr [ h^{(m)}(y_1)=\alpha^{(m)} |  h^{(m)}(y_2)=\alpha^{(m)}] Pr [ h^{(m)}(y_2)=\alpha^{(m)}] \\
	&= \frac{1}{2^m} \sum_{y_1,y_2\in S} Pr [\myMatrix{A}^{(m)}y_1+b=\alpha^{(m)} |  \myMatrix{A}^{(m)}y_2+v=\alpha^{(m)}]\\
	&= \frac{1}{2^m} \sum_{y_1,y_2\in S}Pr[\myMatrix{A}^{(m)}(y_1-y_2)=0]
	\end{align*}
	
	where the randomness is over the choice of $\myMatrix{A}^{(m)}$. Now, $Pr[\myMatrix{A}^{(m)}(y_1-y_2)=0]$ depends on the Hamming weight $w$ of $y_1-y_2$ and is exactly the probability that the $w$ columns of $\myMatrix{A}^{(m)}$ corresponding to the bits in which $y_1$ and $y_2$ differ sum up to $0$ (mod 2). That is, 
	\begin{align*}
	\sum_{y_1,y_2\in S}Pr [ h^{(m)}(y_1)=\alpha^{(m)}, h^{(m)}(y_2)=\alpha^{(m)}] \\= 2^{-m} \sum_{x\in S}\sum_{w=0}^n c_S(w,x)q(w,m)
	\end{align*}
	where $c_S(w,x)$ is the number of vectors in $S$ that are at a Hamming distance of $w$ from $x$.
\end{proof}

We define $c_S(w,x) = |\{ y \mid y \in S, d(x,y) = w \}|$, i.e., the number of vectors in $S$ that are at a Hamming distance of $w$ from $x$. We also define $c_S(w) = |\{(x,y) \mid x \in S, y \in S, d(x,y)=w \}|$, i.e., the number of pairs of vectors in $S$ that are at Hamming distance $w$ from each other. Then we immediately obtain the following proposition (see~Appendix for details).
\begin{restatable}{proposition}{meansum}
\label{prop:mean-sum}The following expressions hold:
\begin{enumerate}
		\item 	$\expect[\Cnt{S}{m}] = \frac{|S|}{2^m}$
	        \item $	\sum_{y_1,y_2\in S}Pr [ h^{(m)}(y_1)= h^{(m)}(y_2)=\alpha^{(m)}] \\= 2^{-m} \sum_{x\in S}\sum_{w=0}^n c_S(w,x)q(w,m)$
\end{enumerate}
\end{restatable}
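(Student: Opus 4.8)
The plan is to establish both identities directly from the description of $\Cnt{S}{m}$ as a sum of indicator random variables, together with the independence of $\myMatrix{A}^{(m)}$ and $\myMatrix{b}^{(m)}$, thereby reducing everything to the single-pair collision probability supplied by Lemma~\ref{lem:prob}.

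For part (1), I would first write $\Cnt{S}{m} = \sum_{y\in S}\mathbb{1}[h^{(m)}(y)=\alpha^{(m)}]$ and apply linearity of expectation, so that it suffices to show $\prob[h^{(m)}(y)=\alpha^{(m)}] = 2^{-m}$ for each fixed $y$. Since $h^{(m)}(y)=\myMatrix{A}^{(m)}y+\myMatrix{b}^{(m)}$ and the entries of $\myMatrix{b}^{(m)}$ are i.i.d.\ Bern($1/2$) and independent of $\myMatrix{A}^{(m)}$, conditioning on any fixed realization of $\myMatrix{A}^{(m)}$ leaves $\myMatrix{A}^{(m)}y+\myMatrix{b}^{(m)}$ uniformly distributed over $\{0,1\}^m$. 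Hence the conditional probability is $2^{-m}$ regardless of $\myMatrix{A}^{(m)}$, and summing over the $|S|$ elements yields $\expect[\Cnt{S}{m}]=|S|/2^m$.

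For part (2), the key step is to compute the joint probability for a fixed ordered pair $(y_1,y_2)$ by conditioning on $\myMatrix{A}^{(m)}$. The two events $h^{(m)}(y_1)=\alpha^{(m)}$ and $h^{(m)}(y_2)=\alpha^{(m)}$ both constrain $\myMatrix{b}^{(m)}$, and they are jointly satisfiable (by a unique value of $\myMatrix{b}^{(m)}$) precisely when $\myMatrix{A}^{(m)}(y_1-y_2)=\mathbf{0}$; otherwise the joint event is empty. Thus, conditioned on $\myMatrix{A}^{(m)}$, the joint probability equals $2^{-m}\cdot\mathbb{1}[\myMatrix{A}^{(m)}(y_1-y_2)=\mathbf{0}]$, and taking expectation over $\myMatrix{A}^{(m)}$ gives $\prob[h^{(m)}(y_1)=h^{(m)}(y_2)=\alpha^{(m)}] = 2^{-m}\,\prob[\myMatrix{A}^{(m)}(y_1-y_2)=\mathbf{0}]$. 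Applying Lemma~\ref{lem:prob} with $\tau=y_1-y_2$, whose Hamming weight is $d(y_1,y_2)$, this equals $2^{-m}\,q(d(y_1,y_2),m)$.

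Finally I would sum over all ordered pairs and regroup: setting $x=y_2$ and partitioning the inner sum over $y_1\in S$ according to the distance $w=d(x,y_1)$, the number of $y_1$ at distance $w$ from $x$ is exactly $c_S(w,x)$, so $\sum_{y_1,y_2\in S} q(d(y_1,y_2),m) = \sum_{x\in S}\sum_{w=0}^{n} c_S(w,x)\,q(w,m)$, which gives the claimed identity after the factor $2^{-m}$. The only point requiring care is the conditioning argument in part (2): one must use the independence of $\myMatrix{A}^{(m)}$ and $\myMatrix{b}^{(m)}$ to see that fixing either one of the two events pins down $\myMatrix{b}^{(m)}$ and leaves the residual constraint $\myMatrix{A}^{(m)}(y_1-y_2)=\mathbf{0}$ as a function of $\myMatrix{A}^{(m)}$ alone, which is what makes Lemma~\ref{lem:prob} directly applicable. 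Beyond this bookkeeping I do not expect any genuine obstacle, as the statement is essentially a clean repackaging of the computations already appearing in the proof of Lemma~\ref{lem:prob}.
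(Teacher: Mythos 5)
Your proposal is correct and follows essentially the same route as the paper: part (1) via uniformity of $\myMatrix{b}^{(m)}$ and linearity of expectation, and part (2) by reducing the joint event for a pair $(y_1,y_2)$ to the event $\myMatrix{A}^{(m)}(y_1-y_2)=\mathbf{0}$, invoking Lemma~\ref{lem:prob}, and regrouping the double sum by Hamming distance. The only (cosmetic) difference is that you condition on $\myMatrix{A}^{(m)}$ first, whereas the paper conditions on one of the two events and factors out $\prob[h^{(m)}(y_2)=\alpha^{(m)}]=2^{-m}$; both yield the identical expression $2^{-m}\prob[\myMatrix{A}^{(m)}(y_1-y_2)=\mathbf{0}]$.
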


Then, we may express the variance in terms of $c_S(w)$ and $r(w,m)$.
\begin{lemma}\label{lem:variance}
	$\sigma^2[\Cnt{S}{m}] = \frac{ \sum_{w=0}^n c_S(w)r(w,m)}{2^m}$	
\end{lemma}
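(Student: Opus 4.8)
The plan is to expand the variance as a second moment minus the square of the mean, feed in the two expressions already supplied by Proposition~\ref{prop:mean-sum}, and then exploit the defining relation $q(w,m) = r(w,m) + 2^{-m}$ to engineer a cancellation that leaves only the $r(w,m)$ terms. Concretely, writing $\Cnt{S}{m} = \sum_{y \in S} \mathbb{1}[h^{(m)}(y) = \alpha^{(m)}]$, I would begin from
\begin{align*}
\sigma^2[\Cnt{S}{m}] = \sum_{y_1,y_2\in S} \Pr[h^{(m)}(y_1)=\alpha^{(m)},\, h^{(m)}(y_2)=\alpha^{(m)}] - \left(\expect[\Cnt{S}{m}]\right)^2,
\end{align*}
and substitute the two parts of Proposition~\ref{prop:mean-sum}, namely $\expect[\Cnt{S}{m}] = |S|/2^m$ and the double-sum identity expressing the paired probability as $2^{-m}\sum_{x\in S}\sum_{w=0}^n c_S(w,x)\,q(w,m)$.

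The second step is the key algebraic move. Using $q(w,m) = r(w,m) + 2^{-m}$, I would split the paired-probability sum into an $r$-part and a $2^{-m}$-part:
\begin{align*}
2^{-m}\sum_{x\in S}\sum_{w=0}^n c_S(w,x)\,q(w,m) = 2^{-m}\sum_{x\in S}\sum_{w=0}^n c_S(w,x)\,r(w,m) + 2^{-2m}\sum_{x\in S}\sum_{w=0}^n c_S(w,x).
\end{align*}
Here I would invoke the combinatorial identity $\sum_{w=0}^n c_S(w,x) = |S|$, valid because every $y\in S$ lies at exactly one Hamming distance from a fixed $x$, so summing the distance-classified counts recovers all of $S$. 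Consequently the $2^{-m}$-part equals $2^{-2m}|S|^2 = (\expect[\Cnt{S}{m}])^2$, which precisely cancels the subtracted square of the mean.

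What remains is $\sigma^2[\Cnt{S}{m}] = 2^{-m}\sum_{x\in S}\sum_{w=0}^n c_S(w,x)\,r(w,m)$, and the final step is to interchange the order of summation and apply the aggregation identity $\sum_{x\in S} c_S(w,x) = c_S(w)$: summing, over all anchors $x$, the number of $y\in S$ at distance $w$ from $x$ counts exactly the ordered pairs $(x,y)$ in $S\times S$ at Hamming distance $w$, which is the definition of $c_S(w)$. This yields $\sigma^2[\Cnt{S}{m}] = \frac{1}{2^m}\sum_{w=0}^n c_S(w)\,r(w,m)$, as claimed. I do not anticipate a genuine obstacle here, since the argument is essentially bookkeeping; the only point requiring care is ensuring the $2^{-m}$-term of $q$ is tracked exactly so that it cancels the mean-squared term cleanly, which is exactly the role the $r(w,m)$ substitution is designed to play.
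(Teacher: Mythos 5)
Your proposal is correct and follows essentially the same route as the paper: expand the variance via the second moment, substitute the two expressions from Proposition~\ref{prop:mean-sum}, use $\sum_{w} c_S(w,x) = |S|$ to make the $2^{-2m}$ terms cancel the squared mean (the paper performs the identical cancellation by writing $\sum_{x,y\in S} 2^{-2m} = \sum_{x\in S}\sum_w c_S(w,x)2^{-2m}$ and absorbing it into $q(w,m)-2^{-m}=r(w,m)$), and finish by aggregating $\sum_{x\in S} c_S(w,x) = c_S(w)$. The only difference is cosmetic: you split $q$ into $r + 2^{-m}$ before cancelling, while the paper subtracts first and combines after, so there is nothing substantive to distinguish the two arguments.
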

\begin{proof} $\sigma^2[\Cnt{S}{m}]=  \sum_{y_1,y_2\in S}Pr [ h^{(m)}(y_1)= h^{(m)}(y_2)=\alpha^{(m)}]-(\sum_{y \in S} Pr [h^{(m)}(y) = \alpha^{(m)}])^2$
\begin{align}
\begin{split}
  &=  2^{-m} \sum_{x\in S}\sum_{w=0}^n c_S(w,x)q(w,m) -\sum_{x\in S}\sum_{y\in S} \frac{1}{2^{2m}}\\
  &=  2^{-m} \sum_{x\in S}\sum_{w=0}^n c_S(w,x)q(w,m) -\sum_{x\in S}{\sum_{w=0}^n} \frac{c_S(w,x)}{2^{2m}}\\
  &=  2^{-m} \sum_{x\in S}\sum_{w=0}^n c_S(w,x)r(w,m) \label{eq:variance}
\end{split}
\end{align}
Earlier works on bounding $\sigma^2$ observed that $c_S(w,x) \leq {n \choose w}$ and focused their efforts to bound the resulting expression. Interestingly, the following seemingly simple rewriting allows us to explore interesting bounds for $\sigma^2$. We rewrite Eq~\ref{eq:variance} as 

\begin{align}\label{eq:variance-new-eq}
  \sigma^2[\Cnt{S}{m}] =2^{-m} \sum_{w=0}^n c_S(w)r(w,m)
\end{align}
where $c_S(w)$ is the number of pairs of vectors in $S$ that are at Hamming distance $w$ from each other.
\end{proof}

Next for all $m\in \{1,\ldots, n\}$ and every $S\subseteq \{0,1\}^n$ we use deep results from boolean functional analysis to bound the dispersion index, $\frac{\sigma^2[\Cnt{S}{m}]}{\expect[\Cnt{S}{m}]}$,  as a function of $|S|$ and $r(w,m)$. We start by setting up some notation. For $x,y \in \{0,1\}^n$, we say $y \subseteq x$ whenever for all $i\in [n]$, $y_i=1\implies x_i=1$. We say $S \subseteq \{0,1\}^n$ is a {\em down-set} if for all $x,y\in \{0,1\}^n$, $x \in S, y\subseteq x$ implies $y \in S$.  We say $S$ is {\em left-compressed } if, for all $x,y\in \{0,1\}^n$, $x \in S$ implies $y \in S$ whenever $y$ satisfies the two conditions (1) $|x| = |y|$ and (2) $x \succcurlyeq_{lex} y$, i.e., $x$ is lexicographically larger than $y$. For example, the set $\{000,001,100\}$ is a downset but it is not left compressed, while $\{000,001,010\}$ is both a downset and left-compressed.

In~\cite{R18}, it is shown that among all sets $S$ of the same cardinality, for all $k\in [n]$, $\sum_{w=0}^k c_{S}(w)$ achieves its maximum value for some left-compressed and down set $S$. We extend this to obtain the following crucial lemma.

\begin{restatable}{lemma}{downleftset}\label{sec:down-left-set}
	Let $n$ be positive integer and and let $t:[n]\rightarrow \mathbb{R}^+$ be a monotonically non-increasing function. Among all subsets $S$ of $\{0,1\}^n$ of same cardinality, the sum $\sum_{w=0}^n c_{S}(w)t(w)$ achieves its maximum value for some left-compressed and down set $S$.
\end{restatable}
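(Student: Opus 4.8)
The plan is to reduce the weighted sum $\sum_{w=0}^n c_S(w)t(w)$ to a fixed nonnegative combination of the cumulative counts $P_k(S) := \sum_{w=0}^k c_S(w)$ (the number of ordered pairs of $S$ at Hamming distance at most $k$), and then transport the compression argument behind the result quoted from~\cite{R18} through this combination. The point of the reduction is that a single monotonicity statement about compressions will then automatically cover the whole non-increasing weight $t$.

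First I would perform an Abel-summation / threshold decomposition of the non-increasing weight. Setting $\lambda_k = t(k)-t(k+1)$ for $0\le k\le n-1$ and $\lambda_n = t(n)$, monotonicity and positivity of $t$ give $\lambda_k\ge 0$ for every $k$, and telescoping yields $t(w)=\sum_{k=w}^n\lambda_k = \sum_{k=0}^n \lambda_k\,\mathbf 1[w\le k]$. Substituting and interchanging the order of summation gives
$$\sum_{w=0}^n c_S(w)\,t(w) = \sum_{k=0}^n \lambda_k \sum_{w=0}^k c_S(w) = \sum_{k=0}^n \lambda_k\, P_k(S),$$
so the objective is a fixed nonnegative linear combination of the cumulative counts $P_k$, with coefficients depending only on $t$ and not on $S$.

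Next I would invoke the compression machinery underlying~\cite{R18}. That result establishes that $P_k$ is maximized, over sets of a fixed cardinality, by a left-compressed down set, and its proof proceeds through the standard compression operators: the downward compression $D_i$ (replacing an element by the one obtained by zeroing coordinate $i$ whenever the latter is absent from $S$) and the lexicographic left-shift operators. Each of these transforms an arbitrary $S$ into a left-compressed down set in finitely many steps and, crucially, is weakly monotone for \emph{every} $P_k$ at once, since compressing can only decrease Hamming distances between surviving pairs and hence never decreases the number of pairs within any threshold $k$. Because each $\lambda_k\ge 0$, weak monotonicity of every $P_k$ under a compression step implies weak monotonicity of the whole objective $\sum_k\lambda_k P_k$. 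I would then start from any set $S^\ast$ maximizing the objective, apply compressions until reaching a left-compressed down set $S'$, and observe that since no step decreased the objective, $S'$ is also a maximizer — which is exactly the claim.

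The main obstacle I anticipate is not the decomposition, which is routine, but making the compression step airtight. I must confirm that the specific sequence of compression operators used in~\cite{R18} terminates at a set that is simultaneously a down-set and left-compressed, and that each operator is weakly non-decreasing for $P_k$ \emph{for all $k$ at once}, rather than merely maximizing one fixed $P_k$. If the quoted result is only available as ``each $P_k$ is maximized by some left-compressed down set,'' then I would either re-derive the monotonicity of the compressions directly on the pair-counts $c_S(\cdot)$, or argue that the family of maximizers is nested/canonical so that one left-compressed down set maximizes every $P_k$ simultaneously. Verifying this simultaneity is the delicate part, and the threshold decomposition above is precisely the device that lets a single such monotonicity statement yield the result for the entire non-increasing weight $t$.
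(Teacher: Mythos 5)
Your Abel-summation reduction is correct: writing $t(w)=\sum_{k\ge w}\lambda_k$ with $\lambda_k=t(k)-t(k+1)\ge 0$ and $\lambda_n=t(n)$ turns the objective into $\sum_k \lambda_k P_k(S)$, and this layer is a clean way to organize the argument that the paper does not use (the paper instead runs the compression case analysis directly for a general non-increasing $t$). However, the proposal has a genuine gap exactly where you flag ``the delicate part,'' and the one-line justification you offer for it is false. You claim each compression step is monotone for every $P_k$ ``since compressing can only decrease Hamming distances between surviving pairs.'' That is not true pairwise: take $x^0\in S,\ y^0\notin S,\ x^1\in S,\ y^1\in S$ for the down-operator $D_i$. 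Then $y^1$ moves to $y^0$ while $x^1$ stays, and $d(x^1,y^0)=d(x^1,y^1)+1$, so the distance of that surviving pair strictly \emph{increases}. The correct statement, which the paper proves by an exhaustive case analysis over the four points $\{x^0,x^1,y^0,y^1\}$ (and over eight points for the left-shift $L_{i,j}$), is that the \emph{multiset} of distances within each such block is either preserved (in the example above, the distance lost on one pair is exactly compensated by the gain $d(x^0,y^1)=d(x^1,y^0)$ on another) or improved in a single case where a distance strictly drops. That exchange argument is the entire content of the lemma, and your proposal defers it to ``confirm'' or ``re-derive.''

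Two further points. First, the result you want to cite from~\cite{R18} is stated there (and quoted in the paper) only in the existential form ``each $P_k$ is maximized by \emph{some} left-compressed down set,'' which does not give a single maximizer for all $k$ simultaneously; so the black box is not available in the step-wise-monotone form your reduction needs, and you would have to redo the compression analysis anyway --- at which point the threshold decomposition buys nothing over handling general non-increasing $t$ directly, since monotonicity of $t$ enters the case analysis in exactly one place. Second, termination at a set that is \emph{both} a down-set and left-compressed requires checking that the left-shift operators preserve the down-set property; the paper proves this as a separate claim, and your proposal mentions but does not address it.
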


\newcommand{\nbr}{\mathrm{nbr}}

The above lemma allows us to use the expressions obtained for $c_{S}(w)$ by Rashtchian in ~\cite{R18,RR19}. 

\begin{restatable}{lemma}{dispersion}\label{lm:disperson}~\cite{R18,RR19}
	For a left-compressed and down set $S$,	 $c_{S}(w) \leq  2\cdot \left(\frac{8e \sqrt{n\cdot\ell}}{w}\right)^w\cdot |S|$ where $\ell = \lceil \log |S| \rceil$. 
\end{restatable}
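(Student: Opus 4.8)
The plan is to prove the bound by a direct double-counting of the ordered pairs $(x,y)\in S\times S$ at Hamming distance $w$, using only that $S$ is a down-set; in fact left-compression is not needed for this upper bound. The first step is a clean structural consequence of down-closedness: every $x\in S$ has Hamming weight $|x|\le \ell$, since $S$ contains all $2^{|x|}$ subsets of $x$, forcing $2^{|x|}\le|S|\le 2^{\ell}$. This ``weight $\le\ell$'' bound is the only place the hypothesis on $S$ enters, and it is exactly what lets us replace the ambient dimension $n$ by a much smaller effective dimension.

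Next I would split a distance-$w$ pair by how the $w$ differing coordinates distribute: write $a=|x\setminus y|$ and $b=|y\setminus x|$ with $a+b=w$, and let $P(a,b)$ count the pairs with this profile, so $c_S(w)=\sum_{a+b=w}P(a,b)$. Counting by the first coordinate $x$, one removes $a$ of the at most $\ell$ support coordinates of $x$ and adds $b$ of the remaining $n-|x|$ coordinates, giving $P(a,b)\le|S|\binom{\ell}{a}\binom{n}{b}$; counting instead by $y$ gives the symmetric estimate $P(a,b)\le|S|\binom{\ell}{b}\binom{n}{a}$. The crucial move is to take the geometric mean of the two, $P(a,b)\le |S|\sqrt{\binom{\ell}{a}\binom{n}{a}}\,\sqrt{\binom{\ell}{b}\binom{n}{b}}$, which is precisely where the geometric mean $\sqrt{n\ell}$ of the ``cheap'' support dimension $\ell$ and the ``expensive'' ambient dimension $n$ is produced.

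It then remains to estimate and sum. Using $\binom{N}{k}\le (eN/k)^{k}$ gives $\sqrt{\binom{\ell}{a}\binom{n}{a}}\le (e\sqrt{n\ell}/a)^{a}$ and likewise for $b$, so each term is at most $M^{w}/(a^{a}b^{b})$ with $M=e\sqrt{n\ell}$. By convexity of $t\mapsto t\log t$ (with the convention $0^{0}=1$) we have $a^{a}b^{b}\ge (w/2)^{w}$, whence every term is bounded by $(2M/w)^{w}=(2e\sqrt{n\ell}/w)^{w}$. Summing the at most $w+1$ terms and using the crude inequality $w+1\le 2\cdot 4^{w}$ to match constants yields $c_S(w)\le 2\,(8e\sqrt{n\ell}/w)^{w}|S|$, as claimed.

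The part I expect to be delicate is not any single calculation but getting the conceptual reduction right: the naive one-sided count $P(a,b)\le|S|\binom{\ell}{a}\binom{n}{b}$, summed over $a+b=w$, collapses by Vandermonde to essentially $|S|\binom{n}{w}$ --- the trivial bound --- so no improvement over $n$ is visible. The improvement only appears after symmetrizing the two counts, which forces the $b$ ``additions'' to balance against the $a$ ``removals'' and thereby replaces $n$ by $\sqrt{n\ell}$. A heavier alternative would invoke the hypercube shadow/isoperimetric estimates of Rashtchian~\cite{R18,RR19} on left-compressed down-sets directly; the elementary symmetrization above appears to recover the same bound (indeed for arbitrary down-sets) with room to spare, which is why the leading factor $2$ and the constant $8e$ are comfortable.
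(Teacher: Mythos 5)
Your proof is correct, and it takes a genuinely different route from the paper. The paper's proof of this lemma is essentially a citation: it splits on the parity of $w$ and transcribes Equations 4.2, 4.5, 4.8 and 4.10 of Rashtchian's thesis (substituting $w=2t$ or $w=2t+1$), emphasizing that those equations are stated only for left-compressed down-sets. You instead give a self-contained double count: the down-set property forces every $x\in S$ to have weight at most $\ell$ (since $S$ contains all $2^{\mathrm{wt}(x)}$ subsets of $x$ and $|S|\le 2^{\ell}$), and then the profile decomposition $P(a,b)\le |S|\min\bigl(\binom{\ell}{a}\binom{n}{b},\binom{n}{a}\binom{\ell}{b}\bigr)\le |S|\bigl(\binom{\ell}{a}\binom{n}{a}\binom{\ell}{b}\binom{n}{b}\bigr)^{1/2}$, together with $\binom{N}{k}\le (eN/k)^k$, the convexity bound $a^ab^b\ge (w/2)^w$, and $w+1\le 2\cdot 4^w$, reproduces the stated constant exactly. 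I checked each step, including the edge cases $a=0$ or $b=0$ and $w=0$ under the convention $0^0=1$; the only implicit assumption is that $\log$ in $\ell=\lceil\log|S|\rceil$ is base $2$, which matches the paper's usage. What your argument buys is twofold: it is elementary and verifiable without consulting the external source, and it shows the bound holds for arbitrary down-sets, with left-compression needed only in Lemma~\ref{sec:down-left-set} to reduce the variance-maximization to such sets; what the paper's route buys is direct traceability to the sharper intermediate expressions (Eq.~4.2 and 4.5), which the authors later exploit numerically in Section~\ref{sec:opt} to compute tighter values of $p_i$ than the closed-form bound allows. Your closing observation that the naive one-sided count collapses by Vandermonde to $|S|\binom{n}{w}$ and that the symmetrization is the essential step is accurate and is indeed the same mechanism that drives the cited isoperimetric estimates.
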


\begin{proof}
  The proof is based on the bounds derived  by Rashtchian in ~\cite{R18}. We give a few more details as  we will need them later when we explain our implementation.  More specifically, the proof uses Equations 4.2, 4.5, 4.8, and 4.10 from~\cite{R18}.  It is crucial to note that these equations hold only for a left-compressed and down set and not for an arbitrary set $S$. The proof follows by breaking into two cases based on the parity of $w$.

  For even $w$, Rashtchian upper bounds the expressions obtained in Eq. 4.2 and 4.5 by Eq 4.8 in ~\cite{R18}.  We rewrite Eq 4.8 by substituting $2t$ by $w$ to obtain $c_{S}(w) \leq 2\cdot \left(\frac{8e \sqrt{n\cdot\ell}}{w}\right)^w$. For odd $w$,  Rashtchian upper bounds the upper bound for $c_{S}(w)$ obtained in Eq. 4.2 and 4.5 by Eq 4.10. We rewrite Eq 4.10 by noting that $w = 2t+1$  to obtain $c_{S}(w) \leq 2 \cdot \left(\frac{8e }{w}\right)^w (\sqrt{n\cdot\ell})^{(w-1)} \ell $. Noting that $\ell \leq \sqrt{n\cdot \ell}$, we have $c_{S}(w) \leq  2 \cdot \left(\frac{8e \sqrt{n\cdot\ell}}{w}\right)^w$.  Thus, combining these cases, we get our lemma.
\end{proof}

 Thus, for any $S \subseteq \{0,1\}^n$ let us fix $\ell=\lceil{\log|S|\rceil}$. Then, \\
   $\frac{\sigma^2[\Cnt{S}{m}]}{\expect[\Cnt{S}{m}]} \leq 
 \sum\limits_{w=0}^{\ell} 2\cdot \left(\frac{8e \sqrt{n\cdot\ell}}{w}\right)^w r(w,m)$, which completes the proof of our first main result, Theorem~\ref{thm:main-dispexpression}, i.e.,

 \mainexp*

This theorem gives a closed form expression for upper bound on dispersion index, which is amenable to numerical computations. In particular, given $\ell$, one can compute the value of $p_i$'s such that dispersion index is upper bounded by a constant. Next, we analyze the behavior of $p_i$'s for a given upper bound on dispersion index and we construct concentrated hash functions based on their behavior. 

\section{A concentrated hash family}
\label{sec:rennes}
In this section, we finally construct a family of concentrated hash functions, which proves our second main Theorem~\ref{thm:main-rennes}, which we restate below.

\conchash*
\begin{proof}
	
  The first equation follows from Proposition~\ref{prop:mean-sum}. For the second, from Theorem~\ref{thm:main-dispexpression} we have, for any $1\leq m\leq n$,
  \begin{align*}
    \frac{\sigma^2[\Cnt{S}{m}]}{E[\Cnt{S}{m}]}& = 1 + \sum_{w=1}^{\ell} c_S (w) r(w,m) \\&\leq 1+ \sum_{w=1}^{\ell} 2\cdot \left( \frac{8e\sqrt{n\ell}}{w}\right)^w r(w,m), \\ \text{ where $\ell = \lceil m+\log_2(k) \rceil $. }
  \end{align*}
Note that $\frac{m}{\delta}+1\geq \ell\geq \frac{m}{\delta}$. Note that 
\begin{align*}
q(w,m) = \prod_{j=1}^{m}\left(\frac{1}{2}+\frac{1}{2}(1-2p_j)^w\right) \\ \leq \left(\frac{1}{2}+\frac{1}{2}(1-2p_m)^w\right)^m
\end{align*}

Now let us define $f(w) = \left(\frac{8e\sqrt{n\ell}}{w}\right)^wr(w,m)$. Then, we can divide into three cases: 
  \begin{caseof}
  \case{$1 \leq w \leq (2p_m)^{-1}$} 
{
  We have $\log(r(w,m))\leq - mwp_{m}$. To see this, following the reasoning from~\cite{AD16}, we have when $w\leq \frac{1}{2p_{m}}$,
  \begin{align*}
    \log(r(w,m))&\leq -m+m \log (1+ (1-2p_{m})^w)\\
    &\leq -m +m \log(1+e^{-2p_{m}w})\\
    &\leq -m +m(1-p_{m}w)
  \end{align*}
  where the last inequality follows from the fact that $\log_2(1+e^{-x})\leq 1-\frac{1}{2}x$ for $0\leq x\leq 1$ and that $0\leq 2p_{m}w\leq 1$ in this interval.
  Thus
\begin{align}
  \log_2 f(w) \leq w\log 8e\sqrt{n\ell}-w\log w -mp_{m}w  
\end{align}

Since $H^{-1}(\delta)\leq \delta/2$, we have $p_{m}\geq \frac{16}{H^{-1}(\delta)}\frac{\log m}{m}\geq \frac{32}{\delta} \frac{\log m}{m}\geq 32\frac{\log m}{m}$, since $\delta \leq 1$. Therefore
\begin{align*}
  &\log_2 f(w) \leq w\log 8e\sqrt{n\ell}-w\log w -mp_{m}w  \\
  &\leq w\log 8e\sqrt{n\ell}-mp_{m}w  \leq w\log 8e \sqrt{n\ell} -32w\log m\\
  &\leq w\log \frac{8e\sqrt{nl}}{m^{32}}
\end{align*}

Now, we pick $\qs_1> (\sqrt{n}\frac{2\rho}{\rho-1}(\ell)^{3/2} 8e)^{1/32}$. Note that this is possible, since $\ell\leq n$ and it suffices to choose $\qs_1> 1.12 (\frac{2\rho}{\rho-1})^{1/32}n^{1/16}$ which is in turn possible for any value of $\rho>1$.

Then, we have for any $m\geq \qs_1$, $m^{32}> 8e\sqrt{n} (\frac{2\rho}{\rho-1})\ell^{3/2}$ which implies that $\frac{8e\sqrt{nl}}{m^{32}}<\frac{\rho-1}{2\rho\ell}$. Then we have 
\begin{align}
  \log_2 f(w) &\leq w\log \frac{8e\sqrt{nl}}{m^{32}}\leq w\log \frac{\rho-1}{2\rho\ell}\leq 1\cdot \log \frac{\rho-1}{2\rho\ell}
    \end{align}
where the last inequality follows because, $\ell \geq 1$ (since $|S|\geq 2$), which means that $\log \frac{\rho-1}{2\rho\ell}<0$.

Therefore, $\sum_{w=1}^{\ell} f(w) \leq \frac{\ell(\rho-1)}{2\rho\ell}$
\begin{align*}
  \implies \frac{\sigma^2[\Cnt{S}{m}]}{E[\Cnt{S}{m}]} \leq 1 +  2\frac{\rho-1}{2\rho}< 1+\rho - 1=\rho\\
\implies \text{for $k>1$, $\rho>1$, for $\qs_1\leq m\leq n$,}  \frac{\sigma^2[\Cnt{S}{m}]}{E[\Cnt{S}{m}]} < \rho
\end{align*}
}

\case{$(2p)^{-1} \leq w \leq \frac{m H^{-1}(\delta)}{16}$}
     {  We start by observing that $g(w)=\log f(w)=w \log 8e\sqrt{n\ell} - w\log w$ is increasing in the interval $w=0$ to $w=\ell$. To see this, consider the derivative $g'(w)=\log(\frac{8e\sqrt{n\ell}}{w})-1$. Then $w\leq \ell\leq 8e\sqrt{n\ell}$ implies $2\leq \frac{8e\sqrt{n\ell}}{w}$, which implies $g'(w)>0$.

       Now $w\leq \frac{mH^{-1}(\delta)}{16}\leq \frac{m\delta}{32}\leq \frac{m}{32}$ since $\delta \leq 1$ and $H^{-1}(\delta)\leq \frac{\delta}{2}$. Thus we have,     
       \begin{align*}
	 \log f(w) &\leq \log f(\frac{m}{32})\leq \frac{m}{32} \log (\frac{2^9 e\sqrt{n\ell}}{m})\\
         &= \frac{9m}{32}+m \log \left(\frac{e\sqrt{n\ell}}{m}\right)^{\frac{1}{32}}
       \end{align*}
Now we pick $m>\frac{e\sqrt{n\ell}}{40}$. Then we get $\log f(w)         \leq 0.282m+0.167m \leq 0.45m$.

On the other hand, we have $\log r(w,m)\leq -m +m \log (1+exp(-2p_{m}w))$ $\leq -m+m\log(1+exp(-1))\leq -0.58m$

Thus, we get %
   $\frac{\sigma^2[\Cnt{S}{m}]}{E[\Cnt{S}{m}]} \leq 1+ \sum_{w=1}^\ell 2\cdot f(w)r(w,m) \leq 1+\sum_{w=1}^\ell 2^{1-0.13m}\leq 1+\frac{2\ell}{2^{0.13m}}$
Thus there exists  $\qs_2 =  \frac{1}{0.13}\log_2 \left(\frac{2\ell}{\rho -1}\right)$ such that for $\qs_2\leq m\leq n$, clearly this can be made less than any constant $\rho>1$.
}
\case{$w \geq \frac{m H^{-1}(\delta)}{16}$. We start with a claim, the proof for which can be found in the Appendix.}
     {%
    \begin{restatable}{claim}{rbound}\label{lm:rbound}
    		For $m \geq 2$, if $w \geq \frac{m H^{-1}(\delta)}{16}$, then $\log_2 r(w,m) < -m + 1 -\log_2 m$
    \end{restatable}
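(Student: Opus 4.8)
The plan is to reduce the whole estimate to a bound on the single quantity $a := (1-2p_m)^w$, exploiting that the row densities are non-increasing. Writing $r(w,m) = q(w,m) - 2^{-m}$ and using the elementary product bound $q(w,m) \le \bigl(\tfrac12 + \tfrac12(1-2p_m)^w\bigr)^m = 2^{-m}(1+a)^m$ (each factor $\tfrac12 + \tfrac12(1-2p_j)^w$ is at most $\tfrac12 + \tfrac12(1-2p_m)^w$, since $p_j \ge p_m$ for every $j \le m$), I get $r(w,m) \le 2^{-m}\bigl((1+a)^m - 1\bigr)$. Hence it suffices to show $(1+a)^m - 1 < 2/m$: taking $\log_2$ then yields $\log_2 r(w,m) < -m + \log_2(2/m) = -m + 1 - \log_2 m$, which is exactly the claim.

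The crucial step is the estimate on $a$, and here the two appearances of $H^{-1}(\delta)$ cancel. From $1 - 2p_m \le e^{-2p_m}$ I get $a \le e^{-2p_m w}$. The density condition of $\Hrennes{k}$ supplies $p_m \ge \tfrac{16}{H^{-1}(\delta)}\cdot\tfrac{\log_2 m}{m}$ (the degenerate case $p_m = \tfrac12$ forces $a = 0$ and is immediate), while the Case~3 hypothesis gives $w \ge \tfrac{m\,H^{-1}(\delta)}{16}$. Multiplying these, the factors $H^{-1}(\delta)$ and $m$ cancel and leave $p_m w \ge \log_2 m$, independently of $\delta$. Therefore $a \le e^{-2\log_2 m} = m^{-2/\ln 2} < m^{-2}$, the last inequality holding because $2/\ln 2 > 2$ and $m \ge 2$.

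It remains to linearize $(1+a)^m$. Since $ma < m\cdot m^{-2} = 1/m \le \tfrac12 < 1$ for $m \ge 2$, I combine $(1+a)^m \le e^{ma}$ with the elementary inequality $e^x - 1 \le 2x$ valid on $[0,1]$ to obtain $(1+a)^m - 1 \le 2ma < 2m\cdot m^{-2} = 2/m$, and the reduction of the first paragraph then closes the argument. I expect the only genuine subtlety to be the observation that the entire bound is controlled by the largest density $p_m$ (via monotonicity) and that the $H^{-1}(\delta)$-dependence cancels exactly against the range of $w$; once this is seen the remaining steps are routine, and the strictness of the final inequality is secured by the slack $2/\ln 2 > 2$.
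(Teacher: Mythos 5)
Your proof is correct and follows essentially the same route as the paper's: both reduce to the single quantity $(1-2p_m)^w$ via monotonicity of the densities, cancel the two occurrences of $H^{-1}(\delta)$ by multiplying the density lower bound against the lower bound on $w$ to get $p_m w \ge \log_2 m$, and then linearize the $m$-th power (the paper via $(1+y)^m \le 1+2my$ when $my < 0.5$, you via $(1+a)^m \le e^{ma}$ and $e^x - 1 \le 2x$ on $[0,1]$). The differences are purely cosmetic bookkeeping of where the logarithm is taken.
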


From the above claim, we have %
$\log_2 r(w,m) \leq -m + 1 - \log_2 m$, i.e., $ r(w,m) \leq \frac{2\cdot 2^{-m}}{m}$.

Also, recalling that we have $\Sigma_{w=1}^\ell c_s(w)\leq 2^\ell$
,  we obtain 
$\frac{\sigma^2}{\mu}\leq 1+ \sum_{w=1}^\ell c_s(w) max_w r(w,m) \leq 1+ 2^\ell \cdot \frac{2\cdot 2^{-m}}{m} = 1+\frac{2k}{m}$

Thus for all $k$, we can pick $\qs_3 = \frac{2k}{\rho -1}$  such that for any $\qs_3\leq m\leq n$ and $\rho>1$, $\frac{\sigma^2}{\mu}\leq \rho$.

     }\label{case:largew}
\end{caseof}
Combining the three cases and taking $\qs=\max\{\qs_1,\qs_2,\qs_3\}$, we obtain our desired result.
It is worth noting that the smallest value of $m$ (i.e., $\qs$) for which Theorem~\ref{thm:main-rennes} holds true depends on $\rho$ and $k$. Furthermore, it is interesting to observe that the proof of Case 3 crucially depends on usage of $r(w,m)$ instead of $q(w,m)$ in the expression of  $\frac{\sigma^2}{\mu}$ as the current proof techniques would only yield  $\log_2 q(w,m) < -m + 1$, which would be insufficient to prove $\frac{\sigma^2}{\mu}\leq \rho$.  
\end{proof}
\section{A New Approximate Model Counting Algorithm: {\SparseScalMC}}\label{sec:sparsescalmc}
In this section, we seek to design algorithms that can use $(\rho, \qs, k)$-concentrated hash functions for a small $k$, independent of the problem instance.  In particular, we first revisit the state of the art approximate counting algorithm {\ApproxMC}. We will refer to the algorithmic constructs presented in {\ApproxMCTwo}~\cite{CMV16} since the subsequent versions, i.e., {\ApproxMCThree} and {\ScalApproxMC}, have proposed algorithmic improvement to the underlying {\SAT} calls only.  We seek to  modify {\ApproxMCTwo} so as to employ concentrated hash function; the final implementation of {\SparseScalMC} builds on top of {\ScalApproxMC}, allowing it to benefit from the improvements proposed in {\ApproxMCThree} and {\ScalApproxMC}. 

\subsection{The Algorithm}

The subroutine {\SparseScalMC} is presented in Algorithm~\ref{alg:scalmc}.  
 {\SparseScalMC} takes in a formula $F$, tolerance: $\varepsilon$, and confidence parameter $\delta$, concentrated hashing parameters $\rho$ and $\qs$ as input and returns an estimate of $|\satisfying{F}|$ within tolerance $\varepsilon$ and confidence at least $1-\delta$.   Similar to {\ApproxMCTwo}, the key idea of {\SparseScalMC} is to partition the solution space of $F$ into {\em roughly equal small} cells of solutions such that the $|\satisfying{F}|$ can be estimated from the number of solutions in a randomly chosen cell scaled by the total number of cells. This idea requires two crucial ingredients:

\begin{enumerate}
	\item hash functions to achieve desired properties of partitioning:	As has been emphasized earlier, in this work, we mark a departure from prior work and employ concentrated hash functions instead of strongly 2-universal hash functions. 

	\item subroutine to check whether a cell is small, i.e., the number of solutions in the cell is less than an appropriately computed $\thresh$. {\SparseScalMC}
	 assumes access to the subroutine $\BoundedSAT$ that takes in a formula $F$ and a threshold $\thresh$ and returns an integer $Y$, such that  $Y = \min(\thresh, |\satisfying{F}|)$. Note that $Y = \thresh$ is used to indicate that the number of solutions is greater than or equal to $\thresh$, which indicates that the cell is not {\em small}. We do not treat {\BoundedSAT} as an oracle in our analysis and instead as a subroutine which uses a NP oracle to enumerate solutions of $F$ one by one until we have found $\thresh$ number of solutions or there are no more solutions. As such for {\BoundedSAT} to make polynomially many calls to NP oracle, {\thresh} is polynomial in $\frac{1}{\varepsilon}$. 
	 \item  Subroutine, called {\FibBinSearch}, to search for the right number of cells as discussed in detail below. 
\end{enumerate}

{\SparseScalMC} differs from {\ApproxMCTwo} primarily in the computation of $\thresh$ and usage of concentrated hash functions -- the two critical components that distinguish several hashing-based counting techniques.  The computation of $\thresh$ involves the parameter $\rho$ to account for concentrated hashing and incurs an overhead proportional to $\rho$. As discussed later, for our empirical studies, we set $\rho$ to 1.1. Unlike prior techniques, we introduce another parameter $\iniThresh$ that depends on $\thresh$ and $\qs$ to account for $\qs$ parameter of concentrated hash functions. 
 {\SparseScalMC} first checks if the number of solutions of $F$ is less than $\iniThresh$ and upon passing the check it simply returns the number of solutions of $F$. For interesting instances, the check fails and  {\SparseScalMC} invokes the subroutine {\SparseScalMCCore} $t$ times and computes the median of the returned estimates by {\SparseScalMCCore}.

 The subroutine {\SparseScalMCCore} lies at the core of {\SparseScalMC} and shares similarity with {\ApproxMCTwoCore} employed in~\cite{CMV16}. In contrast to {\ApproxMCTwoCore}, the algorithmic description does not restrict the hash family to $H_{xor}$ in line~\ref{line:sfc-choose-alpha}. We use $\mathcal{H}_{\rho, \qs}(n,n)$ as a placeholder for a hash family, whose properties would be inferred from the analysis of {\SparseScalMC} and stated formally in Lemma~\ref{lm:aux-bounds}.

  {\SparseScalMCCore} takes in a formula $F$, $\thresh$, and returns $\solCount$  as an estimate of $|\satisfying{F}|$ within tolerance $\varepsilon$ corresponding to $\thresh$. To this end, {\SparseScalMCCore} first chooses a hash function $h$ from a prefix-family $\mathcal{H}_{\rho, \qs}(n,n)$ and a cell $\alpha$. As noted above, we use {\em prefix-slices} of $h$ and $\alpha$. After choosing $h$ and $\alpha$ randomly,
 {\SparseScalMCCore} checks if
 $\Cnt{F}{n} < \hiThresh$.  If not,
 {\SparseScalMCCore} fails and returns $2^n$.(A careful reader would note that we could have chosen any arbitrary number to return) 
 Otherwise, it invokes sub-routine {\FibBinSearch} to find a value of
 $m$ (and hence, of $h^{(m)}$ and $\alpha^{(m)}$) such that
 $\Cnt{F}{m} < \hiThresh$
 and $\Cnt{F}{m-1} \ge
 \hiThresh$. The reason behind the particular choice of the value of $m$ is that to obtain higher confidence in the counts returned by {\SparseScalMCCore}, we would ideally like the $\expect[\Cnt{F}{m} ]$ to be high so as to obtain better bounds through concentration inequalities. Of course, we can only handle the cases when $\Cnt{F}{m}  $  is polynomial to ensure polynomially many calls to NP oracle (SAT solver in practice).  
 The implementation of {\FibBinSearch} is provided in ~\cite{CMV16} and we use the procedure as-is. The invocation of {\BoundedSAT} in line~\ref{line:sfc-bsat-m} calculates
 $\Cnt{F}{m}$.
 Finally,
 {\SparseScalMCCore} returns $(2^m \times {\Cnt{F}{m}})$, where $2^m$ is the number of cells that $\satisfying{F}$ is partitioned into by $h^{(m)}$. 
\begin{algorithm}
	\caption{$\SparseScalMC(F,\varepsilon,\delta,\rho,\qs)$}
	\label{alg:scalmc}
	\begin{algorithmic}[1]
		\State $\hiThresh \gets 1+ 9.84\cdot \rho \cdot \left(1 + \frac{\varepsilon}{1+\varepsilon}\right)\left(1 + \frac{1}{\varepsilon}\right)^2$; 
		\State $\iniThresh \gets \hiThresh * 2^{\qs+3}$
		\State $Y \gets \BoundedSAT(F, \iniThresh)$;  \label{line:scalmc-boundedsat}
		\If {($Y < \iniThresh$)} \Return $|Y|$;
		\EndIf
		\State $t \leftarrow  \lceil17\log_2 (3/\delta)\rceil$;
		\State $\cellCount \gets 2$; $C \gets \emptyList$; $\iter \gets 0$;
		\Repeat
		\State $\iter \gets \iter+1$;
		\State  $ \solCount \gets \SparseScalMCCore(F,\rho ,\hiThresh)$;
	 \State $\AddToList(C,\solCount)$;
		\Until {($\iter < t$)};
		\State $\mathrm{finalEstimate} \gets \FindMedian(C)$;
		\State \Return $\mathrm{finalEstimate}$
	\end{algorithmic}
\end{algorithm}

\begin{algorithm}
	\caption{$\SparseScalMCCore(F,\rho, \hiThresh)$}
	\label{alg:scalmccore}
	\begin{algorithmic}[1]
		\label{line:sfc-choose-h}
		\State Choose $h$ at random from $\mathcal{H}_{\rho}(n, n)$;
		\label{line:sfc-choose-alpha}
		\State Choose $\alpha$ at random from $\{0, 1\}^{n}$;
		\State $Y  \gets {\BoundedSAT}(F\wedge (h^{(n)})^{-1} (\alpha^{(n)}), \hiThresh)$;
		\If {($|Y| \ge \hiThresh$)} \Return $2^n$
		\EndIf
		\State $m \gets {\FibBinSearch}(F, h, \alpha, \hiThresh)$;
		\State $ \Cnt{F}{m} \gets {\BoundedSAT}(F\wedge (h^{(m)})^{-1} ( \alpha^{(m)}), \hiThresh)$; \label{line:sfc-bsat-m} 
		\State \Return $(2^m \times  \Cnt{F}{m})$;
	\end{algorithmic}
\end{algorithm}

\subsection{Analysis of {\SparseScalMC}}
We now present the analysis of {\SparseScalMC}. The primary purpose of this section is to highlight the sufficiency of concentrated hashing for the theoretical guarantees of {\SparseScalMC}. 

\newcommand{\good}{\mathsf{Good}}
\newcommand{\bad}{\mathsf{Bad}}

Let $\bad$ denote the event that {\SparseScalMCCore} either returns
$(\bot, \bot)$ or returns a pair $(2^m, \solCount)$ such that
$2^m\times\solCount$ does not lie in the interval 
$I_\good=\left[\frac{|\satisfying{F}|}{1+\varepsilon},
|\satisfying{F}|(1 + \varepsilon)\right]$.  We wish to
bound $\prob\left[\bad\right]$ from above. Towards this end, for $i\in \{1,\ldots,n\}$, let $T_{i}$ denote the event $\left(\Cnt{F}{i} < \hiThresh\right)$, and let $L_i$ and $U_i$ denote the events
$\left(\Cnt{F}{i} <
\frac{|\satisfying{F}|}{(1+\varepsilon)2^{i}}\right)$ and
$\left(\Cnt{F}{i} > \right.$ $\left.
\frac{|\satisfying{F}|}{2^i}(1+\frac{\varepsilon}{1+\varepsilon})\right)$,
respectively.

For any event $E$, let $\overline{E}$ denote its complement. Now, for $\bad$ to happen, \SparseScalMCCore\ must return (at some iteration $i$) with $L_i$ or $U_i$. Further, if it returned at $i$, then $T_i$ holds and $T_{i-1}$ must not hold (else it would have returned at iteration $i-1$ itself). 
Thus, we obtain %
\begin{align}
  \label{eq:bad-ub}
  \prob\left[\bad\right] \leq 
	\prob\left[\bigcup_{i \in \{1, \ldots n\}} \left(\overline{T_{i-1}}
	\cap T_i \cap (L_i \cup U_i)\right)\right]\end{align}
Note that we only get an upper bound (and not an equality) above because the interval $I_\good$ considered has upper bound $|\satisfying{F}|(1+\varepsilon)$, while $U_i$ and $\thresh$ are defined using the factor $(1+\frac{\varepsilon}{1+\varepsilon})\leq 1+\varepsilon$.

Our next goal is to simplify this upper bound. Let $m^*$ be the smallest $i$ such that $\frac{|\satisfying{F}|}{2^i}(1+\varepsilon)\leq \thresh-1$. This value must exist since $|\satisfying{F}|\geq \iniThresh$. Note that when $|\satisfying{F}|<\iniThresh$, the algorithm returns the exact count and hence is guaranteed to be correct.
        Now, by substituting the chosen value of $\thresh$ and simplifying, we obtain
        \begin{align} m^*=\left \lfloor \log_2 |\satisfying{F}| - \log_2 \left(4.92\cdot \rho \cdot \left(1 +\frac{1}{\varepsilon}\right)^2\right)\right\rfloor  \label{eq:mstardef}\end{align}

From the definition of $m^*$, we have $2^{m^*+1} \geq \frac{2*|\satisfying{F}|}{\thresh-1}$.  Since  $|\satisfying{F}| \geq \iniThresh$, we have $2^{m^*+1} \geq \frac{2*\thresh*2^{\qs+3}}{\thresh-1}$, i.e.,  $m^*+1 \geq \qs+4$, or $m^*-3 \geq \qs$.

Similar to {\ScalApproxMC}, we show that for {\SparseScalMC}, one can upper bound $\bad$ by considering only five events, namely, $T_{m^*-3} L_{m^*-2},$ $L_{m^*-1}, L_{m^*}$ and $U_{m^*}$. It is worth noting that  the proof only requires usage of prefix-hash family in the algorithm with no further restrictions on nature of the prefix-hash family. In fact, the main property that we need from the prefix hash family, which follows from Proposition~\ref{prop:prefix-monotonicity}, is that
          \begin{align}\label{eq:thresh_monotone}
            \forall j\in \{1,\ldots, n\}, T_j\implies T_{j+1}
            \end{align}

          \begin{restatable}{lemma}{upperbound}
            \label{lm:upperbound}
		$\prob[\bad] \leq \prob[T_{m^*-3}]
		+ \prob[L_{m^*-2}] + \prob[L_{m^*-1}] + \prob[L_{m^*} \cup U_{m^*}]$
\end{restatable}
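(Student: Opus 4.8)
The plan is to take the union bound of Eq.~(\ref{eq:bad-ub}) and collapse the union over all $i\in\{1,\dots,n\}$ into the five stated events, using two monotonicities already available: the threshold-monotonicity $T_j\implies T_{j+1}$ of Eq.~(\ref{eq:thresh_monotone}) and the prefix-monotonicity $\Cnt{F}{i+1}\le\Cnt{F}{i}$ of Proposition~\ref{prop:prefix-monotonicity}. Write $E_i:=\overline{T_{i-1}}\cap T_i$ for the disjoint events that the search first declares level $i$ small, and set $\mu_i:=\frac{|\satisfying{F}|}{2^i}=\expect[\Cnt{F}{i}]$. The starting point is Eq.~(\ref{eq:bad-ub}), but in fact $\bad$ can occur at termination level $i$ only through an under-estimate ($L_i$) or through the \emph{sharp} over-estimate $\Cnt{F}{i}>(1+\varepsilon)\mu_i$, and I would use this sharper per-level description. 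I then partition the range of $i$ into $i\le m^*-3$, the three boundary levels $\{m^*-2,m^*-1,m^*\}$ (all of which lie in $[\qs,n]$ since $m^*-3\ge\qs$ was shown above), and $i\ge m^*+1$, and bound each block separately.

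For the low block $i\le m^*-3$ I discard the tail factor entirely: $E_i\subseteq T_i$, and iterating Eq.~(\ref{eq:thresh_monotone}) gives $T_i\subseteq T_{m^*-3}$, so the whole block is contained in $T_{m^*-3}$. For the high block $i\ge m^*+1$ I again drop the tail factor and use $E_i\subseteq\overline{T_{i-1}}\subseteq\overline{T_{m^*}}$; it then suffices to note $\overline{T_{m^*}}\subseteq U_{m^*}$. This inclusion is where the definition of $m^*$ enters: Eq.~(\ref{eq:mstardef}) gives $\mu_{m^*}(1+\varepsilon)\le\thresh-1$, so $\Cnt{F}{m^*}\ge\thresh$ forces $\Cnt{F}{m^*}>\mu_{m^*}(1+\frac{\varepsilon}{1+\varepsilon})$, i.e.\ $U_{m^*}$. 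Together with the term $i=m^*$, where $E_{m^*}\cap(L_{m^*}\cup U_{m^*})\subseteq L_{m^*}\cup U_{m^*}$ trivially, the high block is absorbed into $\prob[L_{m^*}\cup U_{m^*}]$.

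The remaining, and I expect hardest, part is the two boundary levels $i\in\{m^*-2,m^*-1\}$: their lower tails are kept verbatim as $L_{m^*-2}$ and $L_{m^*-1}$, but I must show the over-estimate part contributes nothing, i.e.\ that an over-estimate is incompatible with $T_i$ for $i<m^*$. This is precisely what the factor-two margin built into $\thresh$ buys. From Eq.~(\ref{eq:mstardef}) one has $\mu_{m^*}\ge 4.92\,\rho(1+\frac1\varepsilon)^2$, hence $\mu_i\ge 2\mu_{m^*}$ for $i\le m^*-1$; substituting the chosen $\thresh=1+9.84\,\rho(1+\frac{\varepsilon}{1+\varepsilon})(1+\frac1\varepsilon)^2$ and simplifying shows $(1+\varepsilon)\mu_i>\thresh$ for every $i\le m^*-1$, the slack being exactly $9.84\,\rho(1+\varepsilon)>1$. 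Thus a sharp over-estimate $\Cnt{F}{i}>(1+\varepsilon)\mu_i$ would force $\Cnt{F}{i}>\thresh$, contradicting $T_i$, so the over-estimate events at $m^*-1$ and $m^*-2$ are empty on $E_i$. Collecting the four blocks and applying a union bound yields $\prob[\bad]\le\prob[T_{m^*-3}]+\prob[L_{m^*-2}]+\prob[L_{m^*-1}]+\prob[L_{m^*}\cup U_{m^*}]$, as claimed. The only genuinely delicate point is this threshold bookkeeping: it is the deliberate doubling in the definition of $\thresh$ relative to the $4.92\,\rho(1+\frac1\varepsilon)^2$ scale of $m^*$ that makes the upper-tail contributions below $m^*$ vanish, and verifying $(1+\varepsilon)\mu_i>\thresh$ uniformly in $\varepsilon>0$ is where the constants must be tracked carefully.
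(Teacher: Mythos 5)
Your proof is correct and follows essentially the same route as the paper's: the same starting point (Eq.~\eqref{eq:bad-ub}), the same split of the index range at $m^*-3$ and $m^*$, the containments $E_i\subseteq T_i\subseteq T_{m^*-3}$ for the low block and $E_i\subseteq\overline{T_{m^*}}\subseteq U_{m^*}$ for the high block via Eq.~\eqref{eq:thresh_monotone}, and the same constant bookkeeping to kill the upper-tail events at $m^*-1$ and $m^*-2$. The only (cosmetic) difference is that at the two boundary levels you rule out the sharper over-estimate $\Cnt{F}{i}>(1+\varepsilon)\mu_i$ by showing $(1+\varepsilon)\mu_i>\thresh$, whereas the paper's observation O2 shows $T_i\cap U_i=\emptyset$ directly from $\thresh\leq\mu_i(1+\tfrac{\varepsilon}{1+\varepsilon})$; both rest on the identical definitions of $m^*$ and $\thresh$.
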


 The following lemma utilizes the key property of concentrated hash families stated in Proposition ~\ref{prop:probBounds} to bound the probabilities of the concerned events. 

\begin{restatable}{lemma}{auxbounds}
    \label{lm:aux-bounds}
 If $\mathcal{H}$ is prefix-$(\rho, \qs, \mathrm{pivot})$-concentrated family for $\pivot =78.72\cdot \rho(1+\frac{1}{\varepsilon})^2$, then the following bounds hold:
\begin{enumerate} 

\item\label{lm:aux-bound:lm-um}  $\prob[L_{m^*} \cup U_{m^*}] \leq \frac{1}{4.92}$

 \item  \label{lm:aux-bound:lm1} 
 $\prob[L_{m^*-1}] \leq  \frac{1}{10.84}$ 

\item \label{lm:aux-bound:lm2} 
$\prob[L_{m^*-2}] \leq \frac{1}{20.68}$

\item \label{lm:aux-bound:lm3} 
$\prob[T_{m^*-3}] \leq \frac{1}{62.5}$ 

\end{enumerate}
\end{restatable}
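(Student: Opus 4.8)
The plan is to read each of the four quantities as a tail-deviation probability for the single random variable $\Cnt{F}{i}$ at an appropriate level $i\in\{m^*,m^*-1,m^*-2,m^*-3\}$, and to feed it directly into the two inequalities of Proposition~\ref{prop:probBounds}. The essential preliminary is to check that all four levels lie in the regime where the prefix-$(\rho,\qs,\pivot)$-concentrated property is usable, i.e. that $i\geq \qs$ and $|\satisfying{F}|\leq \pivot\cdot 2^{i}$. We have already established $m^*-3\geq \qs$, and since $\expect[\Cnt{F}{m^*-3}]=|\satisfying{F}|/2^{m^*-3}=8\,\expect[\Cnt{F}{m^*}]$, the upper bound $\expect[\Cnt{F}{m^*}]<9.84\,\rho(1+\tfrac1\varepsilon)^2$ coming from \eqref{eq:mstardef} gives $|\satisfying{F}|<78.72\,\rho(1+\tfrac1\varepsilon)^2\cdot 2^{m^*-3}=\pivot\cdot 2^{m^*-3}$. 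This is precisely why $\pivot$ is set to $78.72\,\rho(1+\tfrac1\varepsilon)^2$: the membership constraint is tightest at the deepest (smallest) index $m^*-3$, so once it holds there it holds at $m^*-2,m^*-1,m^*$ as well, and at each level the mean is exactly $|\satisfying{F}|/2^{i}$ and the dispersion index is at most $\rho$. Writing $P=\rho(1+\tfrac1\varepsilon)^2$, I would record the two facts used throughout: $4.92P\leq \expect[\Cnt{F}{m^*}]<9.84P$ (from \eqref{eq:mstardef}) and $\expect[\Cnt{F}{m^*-j}]=2^{j}\,\expect[\Cnt{F}{m^*}]$.

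For the first bound I would observe that $L_{m^*}\cup U_{m^*}$ is exactly the two-sided event $\bigl|\Cnt{F}{m^*}-\expect[\Cnt{F}{m^*}]\bigr|\geq \tfrac{\varepsilon}{1+\varepsilon}\expect[\Cnt{F}{m^*}]$, since $\tfrac{1}{1+\varepsilon}=1-\tfrac{\varepsilon}{1+\varepsilon}$ places both thresholds at distance $\tfrac{\varepsilon}{1+\varepsilon}\expect[\Cnt{F}{m^*}]$ from the mean. Applying the Chebyshev inequality of Proposition~\ref{prop:probBounds}(1) with $\beta=\tfrac{\varepsilon}{1+\varepsilon}$ and using $\beta^2\expect[\Cnt{F}{m^*}]\geq \tfrac{\varepsilon^2}{(1+\varepsilon)^2}\cdot 4.92P=4.92\rho$ yields the claimed $\tfrac{1}{4.92}$. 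The next two quantities are one-sided lower-tail events $\Cnt{F}{m^*-j}<\tfrac{1}{1+\varepsilon}\expect[\Cnt{F}{m^*-j}]$, for which I would use the Paley--Zygmund inequality of Proposition~\ref{prop:probBounds}(2) with $\beta=\tfrac{1}{1+\varepsilon}$ rather than Chebyshev, as it gives the sharper $\tfrac{\rho}{\rho+(1-\beta)^2\expect[\cdots]}$ form. Here $(1-\beta)^2\expect[\Cnt{F}{m^*-j}]=\tfrac{\varepsilon^2}{(1+\varepsilon)^2}\cdot 2^{j}\expect[\Cnt{F}{m^*}]\geq 4.92\cdot 2^{j}\rho$, which for $j=1$ is $9.84\rho$ (giving $\tfrac{\rho}{\rho+9.84\rho}=\tfrac{1}{10.84}$) and for $j=2$ is $19.68\rho$ (giving $\tfrac{1}{20.68}$).

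The last bound, on $\prob[T_{m^*-3}]=\prob[\Cnt{F}{m^*-3}<\hiThresh]$, is the delicate one and I expect it to be the main obstacle. The plan is again Paley--Zygmund, now with $\beta=\hiThresh/\expect[\Cnt{F}{m^*-3}]$, reducing the target to showing $\tfrac{(\expect[\Cnt{F}{m^*-3}]-\hiThresh)^2}{\expect[\Cnt{F}{m^*-3}]}\geq 61.5\,\rho$. The difficulty is that both $\hiThresh=1+9.84\bigl(1+\tfrac{\varepsilon}{1+\varepsilon}\bigr)P$ and $\expect[\Cnt{F}{m^*-3}]\in[39.36P,\,78.72P)$ scale with $P$, and the coupling $1+\tfrac1\varepsilon=(1+\varepsilon)/\varepsilon$ turns $P$ into $\rho/u^2$ for $u=\tfrac{\varepsilon}{1+\varepsilon}\in(0,1)$, so the inequality must be verified uniformly in $u$. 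I would first note that $\mu\mapsto(\mu-\hiThresh)^2/\mu$ is increasing for $\mu>\hiThresh$, so it suffices to treat the worst case $\expect[\Cnt{F}{m^*-3}]=39.36P$ (attained when $\expect[\Cnt{F}{m^*}]=4.92P$); substituting $P=\rho/u^2$ and $\hiThresh-1=9.84(1+u)P$ then collapses everything to a single-variable inequality in $u$, with the benign additive $1$ absorbed using $P\geq\rho\geq 1$. Discharging this residual numeric inequality over $u\in(0,1)$ is the computational crux, and it is exactly where the constant $78.72$ (equivalently the value of $\pivot$) has been calibrated so that the bound closes; I would carry it out by bounding $1+u$ and $u^2$ and checking the resulting scalar inequality. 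Combining the four estimates then gives the lemma, which together with Lemma~\ref{lm:upperbound} controls $\prob[\bad]$.
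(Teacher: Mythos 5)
Your overall strategy is exactly the paper's: read each event as a deviation of $\Cnt{F}{i}$ from $\mu_i=|\satisfying{F}|/2^i$, check that the concentration property is usable at every level (which is why $\pivot=8\cdot 9.84\,\rho(1+\tfrac1\varepsilon)^2$ and why $m^*-3\geq\qs$ matters), and then apply Proposition~\ref{prop:probBounds} with a suitable $\beta$ at each level. Your verifications of items (1)--(3) are complete and correct: $\beta=\tfrac{\varepsilon}{1+\varepsilon}$ with Chebyshev at level $m^*$ gives $\beta^2\mu_{m^*}\geq 4.92\rho$, and $\beta=\tfrac{1}{1+\varepsilon}$ with Paley--Zygmund at levels $m^*-1,m^*-2$ gives $(1-\beta)^2\mu_{m^*-j}\geq 4.92\cdot 2^j\rho$, exactly reproducing $\tfrac{1}{4.92},\tfrac{1}{10.84},\tfrac{1}{20.68}$. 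This is precisely the substitution the paper's (very terse) proof alludes to.

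The gap is item (4), which you correctly flag as the crux but then do not discharge, asserting instead that the constant has been calibrated so that the residual inequality in $u=\tfrac{\varepsilon}{1+\varepsilon}$ closes. It does not close over the full stated range. Carrying out your own reduction: at the worst case $\mu_{m^*}=4.92P$ with $P=\rho(1+\tfrac1\varepsilon)^2=\rho/u^2$, one gets $\mu_{m^*-3}-\hiThresh = 9.84(3-u)P-1$, hence (ignoring the additive $1$) $\frac{(\mu_{m^*-3}-\hiThresh)^2}{\mu_{m^*-3}} = \frac{9.84^2(3-u)^2P}{39.36} = 2.46\,\rho\,\frac{(3-u)^2}{u^2}$, and the target $\geq 61.5\rho$ is equivalent to $\frac{3-u}{u}\geq 5$, i.e.\ $u\leq\tfrac12$, i.e.\ $\varepsilon\leq 1$. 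So the bound holds with slack for $\varepsilon<1$ (e.g.\ $\varepsilon=0.8$ gives about $80\rho\geq 61.5\rho$), is an equality at $\varepsilon=1$ (and actually fails marginally there once the $+1$ in $\hiThresh$ is accounted for: one gets $\approx 60.26\rho$), and fails badly for larger $\varepsilon$ (at $\varepsilon=3$, $\rho=1$ the quantity is only $\approx 21$, yielding $\Pr[T_{m^*-3}]\leq\tfrac{1}{22}$, not $\tfrac{1}{62.5}$). Switching to Chebyshev, or to $\beta=\tfrac{1}{1+\varepsilon}$ via the containment $T_{m^*-3}\subseteq L_{m^*-3}$, does not rescue the constant either (the latter yields at best $\tfrac{1}{40.36}$). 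So the missing step in your write-up is not a routine calculation that "just closes": it closes only under an implicit restriction $\varepsilon<1$ that neither you nor the paper states. To be fair, the paper's own proof of this lemma is no more detailed than yours and inherits the same issue; but since you explicitly identified the scalar inequality as the crux, you needed to either verify it or surface the restriction on $\varepsilon$ under which it holds.
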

\begin{proof}
	Note that $\prob[T_i] = \prob[\Cnt{F}{i} \leq \thresh]$ and  $\prob[L_i] = \prob\left[\Cnt{F}{i} \right.$ $\left. \leq  (1+\varepsilon)^{-1} \mu_{i}\right]$. Furthermore, \\ $\prob[L_i \cup U_i] = \prob\left[\Cnt{F}{i} - \mu_{i}|\geq \frac{\varepsilon}{1+\varepsilon} \mu_{i}\right] $
	To obtain bounds, we substitute  values of  $m^*$, $\hiThresh$, $\mu_i$, and we seek to apply  Proposition~\ref{prop:probBounds} with appropriate values of $\beta$. We observe that to obtain (~\ref{lm:aux-bound:lm-um}), it is sufficient to employ $(\rho, m^*, \frac{\pivot}{8})$ concentrated family; Similarly, to obtain (~\ref{lm:aux-bound:lm1}), (~\ref{lm:aux-bound:lm2}) , (~\ref{lm:aux-bound:lm3}), it is sufficient to employ $(\rho, m^*-1, \pivot/4)$, $(\rho,  m^*-2, \frac{\pivot}{2})$, $(\rho, m^*-3, \pivot)$ concentrated families respectively. Proposition~\ref{prop:conc-mono} allows us to conclude that $(\rho, m^*-3, \pivot)$  concentrated family  suffices to obtain the above bounds.  Since $m^*-3 \geq \qs$, we conclude that $(\rho, \qs,\pivot)$-concentrated family suffices to obtain the above bounds.	 
\end{proof}

Combining Lemma~\ref{lm:upperbound} with the observation that {\SparseScalMCCore} is invoked $\mathcal{O}(\log \frac{1}{\delta})$ times and we return median as the estimate, we obtain the following correctness and time complexity for {\SparseScalMC} by using the standard Chernoff analysis for the amplification of probability bounds. 

\scalmcthm*

The correctness and time complexity of {\SparseScalMC} have exactly the same expression as that of {\ScalApproxMC}. Theorem~\ref{thm:main-scalmc} highlights that prefix-$(\rho,  \qs, \pivot)-$concentrated hash family are sufficient to provide $(\varepsilon,\delta)$ estimates. In fact, in our experimental results that we discuss next, we will use a sparse hash function belonging to this family. %

\subsection{Further Optimizations }\label{sec:opt}

As mentioned earlier, {\BoundedSAT} is a subroutine that takes in a formula $F$ and threshold $\thresh$, and uses a NP oracle to enumerate solutiosn of $F$ one by one until we have found the desired threshold number of solutions or there are no more solutions. The practical implementation of {\BoundedSAT} replaces NP oracle with SAT solver and and as such for a fixed formula $F$, the runtime of {\BoundedSAT} depends on $\thresh$. The usage of $(\rho, \qs, \pivot)$-concentrated family leads to invocation of {\BoundedSAT} with threshold set to $\iniThresh$ in line~\ref{line:scalmc-boundedsat} of {\ScalApproxMC} algorithm. Therefore, for practical efficiency, it is desirable to construct concentrated families with as small values of $\qs$ as possible. The bound on $\qs$ provided by the proof of Theorem~\ref{thm:main-rennes} is prohibitively large ($\qs > 70$) even for $n=10$.  To this end, we turn to analytical techniques aided by scientific programming in Python. 

For given $n$, $k$, and $\rho$, we seek to compute as small values of $p_i$ as possible while satisfying $p_i \geq p_{i+1}$.  As a first step, we observe that the upper bound for $c_{S}(w)$ employed above is a loose upper bound and accordingly, 
the bounds on the constants $p_i$ (as well as $k$ and the large enough value of $m$) obtained from our analysis above are very loose.  To this end, we compute $c_{S}(w)$ based on the Eq 4.2 and Eq 4.5 obtained in~\cite{R18}, as indicated in the proof of Lemma~\ref{lm:disperson}. We then compute the values of $p_i$ for $\qs = 1$, $k = 512$, and $\rho=1.1$. The particular values for $\rho$ and $k$ were chosen due to their usage in experimental evaluation of {\SparseScalMC}. We call the resulting family ${\Hrennes{lsa}} $ and  employ ${\Hrennes{lsa}}$ in our empirical evaluation. 

Figure~\ref{fig:trend} plots the values of computed $p_i$ vis-a-vis $i$ We also plot another curve $f(i) = \frac{1.6 \log_2 (i+1)}{i}$. It is interesting to observe that the two curves fit nicely to each other. To illustrate the gap between observed and theoretical bound, we plot the bound on $p$ obtained from Theorem~\ref{thm:main-rennes} as $g(i) = \frac{32 \log_2 (i+1)}{\delta \cdot i}$ noting that $H^{-1}(\delta) < \frac{\delta}{2}$. 

\begin{figure}[htb]
	\includegraphics[width=0.8\linewidth]{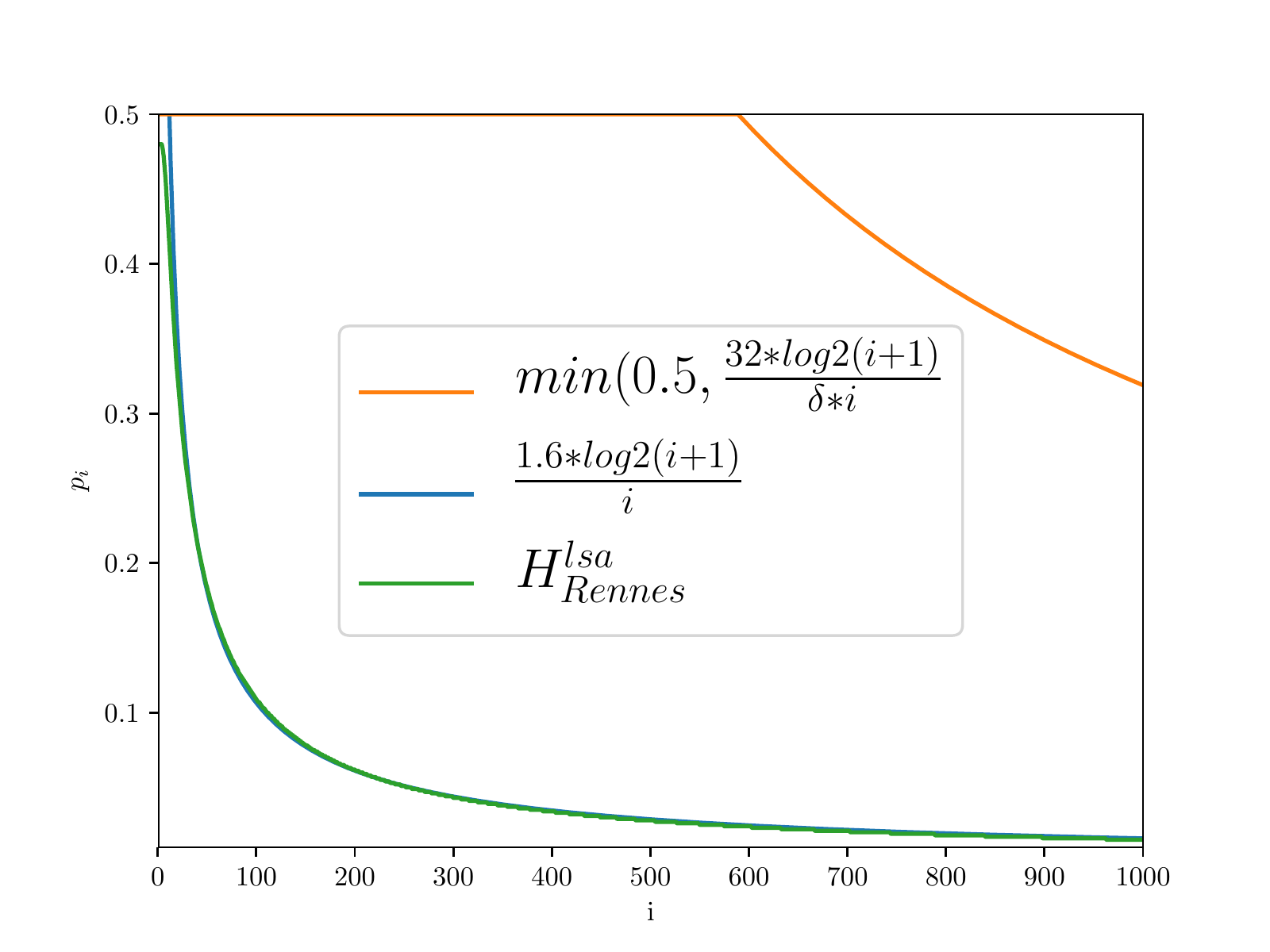}
	\caption{Trend of $p_i$ vis-a-vis $i$ }
	\label{fig:trend}
\end{figure}
The large difference between the two plots clearly illustrates the potential for improvement of constants in Theorem~\ref{thm:main-dispexpression} and we leave this as a natural direction of future work. Furthermore, we conjecture existence of sparse prefix hash functions with $p_m = \mathcal{O}(\frac{\log m}{m})$ belonging to  $(\rho,1, \kappa)$-concentrated family.

\begin{table*}[h]
	\scriptsize
	\centering
	\begin{tabular}{ |p{3cm}|c|c|c|c|c|c|c| }

		\hline \multicolumn{1}{|c|}{\textbf{Benchmark}} & \multicolumn{1}{c|}{\textbf{Vars}} & \multicolumn{1}{c|}{\textbf{Clauses}} & 
		\multicolumn{1}{|c|}{$|\mathcal{P}|$} &
		\multicolumn{1}{c|}{$\log_2$(Count)} & 
		\multicolumn{1}{c|}{\textbf{{\ScalApproxMC} time}} & 
		\multicolumn{1}{c|}{\textbf{{\SparseScalMC} time}} & \multicolumn{1}{c|}{\textbf{Speedup}} \\ \hline  
		
		\hline
		
		10B-1&15390&68337&174&56.17&4274.56&--&--\\\hline
or-100-20-7-UC-40&200&539&200&56.55&3526.45&--&--\\\hline		
		03B-4&27966&123568&114&28.55&983.72&1548.96&0.64\\\hline
		blasted\_TR\_b12\_2\_linear&2426&8373&107&63.93&32.07&56.75&0.57\\\hline
		blasted\_squaring23&710&2268&61&23.11&0.66&1.21&0.55\\\hline
		blasted\_case144&765&2340&138&82.07&102.65&202.06&0.51\\\hline
		
		modexp8-4-6&83953&316814&88&32.13&788.23&920.34&0.86\\\hline
		or-70-5-5-UC-20&140&360&140&43.91&675.1&788.74&0.86\\\hline

min-28s&3933&13118&464&459.23&48.63&35.83&1.36\\\hline
90-14-8-q&924&811&924&728.29&242.07&178.93&1.35\\\hline
s9234a\_7\_4&6313&14555&247&246.0&4.77&2.45&1.95\\\hline
min-8&1545&4230&288&284.78&8.86&4.59&1.93\\\hline
s13207a\_7\_4&9386&20635&700&699.0&34.94&17.05&2.05\\\hline
min-16&3065&8526&544&539.88&33.67&16.61&2.03\\\hline
90-15-4-q&1065&911&1065&839.25&273.1&135.75&2.01\\\hline

s35932\_15\_7&17918&44709&1763&1761.0&--&72.32&--\\\hline
s38417\_3\_2&25528&57586&1664&1663.02&--&71.04&--\\\hline

75-10-8-q&460&465&460&360.13&--&4850.28&--\\\hline
90-15-8-q&1065&951&1065&840.0&--&3717.05&--\\\hline

	\end{tabular}
	\caption{Runtime performance comparison of {\SparseScalMC} vis-a-vis {\ScalApproxMC}. (Timeout:  5000 seconds)}
	\label{tab:performance}
\end{table*}
\section{Experimental Evaluation}\label{sec:evaluation}
In this section, we evaluate the performance of our approximate model counting algorithm {\SparseScalMC} using the prefix $(1.1, 1,  \pivot)$-concentrated hash family {\Hrennes{lsa}} \footnote{Our theoretical analysis of {\SparseScalMC} allows all values of $\rho \geq 1$ and $\qs > 1$; we leave further optimization of the choice of $\rho$ as future work.}.  For all our experiments, we used $\varepsilon = 0.8$ and $\delta=0.1$, which is in line with the chosen values for these parameters in previous studies on counting. The setting of $\varepsilon = 0.8$ yields $\mathrm{pivot}$ to be $512$.  Recall that prior empirical studies had to sacrifice theoretical guarantees due to their reliance on far fewer invocations of SAT solver than those dictated by the theoretical analysis~\cite{EGSS14a,ZCSE16,AT17,AHT18}. In contrast, we use a faithful implementation of {\SparseScalMC} that retains theoretical guarantees of $(\varepsilon, \delta)$ approximation. {\SparseScalMC} is publicly available as an open source software at: \url{https://github.com/meelgroup/approxmc}. 

To evaluate the runtime performance and quality of approximations computed by {\SparseScalMC}, we conducted a comprehensive performance evaluation of counting algorithms involving 1896 benchmarks. Most practical applications of model counting reduce to projected counting and therefore, keeping in line with the prior work, we experiment with benchmarks arising from wide range of application areas including probabilistic reasoning, plan recognition, DQMR networks, ISCAS89 combinatorial circuits, quantified information flow, program synthesis, functional synthesis, logistics, as have been previously employed in studies on model counting~\cite{CMV16,LM17}.  We perform runtime comparisons with {\ScalApproxMC} as {\ScalApproxMC} was shown to be state of the art approximate counter with significant performance gain over other approximate counters~\cite{SGM20,SM19}. 

The objective of our experimental evaluation was to answer the following questions:
\begin{enumerate}
	\item How does runtime performance of {\SparseScalMC} compare with that of {\ScalApproxMC}? 
	\item How far are the counts computed by {\SparseScalMC} from the exact counts?
\end{enumerate}

The experiments were conducted on a high performance computer cluster, with each node consisting of an E5-2690 v3 CPU with 24 cores and 96GB of RAM such that each core's access was restricted to 4GB. The computational effort for the evaluation consisted of over 20,000 hours. We used timeout of 5,000 seconds for each experiment, which consisted of running a tool on a particular benchmark. To further optimize the running time for both {\ScalApproxMC} and {\SparseScalMC}, we used improved estimates of the iteration count $t$ following an analysis similar to that in~\cite{CMV16}. 

\subsection{Results}

\subsection*{Runtime performance}

We present the runtime comparison of {\SparseScalMC} vis-a-vis {\ScalApproxMC} in Table~\ref{tab:performance} on a subset of our benchmarks~\footnote{The entire set of benchmarks and the corresponding set of  logs generated by {\ScalApproxMC} and {\SparseScalMC} are available at  \url{https://doi.org/10.5281/zenodo.3766168}}. Column 1 specifies the name of the benchmark, while columns 2 and 3 list the number of variables and clauses, respectively. Column 4 Column 4 lists the $\log_2$ of the estimate returned by {\SparseScalMC}. Columns 5 and 6 list the runtime (in seconds) of {\SparseScalMC} and {\ScalApproxMC} respectively. Column 7 indicates speedup of {\SparseScalMC} over {\ScalApproxMC}. We observe the following:%
\begin{enumerate}
	\item {\SparseScalMC} significantly outperforms {\ScalApproxMC} for a large set of benchmarks. We observe that {\SparseScalMC} is able to compute estimates for formulas for which {\ScalApproxMC} timed out. Furthermore, {\SparseScalMC} is also significantly faster for most of the benchmarks where {\ScalApproxMC} does not timeout. 
	
	\item  Recall that the density of XORs decreases with increase in $\log_2 |\satisfying{F}|$ and we observe that the performance of {\SparseScalMC} too improves further as the number of solutions of $F$ increases. It is worth noting that for a subset of benchmarks, {\SparseScalMC} is slower than {\ScalApproxMC}.
\end{enumerate}
Upon further investigation, we observe a strong correlation between the speedup and the $\log_2$ of the number of solutions. It is worth recalling that the number of XORs required to ensure that a randomly chosen cell is small is close to $\log_2$ of the number of solutions. Since for a fixed number of variables, the sparsity increases with the number of XORs, there is a tradedoff between the gains due to sparse XORs over the increased overhead of requirement of enumerating higher number of solutions due to increased $\thresh$. 
It is worth viewing the runtime improvement in the context of prior work where significant slowdown was observed. 

\subsection*{Approximation Quality}

\begin{figure}[h]
	\centering
	\includegraphics[width=0.8\linewidth]{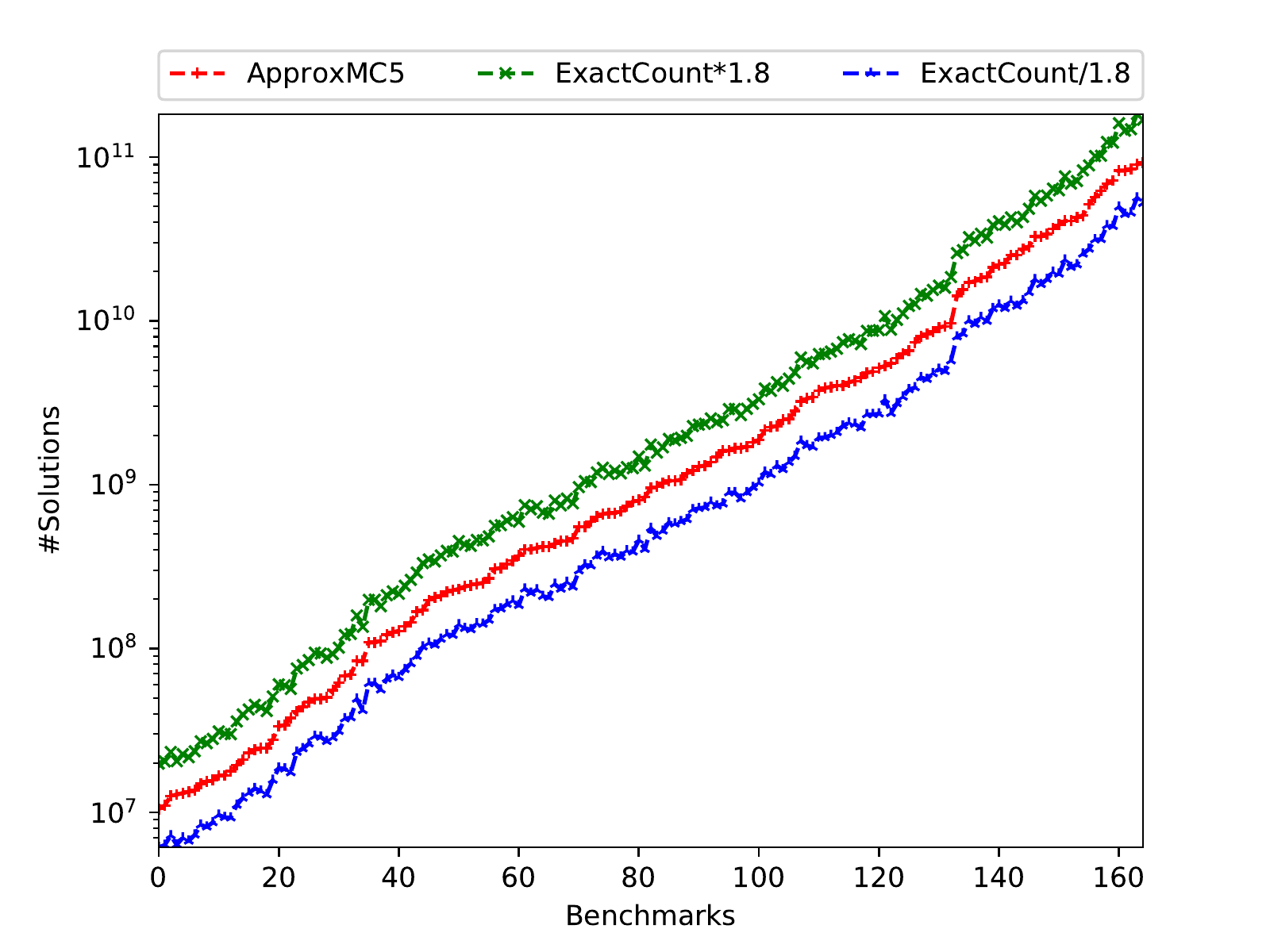}
	\caption{Plot showing counts obtained by {\SparseScalMC} vis-a-vis exact counts from {\DSharp} }
	\label{fig:quality}
\end{figure}

To measure the quality of approximation, we compared the approximate
counts returned by {\SparseScalMC} with the counts computed by an
exact model counter, viz. {\DSharp}.
Figure~\ref{fig:quality} shows the model counts computed by
{\SparseScalMC}, and the bounds obtained by scaling the exact counts
with the tolerance factor $(\varepsilon = 0.8)$ for a small subset of
benchmarks. The $y$-axis represents model counts on log-scale while
the $x$-axis represents benchmarks ordered in ascending order of model
counts. We observe that for {\em all} the benchmarks, {\SparseScalMC}
computed counts within the tolerance.  Furthermore, for each instance,
the observed tolerance ($\varepsilon_{obs}$) was calculated as
max($\frac{|\satisfying{F}|}{\mathrm{AprxCount}}-1,\frac{\mathrm{AprxCount}}{|\satisfying{F}|}-1$),
where $\mathrm{AprxCount}$ is the estimate computed by
{\SparseScalMC}. We observe that the arithmetic mean of
$\varepsilon_{obs}$ across all benchmarks is $0.05$ -- far better
than the theoretical guarantee of $0.8$. 

\section{Conclusion}\label{sec:conclusion}
Our investigations were motivated by the runtime performance of {\SAT} solvers on sparse hash functions. As a first step, we observed that several applications of universal hashing including approximate counting are inherently concerned with concentration bounds provided by universal hash functions. This led us to introduce a relaxation of universal hash functions, christened as $(\rho,\qs, k)$-concentrated hash functions. 
The usage of $(\rho,\qs,k)-$concentrated hash functions ensure that dispersion index for the random variable, $\Cnt{F}{m}$ is bounded by the constant $\rho$.
  We use our bounds to construct sparse hash functions, named $\Hrennes{k}$ where each entry of $A[i]$ is chosen with probability $p_i = \mathcal{O} (\frac{\log_2 i}{i})$. Finally, we replace strong 2-universal hash functions with $\Hrennes{lsa}$ (an analytically computed variant of $\Hrennes{k}$) and implement the resulting algorithm demonstrating significant speedup compared to the state-of-the-art in approximate model counters.

We believe that the concentrated hash functions constructed here could have many potential applications in other domains such as discrete integration, streaming, and the like. This work suggests two interesting directions of future research: 
\begin{itemize}
	\item Design of explicit constructions of sparse hash functions belonging to $(\rho, \qs, k)$-concentrated family for all values of $\qs$, ideally for $\qs = 1$. 
	\item Design of hashing-based techniques where the usage of sparse hash functions performs as good as or better than those based on dense XORs for almost all the benchmarks. 
\end{itemize}

\clearpage
\appendix

\begin{center}
  {\Large \bf Appendix}
  \end{center}
\section{Proofs and details from Preliminaries Section}
\begin{proposition}%
	Let $\mathcal{H}(n,m)$ be a 2-universal hash family and let $h \xleftarrow{R} \mathcal{H}(n,m)$, then $\forall S \subseteq \{0,1\}^n$, we have 
	\begin{align*}
	\expect[|\FullCell{S,h,\alpha}|] = \frac{|S|}{2^m}\\
	\sigma^2[|\FullCell{S,h,\alpha}|] \leq  \expect[|\FullCell{S,h,\alpha} |]
	\end{align*}
\end{proposition}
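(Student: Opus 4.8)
The plan is to prove this by the standard second-moment method: express the cell size as a sum of indicators, read off the expectation from uniformity, and use pairwise independence to kill the covariance cross-terms. First I would fix $\alpha\in\{0,1\}^m$ and, for each $x\in S$, introduce the indicator $\gamma_x = \mathbf{1}[h(x)=\alpha]$, so that $N \triangleq |\FullCell{S,h,\alpha}| = \sum_{x\in S}\gamma_x$. Strong $2$-universality gives $\prob[h(x)=\alpha] = 2^{-m}$, hence $\expect[\gamma_x]=2^{-m}$, and linearity of expectation immediately yields $\expect[N] = |S|\cdot 2^{-m} = |S|/2^m$, establishing the first claim. (This is the non-prefix, $m$-dimensional analogue of Proposition~\ref{lm:universal-bound}.)

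For the variance I would expand $\sigma^2[N] = \sum_{x,y\in S}\mathrm{Cov}(\gamma_x,\gamma_y)$ and split the double sum into the diagonal ($x=y$) and off-diagonal ($x\neq y$) parts. On the diagonal each $\gamma_x$ is Bernoulli with mean $2^{-m}$, so $\mathrm{Var}(\gamma_x) = 2^{-m}(1-2^{-m}) \leq 2^{-m} = \expect[\gamma_x]$; summing over $x\in S$ bounds the diagonal contribution by $\expect[N]$. For the off-diagonal part I would invoke pairwise independence: for $x\neq y$, strong $2$-universality gives $\prob[h(x)=\alpha,\,h(y)=\alpha] = 2^{-2m} = \expect[\gamma_x]\expect[\gamma_y]$, so $\mathrm{Cov}(\gamma_x,\gamma_y)=0$ and these terms vanish. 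Combining, $\sigma^2[N] \leq \expect[N]$, which is the second claim (equivalently, the dispersion index is at most $1$).

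The one point requiring care — and what I would flag as the main obstacle — is extracting full pairwise independence $\prob[h(x)=\alpha,\,h(y)=\beta]=2^{-2m}$ from the stated definition, whose displayed equation only records the uniformity $\prob[h(x)=\alpha]=2^{-m}$ and the collision probability $\prob[h(x)=h(y)]=2^{-m}$. For the explicit affine family $x\mapsto \myMatrix{A}x+\myMatrix{b}$ with uniform $\myMatrix{b}$ this is transparent: conditioning on $\myMatrix{A}$, the uniform additive shift $\myMatrix{b}$ makes $(h(x),h(y))$ jointly uniform over $\{0,1\}^m\times\{0,1\}^m$ for distinct $x,y$, giving the required joint probability. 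Alternatively, one avoids pairwise independence altogether and argues directly through $\expect[N^2] = \sum_{x}\prob[h(x)=\alpha] + \sum_{x\neq y}\prob[h(x)=\alpha,\,h(y)=\alpha]$, where the first sum equals $\expect[N]$ and bounding each pairwise term by $2^{-2m}$ makes the second sum at most $\expect[N]^2$; subtracting $\expect[N]^2$ in $\sigma^2[N]=\expect[N^2]-\expect[N]^2$ then yields $\sigma^2[N]\leq\expect[N]$ as before.
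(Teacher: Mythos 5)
Your proof is correct and follows essentially the same route as the paper's: decompose $|\FullCell{S,h,\alpha}|$ into indicators, get the expectation by linearity, and bound the variance via the second moment with the pairwise term $\prob[h(x)=\alpha,\,h(y)=\alpha]\leq 2^{-2m}$ (your ``alternative'' at the end is literally the paper's computation). Your flagged concern about extracting the joint probability from the stated definition is a fair one — the paper's proof simply asserts $\expect[\gamma_{y,\alpha}\gamma_{z,\alpha}]=2^{-2m}$ without comment — and your resolution via the uniform shift $\myMatrix{b}$ in the explicit affine family is the right way to justify it.
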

\begin{proof}
	For $y \in \{0,1\}^n$, define the indicator variable $\gamma_{y, \alpha}$ such that $\gamma_{y, \alpha} = 1 $ if $h(y) = \alpha$ and 0 otherwise. Now, 
	
	\begin{align*}
	 \expect[\gamma_{y, \alpha} ] = \prob[h(y) = \alpha] = \frac{1}{2^m} 
	\text{; Thus, } \\ \expect{|\FullCell{S,h,\alpha}|} = \sum_{y	\in S} \expect[\gamma_{y, \alpha}] = \frac{|S|}{2^m}
	\end{align*}
		Note that $\expect[\gamma_{y, \alpha} \cdot \gamma_{z, \alpha} ] = \prob[h(y) = \alpha \wedge  h(z) = \alpha] = \left(\frac{1}{2^m}\right)^2$.\\ Thus, $\sum_{y,z \in S \mid y \neq z} \expect[ \gamma_{y, \alpha} \cdot \gamma_{z, \alpha}] \leq \frac{|S| (|S|-1)}{2} \left(\frac{1}{2^m}\right)^2 \leq (\expect[|\FullCell{S,h,\alpha}|])^2$
	
	Therefore, 
	\begin{align*}
	\sigma^2_{|\FullCell{S,h,\alpha}|} &= \expect{|\FullCell{S,h,\alpha}|} \\+& \sum_{y,z \in S \mid y \neq z} \expect[ \gamma_{y, \alpha} \cdot \gamma_{z, \alpha}] - (\expect[|\FullCell{S,h,\alpha}|])^2 \\ & \leq \expect[|\FullCell{S,h,\alpha} |]
	\end{align*}

\end{proof}

\concmono*

\begin{proof}
The proof follows immediately from the following three simple observations:
\begin{enumerate}
	\item If a property $\Psi(|S|)$ holds for all $S$ such that $|S| \leq k \cdot 2^m$, then the property $\Psi(|S|)$ also holds for all $S$ such that $|S| \leq k' \cdot 2^m$ for  $k' \leq k$ and $k', k \in \mathbb{N}$. 
	\item If a property $\Psi(m)$ holds for each $m \geq \qs$, then $\Psi(m)$ holds for each $m \geq \qs'$ for $\qs' \geq \qs$. 
	\item $\frac{\sigma^2[\Cnt{S}{m}]}{\expect[\Cnt{S}{m}]} \leq \rho$ implies $\frac{\sigma^2[\Cnt{S}{m}]}{\expect[\Cnt{S}{m}]} \leq \rho'$ for $\rho' \geq \rho$. 
\end{enumerate}
\end{proof}

\probbounds*

\begin{proof}
For every $y \in \{0, 1\}^{n}$ and for every $\alpha \in \{0,
1\}^{i}$, define an indicator variable $\gamma_{y, \alpha, i}$ which
is $1$ iff $h^{(i)}(y) = \alpha$.  Let $\Gamma_{\alpha, i} = \sum_{y
  \in \satisfying{F}} \left(\gamma_{y, \alpha, i}\right)$,
~$\mu_{\alpha, i} = \expect\left[\Gamma_{\alpha, i}\right]$ and
$\sigma^2_{\alpha, i} = \sigma^2\left[\Gamma_{\alpha, i}\right]$.
Clearly, $\Gamma_{\alpha, i} =
\SatisfyingHashed{F}{h^{(i)}}{\alpha}$ and $\mu_{\alpha,
  i} = 2^{-i}|\satisfying{F}|$.  Note that $\mu_{\alpha,i}$
is independent of $\alpha$ and equals $\mu_i$, as defined in the
statement of the Lemma.  By definition of concentrated hash functions, we have  $\frac{\sigma^2_{i}}{\mu_{i}} \leq \rho$ for $|\satisfying{F}| \leq \pivot \cdot 2^i$, i.e., for $ i \geq \log_2 (|\satisfying{F}|) - \log_2 (\pivot)$. Hence statements 1 and 2 of the lemma then follow from Chebhyshev inequality and Paley-Zygmund inequality, respectively.
\end{proof}

\begin{definition}\label{def:var-k-delta}~\cite{EGSS14a}
	Let $Y$ be random variable with $\mu = \expect[Y]$. Then $Y$ is strongly$-(\zeta,\eta)-$concentrated if $\Pr[|Y-\mu| \geq \sqrt{\zeta}] \leq \frac{1}{\eta}$.  
\end{definition}

\begin{proposition}\label{prop:var-k-conc}
	If $\mathcal{H}$ is prefix-$(\rho, \qs, k)$-concentrated family, then for every  $0 < \beta < 1$, $\qs \leq m \leq n$,  and for all $|S| \leq 2^m\cdot k$, then the random variable $\Cnt{S}{m}$ is  strongly-$\left ( (\beta \expect[\Cnt{S}{m}])^2, \frac{\beta^2 \expect[\Cnt{S}{m}]}{\rho}\right)$ concentrated.
\end{proposition}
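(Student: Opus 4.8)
The plan is to recognize that this proposition is simply a restatement, in the language of Definition~\ref{def:var-k-delta}, of the Chebyshev-type bound already packaged in Proposition~\ref{prop:probBounds}. So there is essentially no new work to do beyond a careful matching of parameters. First I would fix $m$ with $\qs \le m \le n$ and an arbitrary $S$ with $|S| \le 2^m\cdot k$, and abbreviate $\mu = \expect[\Cnt{S}{m}]$. Since $\mathcal{H}$ is prefix-$(\rho,\qs,k)$-concentrated by hypothesis, item~1 of Proposition~\ref{prop:probBounds} applies directly and gives, for every $0 < \beta < 1$,
\[
\prob\left[\left|\Cnt{S}{m} - \mu\right| \ge \beta\,\mu\right] \;\le\; \frac{\rho}{\beta^2\,\mu}.
\]

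Next I would translate this into the strongly-concentrated vocabulary by choosing the two parameters of Definition~\ref{def:var-k-delta} appropriately. Taking $\zeta = (\beta\mu)^2$, and using $\beta,\mu > 0$, we have $\sqrt{\zeta} = \beta\mu$, so the event $\{\,|\Cnt{S}{m}-\mu|\ge \sqrt{\zeta}\,\}$ coincides exactly with $\{\,|\Cnt{S}{m}-\mu|\ge \beta\mu\,\}$. Taking $\eta = \frac{\beta^2\mu}{\rho}$, we have $\frac{1}{\eta} = \frac{\rho}{\beta^2\mu}$, which is precisely the right-hand side of the displayed inequality. Substituting these two identities, the bound above reads $\prob[\,|\Cnt{S}{m}-\mu|\ge \sqrt{\zeta}\,] \le \frac{1}{\eta}$, which is exactly the statement that $\Cnt{S}{m}$ is strongly-$(\zeta,\eta)$-concentrated, i.e.\ strongly-$\big((\beta\mu)^2,\ \tfrac{\beta^2\mu}{\rho}\big)$-concentrated, as desired.

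I do not expect any genuine obstacle here: the content is entirely in Proposition~\ref{prop:probBounds}, and this proposition merely renames its conclusion. The only step requiring even minimal care is verifying the parameter substitution $\sqrt{\zeta} = \beta\mu$ and $1/\eta = \rho/(\beta^2\mu)$, which is legitimate precisely because $\beta > 0$ and $\mu = \expect[\Cnt{S}{m}] > 0$ (the latter holding since $|S| \ge 1$ forces $\mu = |S|/2^m > 0$). Hence the proof is a one-line invocation of the earlier proposition followed by this bookkeeping.
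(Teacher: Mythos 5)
Your proposal is correct and is essentially identical to the paper's own proof: both simply invoke item~1 of Proposition~\ref{prop:probBounds} and substitute $\zeta = (\beta\,\expect[\Cnt{S}{m}])^2$ and $\eta = \beta^2\expect[\Cnt{S}{m}]/\rho$ into Definition~\ref{def:var-k-delta}. Your version just spells out the bookkeeping (that $\sqrt{\zeta}=\beta\mu$ and $1/\eta = \rho/(\beta^2\mu)$) a bit more explicitly than the paper does.
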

\begin{proof}
	The proof follows by replacing $(\beta \expect[\Cnt{S}{m}])^2$ by $\zeta$ and $ \frac{\beta^2 \expect[\Cnt{S}{m}]}{\rho}$ by $\eta$ in  Proposition~\ref{prop:probBounds} to obtain that $\Cnt{S}{m}$ is strongly-$\left( (\beta \expect[\Cnt{S}{m}])^2, \frac{\beta^2 \expect[\Cnt{S}{m}]}{\rho}\right)$  concentrated. 
\end{proof}

\subsection{Relationship of Concentrated hashing with other hash families}\label{sec:otherfamilies}
In this section,  we relate other useful notions of hashing to $(\rho,k)$-concentrated hashing. 

\begin{definition}
  A family of hash functions $\mathcal{H}(n,m)$ is
  \begin{itemize}
    \item \emph{uniform} if $\forall x \in \{0,1\}^n$, $\alpha \in \{0,1\}^m, h \xleftarrow{R} \mathcal{H}$, we have %
	$\prob[h(x) = \alpha] = \frac{1}{2^m}$.
    \item \emph{$\varepsilon$- almost universal ($\varepsilon$-AU)} if $\forall x,y \in \{0,1\}^n$ and $\alpha \in \{0,1\}^m$, we have
	\begin{align}
	\prob[h(x) = h(y)] \leq \varepsilon
	\end{align}
        \end{itemize}
\end{definition}

Further, it is known that uniform and $\varepsilon-$AU hash functions allow us to obtain the following concentration bounds.

\begin{proposition}\label{lm:almost-universal-bound}
	Let $\mathcal{H}(n,m)$ be a uniform and $\varepsilon$-almost universal $(\varepsilon$-AU) hash family and let $h \xleftarrow{R} \mathcal{H}(n,m)$, then $\forall S \subseteq \{0,1\}^n$, $|S|\geq 1$, we have 
	\begin{align}
	\expect[|\Cell{S,h,\alpha}|] = \frac{|S|}{2^m}\\
	\sigma^2[|\Cell{S,h,\alpha}|] \leq  \expect[|\Cell{S,h,\alpha}|] +  \frac{(\varepsilon -1)|S|| (|S|-1)}{2^m}
	\end{align}
\end{proposition}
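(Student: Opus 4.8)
The plan is to mirror the indicator-variable argument already used for the strongly $2$-universal case earlier in this appendix, now feeding in the weaker hypotheses of \emph{uniformity} and \emph{$\varepsilon$-almost universality} in place of strong $2$-universality. First I would introduce, for each $y \in S$, the indicator $\gamma_{y,\alpha}$ which is $1$ exactly when $h(y) = \alpha$, so that $|\FullCell{S,h,\alpha}| = \sum_{y \in S}\gamma_{y,\alpha}$. Uniformity gives $\expect[\gamma_{y,\alpha}] = \prob[h(y) = \alpha] = 2^{-m}$, and linearity of expectation immediately yields the first claim $\expect[|\FullCell{S,h,\alpha}|] = |S|/2^m$. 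This part is routine and identical to the $2$-universal computation.

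For the variance I would write $\sigma^2[|\FullCell{S,h,\alpha}|] = \sum_{y,z\in S}\mathrm{Cov}(\gamma_{y,\alpha},\gamma_{z,\alpha})$ and split into the diagonal terms ($y=z$) and the off-diagonal terms ($y\neq z$). Each diagonal term is $\mathrm{Var}(\gamma_{y,\alpha}) = 2^{-m}(1-2^{-m}) \leq 2^{-m}$, so the diagonal contributes at most $|S|/2^m = \expect[|\FullCell{S,h,\alpha}|]$, which becomes the first summand of the bound. Each off-diagonal term equals $\prob[h(y)=\alpha,\, h(z)=\alpha] - 2^{-2m}$, so the entire proof reduces to controlling the joint ``collision-at-$\alpha$'' probability $\prob[h(y)=\alpha,\, h(z)=\alpha]$ summed over the $|S|(|S|-1)$ ordered pairs with $y \neq z$.

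The hard part will be passing from the $\varepsilon$-AU hypothesis, which only bounds the \emph{total} collision probability $\prob[h(y)=h(z)] \leq \varepsilon$ (summed over all $2^m$ possible common targets), to a bound on the \emph{single}-target joint probability appearing above. Here I would lean on uniformity: observing that the event $\{h(y)=\alpha,\, h(z)=\alpha\}$ is precisely $\{h(y)=\alpha\}\cap\{h(y)=h(z)\}$ and that uniformity distributes the collision mass evenly across the $2^m$ admissible common values, I would argue $\prob[h(y)=\alpha,\, h(z)=\alpha] \leq \varepsilon/2^m$. I expect this redistribution step to be the one genuinely needing the uniformity assumption; the $\varepsilon$-AU bound alone gives only the far weaker $\prob[h(y)=\alpha,\,h(z)=\alpha]\leq \varepsilon$, which would not suffice.

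Substituting this per-pair estimate into each off-diagonal term and summing gives a bound of the shape $\expect[|\FullCell{S,h,\alpha}|] + |S|(|S|-1)\bigl(\varepsilon/2^m - 2^{-2m}\bigr)$; a final algebraic simplification, keeping careful track of the lower-order $2^{-2m}$ contributions, would then be used to reach the stated closed form. As a sanity check I would verify the specialization $\varepsilon = 2^{-m}$, for which the collision probability of a strongly $2$-universal family coincides with this value: there the correction term becomes non-positive, qualitatively recovering the familiar $\sigma^2[|\FullCell{S,h,\alpha}|] \leq \expect[|\FullCell{S,h,\alpha}|]$ guarantee and confirming that the bound degrades gracefully as $\varepsilon$ approaches the ideal $2^{-m}$.
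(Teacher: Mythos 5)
Your setup---indicator variables $\gamma_{y,\alpha}$, uniformity for the expectation, and the split of the variance into diagonal and off-diagonal covariance terms---matches the paper's proof, and the expectation claim is handled identically. The genuine gap is in how you control the off-diagonal term $\prob[h(y)=\alpha,\,h(z)=\alpha]$. You propose to bound it by $\varepsilon/2^m$ on the grounds that uniformity ``distributes the collision mass evenly across the $2^m$ admissible common values.'' Uniformity only constrains the one-point marginals $\prob[h(x)=\alpha]=2^{-m}$; it says nothing about how the diagonal of the joint distribution of $(h(y),h(z))$ is spread over targets. For $m\geq 2$ there are joint distributions with uniform marginals whose entire collision mass sits on a single target $\alpha_0$, so $\prob[h(y)=\alpha_0,\,h(z)=\alpha_0]$ can be as large as the full collision probability $\varepsilon$ rather than $\varepsilon/2^m$. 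The paper does not attempt this refinement: it uses only the containment $\{h(y)=\alpha\}\cap\{h(z)=\alpha\}\subseteq\{h(y)=h(z)\}$ to get $\expect[\gamma_{y,\alpha}\cdot\gamma_{z,\alpha}]\leq\varepsilon$ and substitutes that into the variance expansion.

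Second, even granting your redistribution step, the ``final algebraic simplification'' you defer to does not close. Your intermediate bound is $\expect[|\FullCell{S,h,\alpha}|]+|S|(|S|-1)\left(\varepsilon/2^m-2^{-2m}\right)=\expect[|\FullCell{S,h,\alpha}|]+\frac{|S|(|S|-1)}{2^m}\left(\varepsilon-2^{-m}\right)$, and since $2^{-m}\leq 1$ this is at least as large as the target $\expect[|\FullCell{S,h,\alpha}|]+\frac{|S|(|S|-1)}{2^m}(\varepsilon-1)$; no bookkeeping of lower-order terms gets you from the former to the latter. (In fairness, the paper's own substitution $\expect[\gamma_{y,\alpha}\gamma_{z,\alpha}]\leq\varepsilon$ yields $\expect[|\FullCell{S,h,\alpha}|]+|S|(|S|-1)\left(\varepsilon-2^{-2m}\right)$, which also does not literally match the displayed inequality; the constant in the statement appears to presume a normalization of $\varepsilon$ that is not the one in the paper's definition. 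But your argument adds an unjustified step on top of that.) Your own sanity check flags the problem: at $\varepsilon=2^{-m}$ your intermediate expression gives correction exactly $0$, whereas the stated bound gives the strictly negative $\frac{(2^{-m}-1)|S|(|S|-1)}{2^m}$, i.e.\ a claim strictly stronger than the $2$-universal guarantee $\sigma^2\leq\expect[|\FullCell{S,h,\alpha}|]$, which your derivation cannot produce.
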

\begin{proof}
	Similar to the above proof, we work with indicator variables $\gamma_{y, \alpha}$ such that $\gamma_{y, \alpha} = 1 $ if $h(y) = \alpha$ and 0 otherwise. Since  $\mathcal{H}(n,m)$ be a uniform, we have $\expect[\gamma_{y, \alpha} ] = \prob[h(y) = \alpha] = \frac{1}{2^m}$. Furthermore,   $\mathcal{H}(n,m)$ is also $(\varepsilon$-AU) , we have $\expect[\gamma_{y, \alpha} \cdot \gamma_{z, \alpha} ] \leq \varepsilon$. Now, substituting the $\expect[\gamma_{y, \alpha} ]$ and $\expect[\gamma_{y, \alpha} \cdot \gamma_{z, \alpha} ]$, we derive the bounds for $\expect[|\FullCell{S,h,\alpha}|]$ and $\sigma^2[|\FullCell{S,h,\alpha}|]$
\end{proof}

Several classical results such as Valiant-Vazirani lemma~\cite{VV85} are typically concerned with upper bounding $\mathcal{G}(|\Cell{S,h,\alpha}|)$ defined as: $\mathcal{G}(|\Cell{S,h,\alpha}|)= \sigma^2[|\Cell{S,h,\alpha}|] - \expect[|\Cell{S,h,\alpha}|] + (\expect[|\Cell{S,h,\alpha}|])^2$. This can indeed be achieved by upper bounding variance using Proposition~\ref{lm:almost-universal-bound}.

It turns out that we can get similar properties with concentrated hash families. Formally,

\begin{proposition}
	If  $\mathcal{H}(n,n)$ is prefix-$(\rho,\qs, k)$-concentrated hash family, then for each $\qs \leq m \leq n$,  $\forall S \subseteq \{0,1\}^n$ where $|S| \leq 2^m\cdot k$, $h \xleftarrow{R} \mathcal{H}$, $\alpha \in \{0,1\}^n$, we have 
	\begin{align}
	\mathcal{G}(\Cnt{S}{m}) \leq  (\rho-1) 	\expect[\Cnt{S}{m}] + (	\expect[\Cnt{S}{m}])^2
	\end{align}
	
\end{proposition}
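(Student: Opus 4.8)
The plan is to simply unfold the definition of $\mathcal{G}$ and apply the dispersion-index bound guaranteed by the concentrated hashing hypothesis. Recall that $\mathcal{G}(\Cnt{S}{m}) = \sigma^2[\Cnt{S}{m}] - \expect[\Cnt{S}{m}] + (\expect[\Cnt{S}{m}])^2$, so the entire content of the statement lies in controlling the variance term $\sigma^2[\Cnt{S}{m}]$.

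First I would invoke the hypothesis that $\mathcal{H}(n,n)$ is prefix-$(\rho,\qs,k)$-concentrated. Since we are given $\qs \leq m \leq n$ and $|S| \leq k \cdot 2^m$, the defining inequality of concentrated hashing applies directly, yielding the dispersion-index bound $\frac{\sigma^2[\Cnt{S}{m}]}{\expect[\Cnt{S}{m}]} \leq \rho$, or equivalently $\sigma^2[\Cnt{S}{m}] \leq \rho\cdot \expect[\Cnt{S}{m}]$. (The expectation equality $\expect[\Cnt{S}{m}] = |S|/2^m$ is also part of the same definition, though we will not even need its explicit form here.)

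Substituting this single inequality into the definition of $\mathcal{G}$ then finishes the argument in one line:
\begin{align*}
\mathcal{G}(\Cnt{S}{m}) &= \sigma^2[\Cnt{S}{m}] - \expect[\Cnt{S}{m}] + (\expect[\Cnt{S}{m}])^2 \\
&\leq \rho\cdot \expect[\Cnt{S}{m}] - \expect[\Cnt{S}{m}] + (\expect[\Cnt{S}{m}])^2 \\
&= (\rho-1)\expect[\Cnt{S}{m}] + (\expect[\Cnt{S}{m}])^2.
\end{align*}
There is essentially no obstacle here: the statement is a direct algebraic consequence of the concentrated hashing definition, designed precisely to mirror the classical bound of Proposition~\ref{lm:almost-universal-bound} (with $\varepsilon$-almost-universality replaced by the constant $\rho$ on the dispersion index). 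The only thing to be careful about is confirming that the hypotheses $\qs \leq m$ and $|S| \leq k\cdot 2^m$ are exactly the conditions under which the dispersion-index bound is valid, which they are by assumption, so no appeal to Proposition~\ref{prop:conc-mono} or any monotonicity argument is required.
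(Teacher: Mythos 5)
Your proof is correct and matches the paper's own argument exactly: both simply substitute the dispersion-index bound $\sigma^2[\Cnt{S}{m}] \leq \rho \cdot \expect[\Cnt{S}{m}]$, which holds by the definition of a prefix-$(\rho,\qs,k)$-concentrated family under the stated hypotheses, into the expression for $\mathcal{G}(\Cnt{S}{m})$. Nothing further is needed.
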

\begin{proof}
	The proof follows from substituting $\sigma^2[\Cnt{S}{m}] \leq \rho \cdot \expect[\Cnt{S}{m} ]$ in the expression for 	$\mathcal{G}(\Cnt{S}{m}) $
\end{proof}
Just as we replaced 2-universal hash functions with concentrated hash functions for model counting, the above bounds lead us to believe that we can exploit them to replace uniform and $\epsilon$-AU functions by concentrated hash functions in other applications domains such as databases, cryptography and the like. We leave further exploration of this exciting idea for future work.

\section{Proofs from Section~\ref{sec:concentrated}}%
\label{sec:app-conc}
\downleftset*
\begin{proof}
	Similar to~\cite{R18}, the proof strategy is to employ well-known operators whose fixed points reach down-sets and left-compressed sets and prove monotonicity of $\sum_{w=0}^n c_{S}(w)t(w)$ with application of these operators. In what follows, we say that two vectors $x, y\in \{0,1\}^n$ are $i^{th}$-neighbors, denoted $(x,y)\in\nbr_i$, if they differ in coordinate $i$ and are the same elsewhere. 
	
	We first begin with down-set and define, for every $i \in [n]$, an operator $D_i$ on sets $S \subseteq \{0,1\}^n$. The set $D_i (S)$ is obtained from $S$ as follows: Every $z \in S$ is mapped to $\hat{z}$ where 
	\begin{enumerate}
		\item $\hat{z}$ is $i$-th neighbor of $z$ if both $z_i = 1$ and $i$-th neighbor of $z$ is not in $S$. 
		\item $\hat{z} = z$ if $i$-th neighbor of $z$ is in $S$ or $z_i = 0$
	\end{enumerate}
	For example, let $S=\{100,011,101\}$. Then we have $D_3(S)=\{100,$ $010,101\}$ and $D_2(D_3(S))=\{100,000,101\}$. Finally, we get \\$D_1(D_2(D_3(S)))=\{100,000,001\}$, which is a down-set. In fact, it is well-known that for any set $S$, we always have $D(S) := D_1 (D_{2}(\cdots D_n (S)))$ is a down-set. Further, applying the down-operator cannot decrease the expression of interest. An example illustrating this is presented in~\cite{arxiv-report}.
	Formally we have,
	\begin{claim}\label{clm:downset} $\forall i\in [n]$, $\sum_{w=0}^n c_{D_i(S)}(w)t(w) \geq \sum_{w=0}^n c_{S}(w)t(w)$.
	\end{claim}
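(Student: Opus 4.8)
The plan is to prove Claim~\ref{clm:downset} by a local, coordinate-$i$ analysis. First I would identify $\{0,1\}^n$ with $\{0,1\}^{[n]\setminus\{i\}}\times\{0,1\}$, writing each point as $(a,\beta)$ where $a=u|_{[n]\setminus\{i\}}$ and $\beta=u_i$, and introduce the layer indicators $\chi_0(a)=\mathbf{1}[(a,0)\in S]$ and $\chi_1(a)=\mathbf{1}[(a,1)\in S]$. In this language the operator $D_i$ is exactly down-compression in coordinate $i$: the new layers are $\chi_0'=\chi_0\vee\chi_1$ and $\chi_1'=\chi_0\wedge\chi_1$, so for every $a$ the pair $(\chi_0'(a),\chi_1'(a))$ is just the decreasing rearrangement of $(\chi_0(a),\chi_1(a))$. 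In particular both the column sum $s(a):=\chi_0(a)+\chi_1(a)$ and the product $\chi_0(a)\chi_1(a)$ are preserved, so $|D_i(S)|=|S|$.

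Next I would rewrite the quantity of interest as $\Sigma(S):=\sum_{w}c_S(w)t(w)=\sum_{(x,y)\in S\times S}t(d(x,y))$ and split this sum according to the reduced strings of $x$ and $y$. For $x=(a,\beta)$ and $y=(b,\gamma)$ one has $d(x,y)=d(a,b)+\mathbf{1}[\beta\neq\gamma]$, so, writing $d'=d(a,b)$, each term equals $t(d')$ when $\beta=\gamma$ and $t(d'+1)$ when $\beta\neq\gamma$. Grouping by the unordered pair $\{a,b\}$ produces two kinds of contributions. The within-block terms ($a=b$) contribute $(\chi_0(a)+\chi_1(a))t(0)+2\chi_0(a)\chi_1(a)t(1)$, which depends only on $s(a)$ and $\chi_0(a)\chi_1(a)$ and is therefore invariant under $D_i$. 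The cross-block terms ($a\neq b$) contribute $2B_{a,b}$, where
\[
B_{a,b}=\big(\chi_0(a)\chi_0(b)+\chi_1(a)\chi_1(b)\big)t(d')+\big(\chi_0(a)\chi_1(b)+\chi_1(a)\chi_0(b)\big)t(d'+1).
\]

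It then remains to show $B_{a,b}(\chi')\ge B_{a,b}(\chi)$ for every pair $\{a,b\}$, which I would establish by case analysis on $(s(a),s(b))$. If either column sum is $0$ or $2$, the corresponding layer is unchanged by $D_i$, and a short check shows $B_{a,b}$ is unaltered; the same holds when one column sum is $1$ and the other is $0$ (the block is empty on one side). The only configuration that actually moves mass is $s(a)=s(b)=1$: if the two columns are aligned ($(1,0),(1,0)$ or $(0,1),(0,1)$) then $B_{a,b}=t(d')$ and nothing changes, whereas if they are opposite ($(1,0),(0,1)$ or $(0,1),(1,0)$) then $B_{a,b}=t(d'+1)$ before compression and $B_{a,b}=t(d')$ afterwards. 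This is exactly where monotonicity of $t$ enters: since $t$ is non-increasing, $t(d')\ge t(d'+1)$, so $B_{a,b}$ does not decrease. I expect this case to be the crux—confirming that $D_i$ can only convert $t(d'+1)$-weighted cross pairs into $t(d')$-weighted ones and never the reverse, uniformly over all blocks.

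Summing the invariant within-block terms and the non-decreasing cross-block terms yields $\Sigma(D_i(S))\ge\Sigma(S)$, which is precisely Claim~\ref{clm:downset}. Iterating $D_1\circ\cdots\circ D_n$, with each step not decreasing $\Sigma$, then drives $S$ to a down-set and establishes the down-set half of Lemma~\ref{sec:down-left-set}; the left-compression half follows from an entirely analogous exchange argument applied to the shifting operators.
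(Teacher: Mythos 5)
Your proof is correct and follows essentially the same route as the paper: both decompose $\sum_w c_S(w)t(w)$ into per-block contributions indexed by the projections onto the coordinates other than $i$, and show each block's contribution is non-decreasing under $D_i$ by a case analysis whose only non-trivial case is the "opposite" configuration, where monotonicity of $t$ converts a $t(d'+1)$ term into a $t(d')$ term. Your bookkeeping via column sums $s(a)$ and the invariance of $\chi_0(a)\chi_1(a)$ is a slightly tidier packaging of the paper's membership case analysis, but the underlying exchange argument is identical.
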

	\begin{proof} Let us fix $i\in [n]$ and for any $x\in \{0,1\}^{n-1}$, let $x^a$ for $a\in \{0,1\}$ denote the $n$-dimensional vector obtained by inserting $a$ at $i^{th}$ position in $x$. Also, $\mathbb{1}_S(x^a)$ denotes the indicator function, which is $1$ if $x^a\in S$ and $0$ otherwise. Then, %
		\begin{align*}
		\sum_{w=0}^n c_{S}(w)t(w)
		&= \sum_{u,v \in \{0,1\}^n } \mathbb{1}_S(u)\mathbb{1}_S(v) t(d(u, v))\\ &= \sum_{x, y \in \{0,1\}^{n-1}} J_{S}(x, y) \\
		\text{where } J_{S}(x, y)&= \sum_{a,b \in \{0,1\}} \mathbb{1}_S(x^a)\mathbb{1}_S(y^b)t(d(x^a, y^b)) 
		\end{align*}
		Our goal is to compare $\sum_{w=0}^n c_{S}(w)t(w)$ and $\sum_{w=0}^n c_{D_i(S)}(w)t(w)$ by comparing $J_{S}(x, y)$ with $J_{D_i(S)}(x, y)$. Towards this, consider $T=\{x^0,x^1,y^0,y^1\}$. If $S \cap \{x^1, y^1\} = \emptyset$, then $S\cap T = D_i(S)\cap T$, and  $J_{S}(x, y) = J_{D_i(S)}(x, y)$. Therefore, the remaining cases are when there exist $a, b \in \{0,1\}$ such that $\mathbb{1}_S(x^a)\mathbb{1}_S(y^b) = 1$ and $S \cap \{x^1, y^1\} \neq \emptyset$. We then have the following subcases:
		\begin{enumerate}
			\item $x^0 \in S, y ^0 \in S$. In this case $\hat{x^a} = x^a$ and $\hat{y^a} = y^a$ for $a\in\{0,1\}$, which implies $J_{S}(x, y) = J_{D_i(S)}(x, y) $.
			\item $x^0 \in S, y ^0 \notin S, x^1 \in S, y^1 \in S$. Now $\hat{x^1} = x^1,\hat{x^0}=x^0$ and $\hat{y^1}  = y^0$. Since $d(x^0, y^1) = d(x^1, y^0)$, we again have  $J_{S}(x, y) = J_{D_i(S)}(x, y)$ (intuitively, the count $d(x^0,y^1)$ lost because of removing $y^1$ from $S$ in $D_i(S)$ is exactly compensated by $d(x^1,y^0)$ due to adding $y^0$ in $D_i(S)$.)
			\item $x^0 \in S, y ^0 \notin S, x^1 \notin S, y^1 \in S$. Now $\hat{y^1}  = y^0$ and we have $d(x^0, y^1) > d(x^0, y^0)$. Therefore,  $J_{S}(x, y) \leq  J_{D_i(S)}(x, y) $, since $t(w)$ is monotonically non-increasing.
			\item $x^0 \notin S, y ^0 \in S$. The two possibilities arising from this case are symmetric to the above two cases. 
			\item $x^0 \notin S, y ^0 \notin S$. In this case we must have $x^1 \in S$ and $y^1 \in S$ since we know that there exists $a, b \in \{0,1\}$, $\mathbb{1}_S(x^a)\mathbb{1}_S(y^b) = 1$. Thus, we have  $\hat{x^1} = x^0$ and $\hat{y^1} = y^0$. Since $d(x^1, y^1) = d (\hat{x^1}, \hat{y^1})$, we have $J_{S}(x, y) = J_{D_i(S)}(x, y)$.
		\end{enumerate}
		Therefore,  $J_{S}(x, y) \leq J_{D_i(S)}(x, y)$. As this is true for all $x, y\in\{0,1\}^{n-1}$, we conclude that $\sum_{w=0}^n c_{D_i(S)}(w)t(w) \geq \sum_{w=0}^n c_{S}(w)t(w)$ holds for all $i$. 
	\end{proof}
	
	Now moving to the left-compressed set, and we use the operator $L_{i,j}$ on sets $S \subseteq \{0,1\}^n$ for coordinates $i < j \in [n]$. For $z \in \{0,1\}^n$, let $\mathsf{swap}_{i,j}(z)$ represents the vector that is same as $z$ except with the coordinates $i$ and $j$ swapped. The set $L_{i,j}(S)$ is obtained from $S$ as follows: Every $z \in S$ is mapped to $\tilde{z}$ where
	\begin{enumerate}
		\item  $\tilde{z} = \mathsf{swap}_{i,j}(z)$, if $z_i = 0$, $z_j = 1$ and $\mathsf{swap}_{i,j}(z) \notin S$ 
		\item $\tilde{z} = z$, otherwise. 
	\end{enumerate}
	As an example, if we again considering $S=\{100,011,101\}$, then we have $L_{1,2}(S)=S$, $L_{2,3}(S)=\{100,011,110\}$ and $L_{1,2}(L_{2,3}(S))=\{100,101,110\}$ which is a left-compressed set.
	
	We will be interested in the set 
	\begin{align}
	L(S) := L_{1,2}(L_{1,3} (\cdots L_{n-1,n}(S)))
	\end{align}
	and it is easy to see that it is left-compressed. 

	We prove two claims regarding application of $L_{i,j}$ for any $i<j\in [n]$. We fix $i<j\in[n]$ for what follows. For $x\in \{0,1\}^{n-2}$, we let  $x^{ab}$ denote the word $w \in \{0,1\}^n$ such that (i) the $i^{th}$ letter of $w$, $w_i = a$, (ii) the $j^{th}$ letter $w_j = b$ and (iii) removing these two letters in $w$ gives $x$.
	The first property we show is that applying $L_{i,j}$ retains the property of being a down-set. For instance, for the down-set $D(S)=\{100,000,001\}$, $L(D(S))=L_{2,3}(D(S))=\{100,000,010\}$ is also a downset. Formally,
	\begin{restatable}{claim}{downset}\label{claim:downset}
		For down-set $S$, $L_{i,j}(S)$ is also a down-set.
	\end{restatable}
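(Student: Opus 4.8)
The plan is to show that the compression operator $L_{i,j}$ preserves the down-set property by first obtaining an explicit, column-by-column description of membership in $L_{i,j}(S)$, and then verifying the down-set condition through a short case analysis. Throughout I fix $i<j$ and use the notation $x^{ab}$ from the statement, so that the four words $x^{00},x^{01},x^{10},x^{11}$ are exactly those agreeing with a fixed $x\in\{0,1\}^{n-2}$ outside coordinates $i,j$.

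First I would reduce the down-set property to a local condition: a set $T$ is a down-set if and only if it is closed under flipping a single $1$-coordinate to $0$. Necessity is immediate, and sufficiency holds because any $v\subseteq u$ is reachable from $u$ by a chain of single $1\to 0$ flips, each intermediate word still lying below $u$. Thus it suffices to take any $u\in L_{i,j}(S)$ and any coordinate $k$ with $u_k=1$, and show that the word $v$ obtained by setting $u_k=0$ still lies in $L_{i,j}(S)$.

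Second, I would record how $L_{i,j}$ acts on a single column $\{x^{00},x^{01},x^{10},x^{11}\}$. Since the operator only ever moves a word of the form $x^{01}$ to $x^{10}$, and only when $x^{01}\in S$ while $x^{10}\notin S$, one obtains the characterization
\begin{align*}
x^{00}\in L_{i,j}(S) &\iff x^{00}\in S, \qquad x^{11}\in L_{i,j}(S) \iff x^{11}\in S,\\
x^{10}\in L_{i,j}(S) &\iff x^{10}\in S \lor x^{01}\in S, \qquad x^{01}\in L_{i,j}(S) \iff x^{10}\in S \land x^{01}\in S.
\end{align*}
As a sanity check, this preserves the number of members in each column, so $|L_{i,j}(S)|=|S|$. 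This characterization is the crux of the argument: it translates membership in the compressed set entirely into statements about $S$, where the hypothesis that $S$ is a down-set can be applied directly.

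Finally, I would split into the cases $k\notin\{i,j\}$, $k=i$, and $k=j$, and within each into the possible values of $(u_i,u_j)$. When $k\notin\{i,j\}$ the pair $(u_i,u_j)$ is unchanged and the rest-coordinate $x$ drops to some $x'\subseteq x$; each of the four equivalences above is then inherited from $S$ being a down-set (e.g.\ if $x^{10}\in S$ then $x'^{10}\in S$, giving $x'^{10}\in L_{i,j}(S)$). When $k\in\{i,j\}$ the relevant transitions are downward out of $x^{11}$ or $x^{01}$; here one uses that $x^{00},x^{01},x^{10}$ all lie below $x^{11}$, and $x^{00}$ below both $x^{01}$ and $x^{10}$, so the down-set property of $S$ supplies exactly the conjunctions and disjunctions demanded by the characterization. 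The main obstacle — and the only place the argument is not completely mechanical — is the $x^{01}$ column, whose membership in $L_{i,j}(S)$ requires \emph{both} $x^{01}\in S$ and $x^{10}\in S$; checking the flips into and out of this column is where the column characterization earns its keep, and in every such case the needed memberships follow because the relevant words are subsets of a word already known to lie in $S$.
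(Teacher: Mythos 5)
Your proof is correct, and while it rests on the same elementary ingredients as the paper's (the definition of $L_{i,j}$ plus the down-set property of $S$), it is organized quite differently. The paper argues directly: it takes $x \in L_{i,j}(S)$ and an arbitrary $y \subseteq x$, splits on whether $x$ survived from $S$ unchanged or was newly created as $w^{10}$ from $w^{01} \in S$, and in the latter case parametrizes $y$ as $v^{00}$ or $v^{10}$ and pushes the down-set property of $S$ through $v^{00} \subseteq w^{01}$ and $v^{01} \subseteq w^{01}$. Your two organizational devices --- reducing the down-set condition to closure under single $1\to 0$ flips, and the explicit column characterization of membership in $L_{i,j}(S)$ (with $x^{00}$, $x^{11}$ unchanged, $x^{10}$ governed by a disjunction and $x^{01}$ by a conjunction over $S$) --- do not appear in the paper, and they buy genuine clarity. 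In particular, the paper's first case ($x \in S$ with $\tilde{x}=x$) is dispatched in a single sentence asserting that the down-set property forces $\tilde{y}=y$; unpacked, this requires ruling out that some $y = v^{01} \in S$ gets moved out because $v^{10} \notin S$, which in turn needs the observation that $v^{10}$ sits below either $x$ itself or the swap-partner of $x$ that must already lie in $S$. Your characterization absorbs exactly this subtlety and makes every remaining verification mechanical. The price is a slightly longer but entirely forced case analysis over the choice of flipped coordinate and the four columns; both routes are sound.
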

	\begin{proof}
	Fix any $i<j\in [n]$ and consider $x \in L_{i,j}(S)$ and any $y \subseteq x$. If $x \in S$ and $\tilde{x} = x$, then the down-set property of $S$ implies $L_{i,j}(y) =  y$. Assume $x \notin S$, so that $x = w^{10} \in L_{i,j}(S)$ for some $w\in \{0,1\}^{n-2}$ and $x=w^{01} \in S$. There are two possibilities for $y$ to have $y \subseteq x$: either $y =v^{00}$ or $y =v^{10}$ for some $v$. When $y = v^{00}$, then by the down-set property of $S$, we have that $v^{00} \in S$, and thus $v^{00} \in L_{1,2}(S)$. When $y = v^{10}$, then we know $v^{01} \in S$ since $w^{01} \in S$. Therefore, either $v^{10} \in S$ already, or we have $v^{10} \notin S$, which implies $v^{10} \in L_{1,2}(S)$ as desired.
\end{proof}

	We now show the second property, which states that applying the left-compression operator can only increase the sum of interest.
	\begin{restatable}{claim}{leftcomp}\label{claim:leftcomp}
		$\sum_{w=0}^n c_{L_{1,2}(S)}(w)t(w) \geq \sum_{w=0}^n c_{S}(w)t(w)$.
	\end{restatable}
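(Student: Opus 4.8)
The plan is to mirror the proof of Claim~\ref{clm:downset}, localizing the effect of $L_{1,2}$ to the two coordinates it touches. First I would write
\[
\sum_{w=0}^n c_S(w)t(w) = \sum_{u,v\in\{0,1\}^n}\mathbb{1}_S(u)\mathbb{1}_S(v)\,t(d(u,v)),
\]
and group the ordered pairs by their coordinates outside $\{1,2\}$: for $x\in\{0,1\}^{n-2}$ let $x^{ab}$ denote the vector whose $1$st and $2$nd coordinates are $a,b$ and whose remaining coordinates spell out $x$. Since $L_{1,2}$ only ever maps $x^{01}\mapsto x^{10}$ (and only when $x^{10}\notin S$), the memberships of the \emph{corner} points $x^{00},x^{11}$ are untouched, while on the \emph{anti-diagonal} $\{x^{01},x^{10}\}$ the operator acts, for each column $x$ independently, by sorting: writing $p_x=\mathbb{1}_S(x^{01})$ and $q_x=\mathbb{1}_S(x^{10})$, the images satisfy $p'_x=\min(p_x,q_x)$ and $q'_x=\max(p_x,q_x)$. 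In particular the column sum $p_x+q_x$ is preserved, and $p'_x-q'_x=-|p_x-q_x|$.

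Next I would split the double sum according to whether each of $u,v$ is a corner or an anti-diagonal point. The crucial observation is that for any column $y$ the corner points $y^{00},y^{11}$ are equidistant from $x^{01}$ and $x^{10}$, namely $d(x^{01},y^{00})=d(x^{10},y^{00})=d(x,y)+1$ and likewise for $y^{11}$ (and for $x=y$). Hence in every pair that involves at least one corner, swapping $x^{01}\leftrightarrow x^{10}$ leaves the relevant distance unchanged; combined with the preservation of every $p_x+q_x$ and of all corner memberships, this shows that all (corner,$\cdot$), ($\cdot$,corner) and (corner,corner) contributions are invariant under $L_{1,2}$. Thus only the anti-diagonal/anti-diagonal part, call it $A_S$, can change.

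For $A_S$ I would use the four distances $d(x^{01},y^{01})=d(x^{10},y^{10})=D$ and $d(x^{01},y^{10})=d(x^{10},y^{01})=D+2$, where $D=d(x,y)$, so that the contribution of the ordered pair of columns $(x,y)$ is
\[
\phi_{x,y}=\bigl(p_xp_y+q_xq_y\bigr)t(D)+\bigl(p_xq_y+q_xp_y\bigr)t(D+2).
\]
Rewriting via $p_xp_y+q_xq_y=\tfrac12[(p_x+q_x)(p_y+q_y)+(p_x-q_x)(p_y-q_y)]$ and the analogous identity for $p_xq_y+q_xp_y$ gives
\[
\phi_{x,y}=\tfrac12(p_x+q_x)(p_y+q_y)\bigl[t(D)+t(D+2)\bigr]+\tfrac12(p_x-q_x)(p_y-q_y)\bigl[t(D)-t(D+2)\bigr].
\]
The first summand depends only on the preserved column sums, hence is invariant. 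In the second, $t(D)-t(D+2)\ge 0$ because $t$ is non-increasing, while after compression $(p'_x-q'_x)(p'_y-q'_y)=|p_x-q_x|\,|p_y-q_y|\ge (p_x-q_x)(p_y-q_y)$. Therefore $\phi_{x,y}$ does not decrease for every ordered pair $(x,y)$, so $A_{L_{1,2}(S)}\ge A_S$; adding back the invariant corner terms yields the claim.

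The main obstacle is precisely that the compression acts simultaneously on all columns, so a naive ``one anti-diagonal edge at a time'' argument fails: moving $x^{01}\mapsto x^{10}$ helps the cross term with $y^{10}$ but hurts the one with $y^{01}$. The symmetric sum/difference decomposition above is what resolves this, since it isolates the only part that changes and shows that the product of differences is replaced by the dominating product of absolute values. This step, together with the equidistance fact that neutralizes all corner interactions, is the heart of the argument; the remainder is bookkeeping analogous to Claim~\ref{clm:downset}.
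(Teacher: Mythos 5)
Your proof is correct, but it takes a genuinely different route from the paper's. The paper also localizes to ordered pairs of ``columns'' $x,y\in\{0,1\}^{n-2}$ and compares the local contribution $J'_{S}(x,y)=\sum_{a,b,c,d}\mathbb{1}_S(x^{ab})\mathbb{1}_S(y^{cd})\,t(d(x^{ab},y^{cd}))$ before and after compression, but it then runs an exhaustive case analysis on which of $x^{01},x^{10},y^{01},y^{10}$ lie in $S$ (four main cases with subcases), checking in each that the contribution does not decrease; the only strictly increasing configuration is the one where exactly one of the two columns has its anti-diagonal point moved, so that a cross-distance drops from $D+2$ to $D$. Your symmetric/antisymmetric decomposition
\[
\phi_{x,y}=\tfrac12(p_x+q_x)(p_y+q_y)\bigl[t(D)+t(D+2)\bigr]+\tfrac12(p_x-q_x)(p_y-q_y)\bigl[t(D)-t(D+2)\bigr]
\]
compresses all of those cases into the single inequality $|p_x-q_x|\,|p_y-q_y|\ge(p_x-q_x)(p_y-q_y)$ combined with $t(D)\ge t(D+2)$, while the equidistance of corners from the two anti-diagonal points plus preservation of column sums and corner memberships disposes of every other term. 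The paper's approach buys transparency about exactly which membership patterns gain and requires no algebraic identity; yours buys brevity, a clean separation of the invariant part from the part that can move, and an argument that is manifestly insensitive to the direction in which the compression sorts each column --- a point on which the paper's case analysis is in fact inconsistent with its own definition of $L_{i,j}$ (the definition sends $x^{01}$ to $x^{10}$, while the cases are written as if $x^{10}$ moves to $x^{01}$; this is harmless by symmetry, and your formulation sidesteps it entirely). The one hypothesis worth making explicit in your write-up is that the sorting $(p'_x,q'_x)=(\min(p_x,q_x),\max(p_x,q_x))$ goes the same way in every column, which is what guarantees $(p'_x-q'_x)(p'_y-q'_y)=|p_x-q_x|\,|p_y-q_y|$.
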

\begin{proof} As before, we start by rewriting,
	\begin{align*}
	\sum_{w=0}^n c_{S}(w)t(w) &= \sum_{x, y \in \{0,1\}^{n-2}} J'_{S}(x, y) \\
	\text{where } J'_{S}(x, y)&= \sum_{a,b,c,d \in \{0,1\} } \mathbb{1}_S(x^{ab}) \mathbb{1}_S(y^{cd}) t\left(d(x^{ab}, y^{cd})\right) 
	\end{align*}
	Let $x,y\in\{0,1\}^{n-2}$. If $x^{aa} \in S$ (resp. $y^{aa}\in S$), then $\widetilde{x^{aa}} = x^{aa}$ (resp. $\widetilde{y^{aa}} = y^{aa}$). %
	Therefore, we need to only consider the expressions and cases depending only on whether  $x^{ab}, y ^{cd} \in S$ or not for $a \neq b$ and $c \neq d$. Again when $S \cap \{ x^{10}, x^{01}, y^{10}, y^{01} \} = \emptyset$, we have  $J'_{S}(x, y) =  J'_{L_{i,j}(S)}(x, y)$. Therefore, for rest of the analysis, we handle the case when $S \cap \{ x^{10}, x^{01}, y^{10}, y^{01} \} \neq \emptyset$.  Let $T' = \{ x^{00}, x^{01}, x^{10}, x^{11}, y^{00},$ $y^{01}, y^{10}, y^{11}  \}$. 
	There are 4 cases:
	\begin{enumerate}
		\item $x^{01}, y^{01} \in S$, then $T' \cap S = T' \cap L_{i,j}(S)$. Therefore,  $J'_{S}(x, y) =  J'_{L_{i,j}(S)}(x, y)$.
		\item $x^{01} \notin S,  y^{01} \notin S$. This can be further subdivided in 4 subcases:
		\begin{itemize}
			\item $x^{10} \in S, y^{10} \in S$. Then $\widetilde{x^{10}} = x^{01}$ and $\widetilde{y^{10}} = y^{01}$. Since $d(x^{10}, y^{10}) = d(x^{01}, y^{01})$, we conclude that \\ $J'_{S}(x, y) =  J'_{L_{i,j}(S)}(x, y)$.
			\item $x^{10} \in S, y^{10} \notin S$. Then, $\widetilde{x^{10}} = x^{01}$. Now notice that for $a,c,d \in \{0,1\}$ we have $d(z^{aa}, x^{cd}) = d(z^{aa}, x^{dc})$. Therefore,  $J'_{S}(x, y) =  J'_{L_{i,j}(S)}(x, y)$.
			\item  $x^{10} \notin S, y^{10} \in S$. This is symmetric to the above case. 
			\item $x^{10} \notin S, y^{10} \notin S$. In this case $S \cap \{ x^{10}, x^{01}, y^{10}, y^{01} \} =  \emptyset$, which is handled above. 
		\end{itemize}
		\item $x^{01} \notin S,  y^{01} \in S$. Again this is subdivided into cases.
		\begin{itemize}
			\item $x^{10} \in S, y^{10} \in S$. Then $\widetilde{x^{10}} = x^{01}$ and $\widetilde{y^{10}} = y^{10}$. Since  $d(x^{10}, y^{10}) = d(x^{01}, y^{01})$ and $d(x^{10}, y^{01}) = d(x^{01}, y^{10})$. Therefore, $J'_{S}(x, y) =  J'_{L_{i,j}(S)}(x, y)$.
			\item $x^{10} \in S, y^{10} \notin S$.  Then $\widetilde{x^{10}} = x^{01}$. Since $d(x^{10}, y^{01}) > d(x^{01}, y^{01})$, we have $J'_{S}(x, y) \leq  J'_{L_{i,j}(S)}(x, y)$ since $t(w)$ is monotonically non-increasing.
			\item $x^{10} \notin S, y^{10} \in S$. Then $\widetilde{y^{10}} = y^{10}$.  Then,  $T' \cap S = T' \cap L_{i,j}(S)$. Therefore, $J'_{S}(x, y) =  J'_{L_{i,j}(S)}(x, y)$.
			\item $x^{10} \notin S, y^{10} \notin S$. Again,  $T' \cap S = T' \cap L_{i,j}(S)$. Therefore, $J'_{S}(x, y) =  J'_{L_{i,j}(S)}(x, y)$.
		\end{itemize}
		\item $x^{01} \notin S,  y^{01} \in S$. This case is symmetric to the above case. 
	\end{enumerate}
	Therefore, for all the cases, it holds $J'_{S}(x, y) \leq  J'_{L_{i,j}(S)}(x, y)$ for all $x, y \in \{0,1\}^{n-2}$. \end{proof}

	The proofs of both these claims are given in~\cite{arxiv-report}.
	Now, combining the above three claims, we obtain the proof of Lemma~\ref{sec:down-left-set}, since each application of the down-set and left-compression operators can only increase the sum $\sum_{w=0}^n c_S(w)t(w)$. So when we reach a fixed-point wrt both these operators, we are sure that the resulting left-compressed down-set maximizes this sum.
\end{proof}

\section{Proof from Section~\ref{sec:concentrated}}
\rbound*
\begin{proof}
	Observe that 
	\begin{align*}
	\text{for $m>2$}, ln(2m)<&\log_2(2m)<2\log_2(m)\\
	\text{then, }  w \geq \frac{m H^{-1}(\delta)}{16}, m>2 &\implies w \geq \frac{H^{-1}(\delta) \cdot m\cdot \log_2(2m) }{16\cdot 2 \cdot \log_2 m} \\
	&\implies 2p_{m}w \geq \log_2(2m) \geq ln (2m) \\
	&\implies  m\cdot exp(-2p_{m}w) < 0.5
	\end{align*}
	Now, since $(1+x)\leq e^{x}$ for all $x$, we have $r(w,m) \leq \frac{((1+exp(-2p_{m}w))^m-1}{2^m}$. Then, $m\cdot exp(-2p_{m}w) < 0.5$ and $exp(-2p_{m}w)<1$ implies that $(1+exp(-2p_{m}w))^m \leq 1+2m\cdot exp(-2p_{m}w)$. Thus, we have 
	\begin{align*}
	r(w,m) \leq 2^{-m} 2m\cdot exp(-2p_{m}w)\\
	\implies \log_2 r(w,m) \leq -m + 1+ \log_2 (m) - 2p_{m}w\\
	\text{But, we have } 2p_{m}w\geq 2 \left(\frac{16}{H^{-1}(\delta)}\frac{\log_2 m}{m}\right)\left(\frac{m H^{-1}(\delta)}{16}\right) = 2\log_2 m\\
	\implies \log_2 r(w,m) \leq -m + 1+ \log_2 (m) - 2 \log_2(m)\\
	\implies \log_2 r(w,m) \leq -m + 1- \log_2 (m)
	\end{align*}
\end{proof}

\section{Proofs from Section~\ref{sec:sparsescalmc}}
\label{sec:app-sparsescalmc}
\upperbound*
\begin{proof}
		We now wish to simplify the upper bound of $\prob\left[\bad\right]$
	        obtained in Equation~\ref{eq:bad-ub}, i.e., \begin{align}
  \prob\left[\bad\right] \leq 
	\prob\left[\bigcup_{i \in \{1, \ldots n\}} \left(\overline{T_{i-1}}
	\cap T_i \cap (L_i \cup U_i)\right)\right]\end{align}

                We
	make three observations, labeled O1, O2 and O3 below, which follow
	from the
	definitions of $m^*$, $\hiThresh$ and $\mu_i$, and from the monotonicity of 
	$\Cnt{F}{i}$.
	\begin{enumerate}
		
		\item[O1:] $\forall i \leq m^*-3$, it is guaranteed that $\frac{|\satisfying{F}|}{2^i(1+\epsilon)}\geq \thresh$.
     From this it follows that (a) $T_i \cap U_i = \emptyset$ and (b) $T_i\cap L_i = T_i$. 
	Therefore, 
	\begin{align*}
		\bigcup_{i \in \{1, \ldots m^*-3\}} \left(\overline{T_{i-1}}
	\cap T_i \cap (L_i \cup U_i)\right) & \subseteq \bigcup_{i \in \{1, \ldots m^*-3\}} \left(\overline{T_{i-1}}
	\cap T_i \right)\\ & \subseteq \bigcup_{i \in \{1, \ldots m^*-3\}} T_i   \subseteq  T_{m^*-3}
		\end{align*}
	where the last containment follows from Equation~\ref{eq:thresh_monotone} . Hence, 	$\prob\left[\bigcup_{i \in \{1, \ldots m^*-3\}} \left(\overline{T_{i-1}}	\cap T_i \cap (L_i \cup U_i)\right)\right] \leq \prob[T_{m^*-3}]$.  

		\item[O2:] For $i \in \{m^*-2, m^*-1\}$, it similarly follows that $\hiThresh \leq \frac{|\satisfying{F}|}{2^i}(1+\frac{\varepsilon}{1+\varepsilon})$ , we have
$T_i \cap U_i = \emptyset$. Since, $T_i \cap L_i \subseteq L_i$, we have 	$\prob\left[\bigcup_{i \in \{m^*-2, m^*-1\}} \left(\overline{T_{i-1}}
\cap T_i \cap (L_i \cup U_i)\right)\right] \leq \prob[L_{m^*-2}] + \prob[L_{m^*-1}] $. 

		\item[O3:] For $i\geq m^*$, it can be shown in the same vein that $\thresh \geq \frac{|\satisfying{F}|}{2^i}(1+\frac{\varepsilon}{1+\varepsilon})$, which implies that $\overline{T_i}\subseteq U_i$. Now, from Equation~\ref{eq:thresh_monotone}, it follows that for all $j$, $\overline{T_j}\subseteq \overline{T_{j-1}}$.
                  This implies that $\prob[\bigcup_{i \in \{m^*,\ldots |S|\}}
		\overline{T_{i-1}} \cap T_i \cap (L_i \cup U_i)]\leq
		\prob[\overline{T_{m^*}} \cup (\overline{T_{m^*-1}}\cap T_{m^*}\cap (L_{m^*} \cup U_{m^*}))] \leq \prob[\overline{T_{m^*}}\cup L_{m^*} \cup U_{m^*}]\leq \prob[L_{m^*}\cup U_{m^*}]$ 
	\end{enumerate}
	Using O1, O2 and O3, we get $\prob[\bad] \leq \prob[T_{m^*-3}]
	+ \prob[L_{m^*-2}] + \prob[L_{m^*-1}] + \prob[L_{m^*} \cup U_{m^*}]$.
\end{proof}

\section{An illustrative example for Claim~\ref{clm:downset}} 
\label{sec:running-example}
Let $n = 3$, $S=\{001,010,100,101\}$ and let $i=3$ . Then $\sum_{w=0}^n c_s(w)t(w)$ can be expressed as sum of the following 16 non-zero terms as follows (after removing the terms where $\mathbb{1}_S(u,v) = 0$ )
\begin{align*}
\sum_{w=0}^{n=3} c_s(w)t(w) &=  t(d(001,001)) + t(d(001,010)) +  t(d(001,100))  \\ & + t(d(001,101))  + t(d(010,001)) + t(d(010,010))  \\ & + t(d(010,100)) + t(d(010,101))  + t(d(100,001)) \\ & + t(d(100,010)) +  t(d(100,100)) + t(d(100,101)) \\& + t(d(101,001)) + t(d(101,010)) +  t(d(101,100)) \\ &+ t(d(101,101))						
\end{align*}

Note for $x,y \in \{0,1\}^2$, 
Observe that, for $x = 00$ and $y=10$, we have $J_{S} (x, y) = J_{S}(00, 10)= t(d(001,100)) + t(d(000,101))$ 

Overall, below are all the non-zero terms for $J_{S} (x, y)$ for $x, y \in \{0,1\}^2$.  
\begin{align*}
J_{S}(00, 00) &= t(d(001,001)) \\
J_{S}(00, 01) &= t(d(001,010))  \\
J_{S}(00, 10)  &= t(d(001,100)) + t(d(001,101))  \\[2em]
\
J_{S}(01, 00) &= t(d(010,001)) \\
J_{S}(01, 01) &= t(d(010,010)) \\
J_{S}(01, 10) &= t(d(010,100)) + t(d(010,101)) \\[2em]
\
J_{S}(10, 00) &= t(d(100,001)) + t(d(101,001)) \\
J_{S}(10, 01) &= t(d(100,010)) + t(d(101,010)) \\
J_{S}(10, 10) &= t(d(100,100)) + t(d(101,101)) + t(d(100,101)) + t(d(101,100)) 
\end{align*}
We can now verify that $\sum_{w=0}^{3} c_s(w)t(w) = \sum_{x, y \in \{0,1\}^{2}} J_{S}(x, y)$
\subsection*{Continuing the example: applying $D_3$ operator}
Observe that $D_3(S)=\{000,010,100,101\}$. Then, we have 
\begin{align*}
J_{D_{3}(S)}(00, 00) &= t(d(000,000)) = t(0) =  J_{S}(000, 000) \\
J_{D_{3}(S)}(00, 01) &= t(d(000,010)) = t(1) \geq t(2)=J_{S}(000, 010) \\
J_{D_{3}(S)}(00, 10)  &= t(d(000,100)) + t(d(000,101)) = t(1)+t(2) = J_{S}(000, 100) \\[2em]
J_{D_{3}(S)}(01, 00) &= t(d(010,000)) = t(1) \geq  t(2)=J_{S}(010, 000) \\
J_{D_{3}(S)}(01, 01) &= t(d(010,010))  = t(0) =  J_{S}(010, 010)\\
J_{D_{3}(S)}(01, 10) &= t(d(010,100)) + t(d(010,101)) = t(2) + t(3) =  J_{S}(010, 100) \\[2em]
\
J_{D_{3}(S)}(10, 00) &= t(d(100,001)) + t(d(101,001)) = t(2) + t(1) = J_{S}(100, 000)\\
J_{D_{3}(S)}(10, 01) &= t(d(100,010)) + t(d(101,010)) = t(2) + t(3) = J_{S}(100, 010) \\
J_{D_{3}(S)}(10, 10) &= t(d(100,100)) + t(d(101,101)) + t(d(100,101)) + t(d(101,100)) \nonumber \\&= J_{S}(100, 100) 
\end{align*}
Therefore, summing up the above equations), we have $\sum_{w=0}^{3} c_S(w)t(w) \leq \sum_{w=0}^{3} c_{D_3(S)}(w)t(w)$


\begin{thebibliography}{10}

\bibitem{AHT18}
Dimitris Achlioptas, Zayd Hammoudeh, and Panos Theodoropoulos.
\newblock Fast and flexible probabilistic model counting.
\newblock In {\em International Conference on Theory and Applications of
  Satisfiability Testing}, pages 148--164. Springer, 2018.

\bibitem{AT17}
Dimitris Achlioptas and Panos Theodoropoulos.
\newblock Probabilistic model counting with short xors.
\newblock In {\em International Conference on Theory and Applications of
  Satisfiability Testing}, pages 3--19. Springer, 2017.

\bibitem{arxiv-report}
S.~Akshay and Kuldeep~S. Meel.
\newblock Sparse hashing for scalable approximate model counting: Theory and
  practice.
\newblock In {\em arXiv:???}, 2020.

\bibitem{AroBar09}
S.~Arora and B.~Barak.
\newblock {\em Computational Complexity: A Modern Approach}.
\newblock Cambridge Univ.~Press, 2009.

\bibitem{AD16}
Megasthenis Asteris and Alexandros~G Dimakis.
\newblock Ldpc codes for discrete integration.
\newblock Technical report, Technical report, UT Austin, 2016.

\bibitem{BSSMS19}
Teodora Baluta, Shiqi Shen, Shweta Shinde, Kuldeep~S Meel, and Prateek Saxena.
\newblock Quantitative verification of neural networks and its security
  applications.
\newblock In {\em Proceedings of the 2019 ACM SIGSAC Conference on Computer and
  Communications Security}, pages 1249--1264, 2019.

\bibitem{BR17}
Paul Beame and Cyrus Rashtchian.
\newblock Massively-parallel similarity join, edge-isoperimetry, and distance
  correlations on the hypercube.
\newblock In {\em Proc. of SODA}, pages 289--306. Society for Industrial and
  Applied Mathematics, 2017.

\bibitem{BAM20}
Bhavishya, Durgesh Agarwal, and Kuldeep~S. Meel.
\newblock On the size of xors in approximate model counting.
\newblock In {\em Proceedings of International Conference on Theory and
  Applications of Satisfiability Testing}, 2020.

\bibitem{carter1977universal}
J~Lawrence Carter and Mark~N Wegman.
\newblock Universal classes of hash functions.
\newblock In {\em Proceedings of the ninth annual ACM symposium on Theory of
  computing}, pages 106--112. ACM, 1977.

\bibitem{CFMSV14}
S.~Chakraborty, D.~J. Fremont, K.~S. Meel, S.~A. Seshia, and M.~Y. Vardi.
\newblock Distribution-aware sampling and weighted model counting for {SAT}.
\newblock In {\em Proc. of AAAI}, pages 1722--1730, 2014.

\bibitem{CMMV16}
S.~Chakraborty, K.~S. Meel, R.~Mistry, and M.~Y. Vardi.
\newblock Approximate probabilistic inference via word-level counting.
\newblock In {\em Proc. of AAAI}, 2016.

\bibitem{CMV13b}
S.~Chakraborty, K.~S. Meel, and M.~Y. Vardi.
\newblock A scalable approximate model counter.
\newblock In {\em Proc. of CP}, pages 200--216, 2013.

\bibitem{CMV16}
S.~Chakraborty, K.~S. Meel, and M.~Y. Vardi.
\newblock Algorithmic improvements in approximate counting for probabilistic
  inference: From linear to logarithmic {SAT} calls.
\newblock In {\em Proc. of IJCAI}, 2016.

\bibitem{CM05}
Graham Cormode and Shan Muthukrishnan.
\newblock An improved data stream summary: the count-min sketch and its
  applications.
\newblock {\em Journal of Algorithms}, 55(1):58--75, 2005.

\bibitem{DMPV17}
Leonardo Duenas-Osorio, Kuldeep~S Meel, Roger Paredes, and Moshe~Y Vardi.
\newblock Counting-based reliability estimation for power-transmission grids.
\newblock In {\em Proc. of AAAI}, 2017.

\bibitem{EGSS14a}
S.~Ermon, C.~P. Gomes, A.~Sabharwal, and B.~Selman.
\newblock Low-density parity constraints for hashing-based discrete
  integration.
\newblock In {\em Proc. of ICML}, pages 271--279, 2014.

\bibitem{EGSS13a}
S.~Ermon, C.P. Gomes, A.~Sabharwal, and B.~Selman.
\newblock Embed and project: Discrete sampling with universal hashing.
\newblock In {\em Proc. of NIPS}, pages 2085--2093, 2013.

\bibitem{EGSS13b}
Stefano Ermon, Carla~P. Gomes, Ashish Sabharwal, and Bart Selman.
\newblock Optimization with parity constraints: From binary codes to discrete
  integration.
\newblock In {\em Proc. of UAI}, 2013.

\bibitem{EGSS13c}
Stefano Ermon, Carla~P. Gomes, Ashish Sabharwal, and Bart Selman.
\newblock Taming the curse of dimensionality: Discrete integration by hashing
  and optimization.
\newblock In {\em Proc. of ICML}, pages 334--342, 2013.

\bibitem{FJ14}
M.~Fredrikson and S.~Jha.
\newblock {Satisfiability Modulo Counting: A New Approach for Analyzing Privacy
  Properties}.
\newblock In {\em Proc. of CSL-LICS}, pages 42:1--42:10, 2014.

\bibitem{GHSS07}
C.~P. Gomes, J.~Hoffmann, A.~Sabharwal, and B.~Selman.
\newblock Short xors for model counting: from theory to practice.
\newblock In {\em Proc. of SAT}, pages 100--106, 2007.

\bibitem{GSS06}
C.~P. Gomes, A.~Sabharwal, and B.~Selman.
\newblock Model counting: A new strategy for obtaining good bounds.
\newblock In {\em Proc. of AAAI}, volume~21, pages 54--61, 2006.

\bibitem{IMMV15}
Alexander Ivrii, Sharad Malik, Kuldeep~S. Meel, and Moshe~Y. Vardi.
\newblock On computing minimal independent support and its applications to
  sampling and counting.
\newblock {\em Constraints}, pages 1--18, 2016.

\bibitem{JVV86}
M.R. Jerrum, L.G. Valiant, and V.V. Vazirani.
\newblock Random generation of combinatorial structures from a uniform
  distribution.
\newblock {\em Theoretical Computer Science}, 43(2-3):169--188, 1986.

\bibitem{LM17}
Jean-Marie Lagniez and Pierre Marquis.
\newblock An improved decision-dnnf compiler.
\newblock In {\em Proceedings of the Twenty-Sixth International Joint
  Conference on Artificial Intelligence, IJCAI}, volume 2017, 2017.

\bibitem{M99}
David~JC MacKay.
\newblock Good error-correcting codes based on very sparse matrices.
\newblock {\em IEEE transactions on Information Theory}, 45(2):399--431, 1999.

\bibitem{MVCFSFIM16}
Kuldeep~S Meel, Moshe Vardi, Supratik Chakraborty, Daniel~J Fremont, Sanjit~A
  Seshia, Dror Fried, Alexander Ivrii, and Sharad Malik.
\newblock Constrained sampling and counting: Universal hashing meets sat
  solving.
\newblock In {\em Proc. of Beyond {NP} Workshop}, 2016.

\bibitem{R18}
Cyrus Rashtchian.
\newblock {\em New Algorithmic Tools for Distributed Similarity Search and Edge
  Estimation}.
\newblock PhD thesis, 2018.

\bibitem{RR19}
Cyrus Rashtchian and William Raynaud.
\newblock Edge isoperimetric inequalities for powers of the hypercube.
\newblock {\em arXiv preprint arXiv:1909.10435}, 2019.

\bibitem{Roth1996}
D.~Roth.
\newblock On the hardness of approximate reasoning.
\newblock {\em Artificial Intelligence}, 82(1):273--302, 1996.

\bibitem{SangBeamKautz2005}
T.~Sang, P.~Beame, and H.~Kautz.
\newblock Performing bayesian inference by weighted model counting.
\newblock In {\em Prof. of AAAI}, pages 475--481, 2005.

\bibitem{SGM20}
Mate Soos, Stephan Gocht, and Kuldeep~S. Meel.
\newblock Accelerating approximate techniques for counting and sampling models
  through refined cnf-xor solving.
\newblock In {\em Proceedings of International Conference on Computer-Aided
  Verification (CAV)}, 7 2020.

\bibitem{SM19}
Mate Soos and Kuldeep~S Meel.
\newblock Bird: Engineering an efficient cnf-xor sat solver and its
  applications to approximate model counting.
\newblock In {\em Proceedings of AAAI Conference on Artificial Intelligence
  (AAAI)(1 2019)}, 2019.

\bibitem{Stockmeyer83}
L.~Stockmeyer.
\newblock The complexity of approximate counting.
\newblock In {\em Proc. of STOC}, pages 118--126, 1983.

\bibitem{Toda89}
S.~Toda.
\newblock On the computational power of {PP} and {(+)P}.
\newblock In {\em Proc. of FOCS}, pages 514--519. IEEE, 1989.

\bibitem{trevisan2002lecture}
L.~Trevisan.
\newblock Lecture notes on computational complexity.
\newblock {\em Notes written in Fall}, 2002.
\newblock
  \url{http://citeseerx.ist.psu.edu/viewdoc/download?doi=10.1.1.71.9877&rep=rep1&type=pdf}.

\bibitem{V12}
Salil~P Vadhan et~al.
\newblock Pseudorandomness.
\newblock {\em Foundations and Trends{\textregistered} in Theoretical Computer
  Science}, 7(1--3):1--336, 2012.

\bibitem{VV85}
Leslie~G Valiant and Vijay~V Vazirani.
\newblock Np is as easy as detecting unique solutions.
\newblock In {\em Proceedings of the seventeenth annual ACM symposium on Theory
  of computing}, pages 458--463. ACM, 1985.

\bibitem{Valiant79}
L.G. Valiant.
\newblock The complexity of enumeration and reliability problems.
\newblock {\em SIAM Journal on Computing}, 8(3):410--421, 1979.

\bibitem{ZCSE16}
S.~Zhao, S.~Chaturapruek, A.~Sabharwal, and S.~Ermon.
\newblock Closing the gap between short and long xors for model counting.
\newblock In {\em Proc. of AAAI}, 2016.

\end{thebibliography}
\end{document}